\newcommand{\erdos}{Erd\H{o}s-R\'enyi }
\newcommand{\Jopt}{\left(\widehat B_{opt}^{(n)}, \widehat L_{opt}^{(n)}\right)}
\newenvironment{enumeratei}{\begin{enumerate}[\upshape (i)]}{\end{enumerate}}
\newtheorem{theorem}{Theorem}
\newtheorem{lemma}[theorem]{Lemma}
\newtheorem{corollary}[theorem]{Corollary}
\newtheorem{definition}[theorem]{Definition}
\renewcommand{\leq}{\leqslant} 
\renewcommand{\geq}{\geqslant}
\newcommand{\eps}{\varepsilon}
\newcommand{\set}[1]{\left\{#1\right\}}
\def\qed{ \hfill $\blacksquare$}  
\newcommand{\cB}{\mathcal{B}}
\newcommand{\cD}{\mathcal{D}}
\newcommand{\cI}{\mathcal{I}}
\newcommand{\cR}{\mathcal{R}}
\newcommand{\cZ}{\mathcal{Z}}  
\newcommand{\cH}{H_\ast}
\newcommand{\cQ}{q}
\newcommand{\mvd}{\boldsymbol{d}}
\newcommand{\expv}{\mathbb{E}}
\newcommand{\halfk}{\frac{1}{2^k}}
\newcommand{\halfkMO}{\frac{1}{2^{k-1}}}
\newcommand{\Bopt}{\widehat B_{opt}(n)}
\DeclareMathOperator*{\argmax}{arg\,max}
\newcommand{\SmallNa}{N_{n,k}(A)}
\newcommand{\SmallNap}{N_{n,k+1}(A')}
\newcommand{\Bv}{v}
\begin{document}

\title{Community Extraction in Multilayer Networks with Heterogeneous Community Structure}

\author{\name James D.\ Wilson\footnotemark \email jdwilson4@usfca.edu \\
       \addr Department of Mathematics and Statistics\\
       University of San Francisco\\
       San Francisco, CA 94117-1080
       \AND
       \name John Palowitch$^*$ \email palojj@email.unc.edu \\
	   \name Shankar Bhamidi \email bhamidi@email.unc.edu\\
	   \name Andrew B.\ Nobel \email nobel@email.unc.edu\\
       \addr Department of Statistics and Operations Research\\
	   University of North Carolina at Chapel Hill\\
	   Chapel Hill, NC 27599}\footnotetext{\noindent$^*$JDW is corresponding author. JDW and JP contributed equally to the writing of this paper.}

\editor{Edoardo Airoldi}

\maketitle

\begin{abstract}
	Multilayer networks are a useful way to capture and model multiple, binary or weighted relationships among a fixed group of objects.  While community detection has proven to be a useful exploratory technique for the analysis of single-layer networks, the development of community detection methods for multilayer networks is still in its infancy. 
			We propose and investigate a procedure, called Multilayer Extraction, that identifies densely connected 
			vertex-layer sets in multilayer networks. 
			Multilayer Extraction makes use of a significance based score that quantifies the connectivity of an observed vertex-layer set 
			through comparison with a fixed degree random graph model. 
			Multilayer Extraction directly handles networks with heterogeneous layers where community structure may be different from layer to layer.
			The procedure can capture overlapping communities, as well as background vertex-layer pairs that do not
			belong to any community. 
			We establish consistency of the vertex-layer set optimizer of our proposed multilayer score under the multilayer stochastic block model.
			We investigate the performance of Multilayer Extraction on three applications and a test bed of simulations. 
			Our theoretical and numerical evaluations suggest that Multilayer Extraction is an effective exploratory tool for analyzing complex multilayer networks. Publicly available code is available at \url{https://github.com/jdwilson4/MultilayerExtraction}.
\end{abstract}
\vskip 1pc

\begin{keywords}
  community detection, {clustering}, multiplex networks, score based methods, modularity
\end{keywords}

\newpage

\section{Introduction}\label{sec:intro}

Networks are widely used to represent and analyze the relational structure among interacting units of a complex system. 
In the simplest case, a network model is an unweighted, undirected graph $G = (V,E)$, 
where $V$ is a set of vertices that represent the units, or {\it actors}, 
of the modeled system, and $E$ is an edge set containing all pairs of vertices $\{u,v\}$ such that actors
$u$ and $v$ share a physical or 
functional relationship. Networks have been successfully applied in {a wide array of fields, including} the social sciences to study social relationships among individuals \citep{wasserman1994advances},
biology to study interactions among genes and proteins \citep{bader2003automated}, and
neuroscience to study the structure and function of the brain \citep{sporns2011networks}. 

In many cases, the vertices of a network can be divided into groups (often disjoint) with the property that 
there are many edges between vertices in the same group, but relatively few edges between vertices in different
groups. Vertex groups of this sort are commonly referred to as {\it communities}. 
The unsupervised search for communities in a network is known as {\it community detection}.  Community structure has been used to identify functionally relevant groups in gene and protein interaction systems \citep{Lewis2010, parker2015network}, 
structural brain networks \citep{Bassett2011}, and social networks \citep{Onnela2011, Greene2010}. 
As communities are often associated with important structural characteristics of a complex system,
community detection is a common first step in the understanding and analysis of networks. The search for communities that optimize a given quantitative performance criterion is typically an NP-hard problem, so in most cases one must rely on approximate algorithms to identify community structure. 

The focus of this paper is community detection in {\it multilayer} networks.
Formally, an $(m,n)$-multilayer network is a collection 
$\mathbf{G}(m,n) = ( G_1, \ldots, G_m )$ of $m$ simple graphs $G_{\ell} = ([n], E_\ell)$
having common vertex set $[n] = \{1,\ldots, n\}$, where the edge sets $E_\ell$ may vary from
layer to layer.   
The graph $G_{\ell}$ will be referred to as the
$\ell$th {\em layer} of the network. 
We assume that the vertices of the multilayer network are registered, in the sense that 
a fixed vertex $u \in [n]$ refers to the same actor across all layers.  
Thus the graph $G_{\ell}$ reflects the relationships between 
identified actors $1,\ldots,n$ in circumstance $\ell$.
There are no edges between vertices in different layers, and the layers are
regarded as unordered so that the indices $\ell \in [m]$ do not 
reflect an underlying spatial or temporal order among the layers. 

In general, the actors of a multilayer network may not exhibit the same community structure across all layers. 
For example in social networks, a group of individuals may be well connected via friendships on Facebook; however, this common group of actors will likely, for example, not work at the same company.
In realistic situations such as these, a given vertex community will only be present in a subset 
of the layers, and different communities may be present in different subsets of layers. We refer to such multilayer systems as heterogeneous as each layer may exhibit noticeably different community structure.
Complex and differential relationships between actors will be reflected in the heterogenous behavior
of different layers of a multilayer network.  In spite of this heterogeneity, many existing community detection methods for multilayer networks typically assume that the community structure is the same across all, or a substantial fraction of, the layers \citep{berlingerio2011finding, rocklin2013clustering, barigozzi2011identifying, berlingerio2013abacus, holland1983stochastic, paul2015community}.

We develop and investigate a multilayer community detection method called Multilayer 
Extraction, which efficiently handles multilayer networks with heterogeneous layers. Theoretical and numerical evaluation of our method reveals that Multilayer Extraction is an effective exploratory tool for analyzing complex multilayer networks. Our contributions to the current literature of statistical analysis of multilayer networks are threefold

\begin{enumerate} 
\item We develop a testing-based algorithm for identifying densely connected vertex-layer sets $(B,L)$, where $B \subseteq [n]$ is a set of vertices and $L \subseteq [m]$ is a set of layers such that the vertices in $B$ are densely connected across the layers in $L$. The strength of the connections in $(B,L)$ is measured by a local modularity score derived from a null random network model that is based on the degree sequence of the observed multilayer network. Identified communities can have overlapping vertex or layer sets, and some vertex-layer pairs may not belong to any community. Vertex-layer pairs that are not assigned to any community are interpreted as background as they are not strongly connected to any other.
Overlap and background are common features of real networks that
can have deleterious effects
on partition based methods \citep{lancichinetti2011finding, wilson2013measuring, wilson2014testing}. The Multilayer Extraction procedure directly addresses community heterogeneity in multilayer networks.

\item We assess the consistency of the global optimizer of the aforementioned local modularity score under a generative model for multilayer networks with communities. The generative model studied is a multilayer generalization of the stochastic 2 block model from \cite{snijders1997estimation, wang1987stochastic} for multilayer networks, which we call the Multilayer Stochastic Block Model (MSBM). The MSBM is a generative model that characterizes preferential (or a-preferential) behavior of pre-specified vertex-layer communities in a multilayer network, via specification of inter- and intra-community probabilities. We are able to show that under the MSBM, the number of mis-clustered vertices and layers from the vertex-layer community that maximizes our proposed significance score vanishes to zero, with high probability, as the number of vertices tends to infinity. There has been considerable work in the area of consistency analysis for single-layer networks \cite[e.g.][]{zhao2012consistency}; however, to the best of our knowledge, we are the first to address the joint optimality properties for \emph{both} vertices and layers in multilayer networks. Furthermore, we provide complete and explicit expressions of all error bounds in the proof, as we anticipate future analyses where the number of layers is allowed to grow with the size of the network. Our proof involves a novel inductive argument, which, to our knowledge, has not been employed elsewhere.

\item We apply Multilayer Extraction to three diverse and important multilayer network types, including a multilayer social network, a transportation network, and a collaboration network. We compare and contrast Multilayer Extraction with contemporary methods, and highlight the advantages of our approach over single layer and aggregate alternatives. Our findings reveal important insights about these three complex relational systems.

\end{enumerate}

\subsection{Related Work}\label{sec:relatedwork}
Multilayer network models have been applied to a variety of problems, including 
modeling and analysis of air transportation routes \citep{cardillo2013emergence}, 
studying individuals with multiple sociometric relations 
\citep{fienberg1980analyzing, fienberg1985statistical}, and analyzing relationships between social 
interactions and economic exchange \citep{ferriani2013social}. 
\citet{kivela2014multilayer} and \citet{boccaletti2014structure} provide two recent reviews of multilayer networks. We note that $\mathbf{G}(m,n)$ is also sometimes referred to as a {\it multiplex} network.

While there is a large and growing literature concerning community detection in standard, single-layer,
networks \citep{fortunato2010community, newman2004detecting, porter2009communities},
the development of community detection methods for multilayer networks is still 
relatively new.  One common approach to multilayer community detection is to project the multilayer network in some fashion onto a single-layer network and then identify communities in the single layer network \citep{berlingerio2011finding, rocklin2013clustering}. 
A second common approach to multilayer community detection is to apply a standard detection method to each layer of the observed network separately \citep{barigozzi2011identifying, berlingerio2013abacus}. The first approach fails to account for layer-specific community structure 
and may give an oversimplified or incomplete summary of the community structure of the multilayer network; 
the second approach does not enable one to leverage or identify common structure between layers. In addition to the methods above, there have also been several generalizations of single-layer methods to multilayer networks. For example, \citet{holland1983stochastic} and \citet{paul2015community} introduce multilayer generalizations of the stochastic block model from \citet{wang1987stochastic} and \citet{snijders1997estimation}. Notably, these generative models require the community structure to be the same across layers. 

\citet{peixoto2015inferring} considers a multilayer generalization of the stochastic block model for weighted networks that models hierarchical community structure as well as the degree distribution of an observed network. \citet{paul2016null} describe a class of null models for multilayer community detection based on the configuration and expected degree model. We utilize a similar model in our consideration. \citet{stanley2015clustering} considered the clustering of layers of multilayer networks based on recurring community structure throughout the network. \citet{peixoto2015inferring} and \citet{mucha2010community} generalized the notion of modularity to multilayer networks, and
\citet{de2014identifying} generalized the map equation, which measures the description length of a random walk on a partition of vertices, to multilayer networks. \citet{de2013mathematical} discuss a generalization of the multilayer method in \citet{mucha2010community} using tensor decompositions. Approximate optimization of either multilayer modularity or the map equation results in vertex-layer communities that form a partition of $[n] \times [m]$. By contrast, the communities identified by Multilayer Extraction need not form a partition of $[n] \times [m]$ as some vertices and or layers may be considered background and not significantly associated with any community.

\subsection{Overview of the Paper}

In the next section we describe the null multilayer random graph model and the scoring of vertex-layer sets. 
In Section \ref{sec:consistency} we present and prove theoretical results regarding the asymptotic consistency properties of our proposed score for multilayer networks. Section \ref{sec:algorithm} provides a detailed description of the proposed Multilayer Extraction procedure. 
We apply Multilayer Extraction to three real-world multilayer networks and compare and contrast its performance with existing community detection methods in Section \ref{sec:application}. 
In Section \ref{sec:simulations} we evaluate the performance of Multilayer Extraction on a test bed of simulated multilayer networks. We conclude the main paper with a discussion of future research directions in Section \ref{sec:discussion}. The Appendix is divided into three sections. In Appendix A, we prove supporting lemmas contributing to the results given in Section \ref{sec:consistency}. In Appendix B, we discuss competing methods to Multilayer Extraction. In Appendix C, we give the complete details of our simulation framework.

\section{Scoring a Vertex-Layer Group}\label{sec:model}

Seeking a vertex partition that optimizes, or approximately optimizes, an appropriate score function is a standard approach to single layer community detection.  
Prominent examples of score-based approaches include
modularity maximization \citep{newman2006modularity}, likelihood maximization for a
stochastic block model \citep{wang1987stochastic}, as well as minimization of the 
conductance of a partition \citep{chung1997spectral}.
Rather than scoring a partition of the available network, Multilayer Extraction makes use of a significance based score that is applicable to individual vertex-layer sets.  
Below, we describe the multilayer null model, and then the 
proposed score.

\subsection{The Null Model}\label{sec:null}

Our significance-based score for vertex-layer sets in multilayer networks relies on the comparison of an observed multilayer network with a null multilayer network model. Let $\mathbf{G}(m,n)$ be an observed $(m,n)$-multilayer network.  For each layer $\ell \in [m]$ and 
pair of vertices $u,v \in [n]$, let 
$$x_{\ell}(u,v) = \mathbb{I}( \{u,v\} \in E_{\ell} )$$
indicate the presence or absence of an edge between $u$ and $v$ in layer $\ell$ of $\mathbf{G}(m,n)$.
The {\em degree} of a vertex $u \in [n]$ in layer $\ell$, denoted by $d_{\ell}(u)$, is the number of edges 
incident on $u$ in $G_\ell$.  Formally,
\[
d_{\ell}(u) \ = \ \sum_{v \in [n]} x_{\ell}(u,v).
\] 
The {\em degree sequence} of layer $\ell$ is the vector $\mvd_{\ell} = ( d_{\ell}(1), \ldots, d_{\ell}(n) )$
of degrees in that layer; the degree sequence of $\mathbf{G}(m,n)$ 
is the list $\mvd = ( \mvd_{1}, \ldots, \mvd_{m} )$ containing the degree sequence of each layer in the network.

Let $\mathcal{G}(m,n)$ denote the family of all $(m,n)$-multilayer 
networks.  Given the degree sequence $\mvd$ of
the observed network $\mathbf{G}(m,n)$, we define a multilayer configuration model and an associated probability measure 
$\mathbb{P}_{\mvd}$ on $\mathcal{G}(m,n)$, as follows. In layer $G_1$, each node is given $d_1(u)$ half-stubs. Pairs of these edge stubs are then chosen uniformly at random, to form edges until all half-stubs are exhausted (disallowing self-loops and multiple edges). This process is done for every subsequent layer $G_2, \ldots, G_m$ independently, using the corresponding degree sequence from each layer.

In the multilayer network model described above, each layer is distributed according to the configuration model, first introduced by \cite{bollobas1979probabilistic} and \cite{bender1978asymptotic}. The probability of an edge between nodes $u$ and $v$ in layer $\ell$ depends only on the degree sequence ${\mvd}_\ell$ of the observed graph $G_\ell$. The distribution $\mathbb{P}_{\mvd}$ has
two complementary properties that make it useful for identifying communities in an observed multilayer network: 
(i) it preserves the degree structure of the 
observed network; 
and (ii) subject to this restriction, edges are assigned at random, without regard to
the higher order connectivity structure of the network. Because of these characteristics, the configuration model has long been taken as the appropriate null model against which to judge the quality of a proposed community partition. 

The configuration model is the null model which motivates the modularity score of a partition in a network \citep{newman2004detecting, newman2006modularity}. Consider a single-layer observed network $\mathbf{G}(n)$ with $n$ nodes and degree sequence $\mvd$. For fixed $K>0$, let $c_u \in[K]$ be the community assignment of node $u$. The modularity score of the partition associated with the assignment $c_1, \ldots, c_n$ is defined as
\begin{equation}\label{full-modularity}
M(c_1,\ldots,c_n;\;\mathbf{G}(n)) := \frac{1}{2|E|}\sum_{i\in[K]}\sum_{u < v \in [n]}\left(x(u,v) - \frac{d(u)d(v)}{\sum_{w \in [n]}d(w)}\right)\mathbb{I}(c_u = c_v = i).
\end{equation}
Above, the ratio $\frac{d(u)d(v)}{\sum_{w \in [n]}d(w)}$ is the approximate expected number of edges between $u$ and $v$ under the configuration model. If the partition $c_1, \ldots, c_n$ represents communities with a large observed intra-edge count relative to what is expected under the configuration model, it receives a high modularity score. The identification of the communities that (approximately) maximize the modularity of a partition is among the most common techniques for community detection in networks.

\subsection{Multilayer Extraction Score}\label{subsec:score}

Rather than scoring a partition, the Multilayer Extraction method scores individual vertex-layer sets. We define a multilayer node score that is based on the single-layer modularity score \eqref{full-modularity} and amenable to iterative maximization. We first define a local \emph{set} modularity for a collection of vertices $B \subseteq [n]$ in the layer $\ell \in [m]$:
\begin{equation}\label{set-modularity}
Q_\ell(B) := \frac{1}{n{|B|\choose 2}^{1/2}}\sum_{u,v\in B:u<v}\left(x_\ell(u,v) - \frac{d_\ell(u)d_\ell(v)}{\sum_{w \in [n]} d_{\ell}(w)}\right).
\end{equation}
The scaling term in the equation above is related to the total number of vertices in the network and the total number of possible edges between the vertices in $B$. This score is one version of the various set-modularities considered in \cite{fasino2016modularity}, and is reminiscent of the \emph{local} modularity score introduced in \cite{clauset2005finding}.

Our Multilayer Extraction procedure seeks communities that are \emph{assortative} across layers, in the sense that $Q_\ell(B)$ is large and positive for each $\ell\in L$. In light of this, we define the \emph{multilayer set score} as
\begin{equation}\label{score}
H(B, L) := \frac{1}{|L|}\left(\sum_{\ell\in L}Q_\ell(B)_+\right)^2,
\end{equation}
where $Q_+$ denotes the positive part of $Q$. Generally speaking, the score acts as a yardstick with which one can measure the connection strength of a vertex-layer set. Large values of the score signify densely connected communities.

We note that the multilayer score $H(B,L)$ is reminiscent of a chi-squared test-statistic computed from $|L|$ samples. That is, under appropriate regularity assumptions on $Q_{\ell}(B)$, the score in (\ref{score}) will be approximately chi-squared with one degree of freedom.

\section{Consistency Analysis}\label{sec:consistency}

In this section we give theoretical results establishing the asymptotic consistency of the Multilayer Extraction score introduced in Section \ref{subsec:score}. Explicitly, we show that the global optimizer of the score has bounded misclassification error under a 2-community multilayer stochastic block model. 

We note that our results are particular to the performance of the score, rather than the performance of the algorithm to maximize the score. This approach to consistency analysis is common in the community detection literature, particularly for modularity-based methods, as maximizing modularity is NP-hard \citep{brandes2006maximizing}. For instance, \cite{bickel2009nonparametric} showed that under the standard (single-layer) stochastic block model, the global optimizer of the modularity partition score converges (in classification error) to the partition given by the model.  \cite{zhao2012consistency} extended this result to a generalized class of partition scores, which includes both modularity and the SBM likelihood, and established conditions under which global optimizers of those scores are consistent under the \emph{degree-corrected} stochastic block model (an extension of the standard version, see \citealp*{coja2009finding}). In \cite{zhao2011community}, a global-optimizer consistency result for a \emph{local} set score, one \emph{not} based on the modularity metric, was established in similar fashion. Many other analyses assess the theoretical ``detectability" of the community structure in the stochastic block model (e.g.\ \citealp*{decelle2011asymptotic, mossel2012stochastic, nadakuditi2012graph}).

As with the theorems we present in this section, most results from the aforementioned publications establish, or help to establish, asymptotic ``with high probability" statements, and apply only to the \emph{global} maximizers of the scores under study. In other words, they do not directly regard in-practice approaches to local maximization of the scores. Regardless, the demonstration of global optimizer consistency for a given score provides the measure of comfort that any local optimization method has a sensible target. The following analyses will serve our method in accordance with this intuition.

\subsection{The Multilayer Stochastic Block Model}

We assess the consistency of the Multilayer Extraction score under the multilayer stochastic block model (MSBM) with two vertex communities, defined as a probability distribution $\mathbb{P}_{m,n} = \mathbb{P}_{m,n}(\cdot | \bm{P}, \pi_1, \pi_2)$ on the family of undirected multilayer networks with $m$ layers, $n$ vertices and 2 communities. The distribution is fully characterized by
(i) containment probabilities $\pi_1, \pi_2 > 0$, which satisfy $\pi_1 + \pi_2 = 1$,
and (ii) a sequence of symmetric $2 \times 2$ matrices $\bm{P} = \{P_1, \ldots, P_m\}$ where $P_\ell =  \{ P_\ell(i,j) \}$ with entries $P_{\ell}(i,j) \in (0,1)$.  Under the distribution $\mathbb{P}_{m,n}$, a random multilayer network $\widehat{\bm{G}}(m,n)$ is generated using two simple steps:

\begin{enumerate}
\item A subset of $\lceil{\pi_1 n}\rceil$ vertices are placed in community 1, and remaining vertices are placed in community 2. Each vertex $u$ in community $j$ is assigned a community label $c_u = j$. 
\item An edge is placed between nodes $u, v \in [n]$ in layer $\ell \in [m]$ with probability $P_{\ell}(c_u,c_v)$, independently from pair to pair and across layers, and no self-loops are allowed.
\end{enumerate}

For a fixed $n$ and $m$, the community labels ${\mathbf c}_n = (c_1, \ldots, c_n)$ are chosen once and only once, and the community labels are the same across each layer of $\widehat{\bm{G}}(m,n)$. On the other hand, the inner and intra community connection probabilities (and hence the assortativity) can be different from layer to layer, introducing heterogeneity among the layers. Note that when $m = 1$, the MSBM reduces to the (single-layer) stochastic block model from \citet{wang1987stochastic}.

\subsection{Consistency of the Score} \label{subsec:score-consistency}
We evaluate the consistency of the Multilayer Extraction score under the MSBM described above. Our first result addresses the vertex set maximizer of the score given a fixed layer set $L\subseteq[m]$. Our second result (Theorem \ref{layer-set-consistency} in Section \ref{layer-set-theory}) leverages the former to analyze the global maximizer of the score across layers and vertex sets. Explicitly, consider a multilayer network $\widehat{\bm{G}}(m,n)$ with distribution under the multilayer stochastic block model $\mathbb{P}_{m,n} = \mathbb{P}_{m,n}(\bm{P}, \pi_1, \pi_2)$. For a fixed vertex set $B \subseteq [n]$ and layer set $L \subseteq [m]$, define the random score by
\[
\widehat{H}(B, L):=\frac{1}{|L|}\left(\sum_{\ell\in L}\widehat Q_\ell(B)_+\right)^2,
\]
where $\widehat Q_\ell(B)$ is the set-modularity of $B$ in layer $\ell$ under $\mathbb{P}_{m,n}$. Our main results address the behavior of $\widehat H(B,L)$ under various assuptions on the parameters of the MSBM.

Toward the first  result, for a fixed layer set $L\subseteq[m]$, let $\Bopt$ denote the node set that maximizes $\widehat H(B, L)$ (if more than one set does, any may be chosen arbitrarily). To define the notion of a ``misclassified" node, for any two sets $B_1,B_2\subseteq[n]$ let $d_h(B_1, B_2)$ denote the Hamming distance (rigorously defined as the cardinality of the symmetric difference between $B_1$ and $B_2$). We then define the number of misclassified nodes by a set $B$ by 
\[
\mathtt{Error}(B) := \min\{d_h(B, C_1),\; d_h(B, C_2)\}.
\]
Note that this definition accounts for arbitrary labeling of the two communities. As the nodes and community assignments are registered across layers, neither $d_h$ nor $\mathtt{Error}$ depend on the choice of $L$. Before stating the main theorem, we define a few quantities that will be used throughout its statement and proof:

\begin{definition}\label{parameter-defs}
Let ``$\det$'' denote matrix determinant. For a fixed layer set $L\subseteq[m]$, define
\begin{equation}
\delta_\ell := \det P_\ell\;\;\;\;\;\delta(L) := \min_{\ell\in[L]}\delta_\ell\;\;\;\;\;\pi:=(\pi_1,\pi_2)^t\;\;\;\;\;\kappa_\ell := \pi^TP_\ell\pi\;\;\;\;\;\kappa(L) := \min_{\ell\in[L]}\kappa_\ell
\end{equation}
\end{definition}

We now state the fixed-layer-set consistency result:

\begin{theorem}\label{thm:largegraph}
Fix $m$ and let $\{\widehat{\bm{G}}({m,n})\}_{n > 1}$ be a sequence of multilayer stochastic 2 block models where $\widehat{\bm{G}}({m,n})$ is a random graph with $m$ layers and $n$ nodes generated under $\mathbb{P}_{m,n}(\cdot | \bm{P}, \pi_1, \pi_2)$. Assume $\pi_1\leq\pi_2$, and that $\pi_1$, $\pi_2$, and $\mathbb{P}$ do not change with $n$. Fix a layer set $L \subseteq [m]$. If $\delta(L) > 0$ then there exist constants $A,\eta>0$ depending on $\pi_1$ and $\delta(L)$ such that for all fixed $\eps\in(0,\eta)$,
\begin{equation}\label{eq:largesampleconsistency}
\mathbb{P}_{m,n}\left(\mathtt{Error}\left(\Bopt\right) < An^\eps\log n\right) \geq 1 - \exp\left\{-\frac{\kappa(L)^2\eps}{32}n^\eps(\log n)^{2-\eps} + \log 4|L|\right\}
\end{equation}
for large enough $n$.
\end{theorem}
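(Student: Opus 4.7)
The plan is to show that the observed score $\widehat H(B,L)$ concentrates uniformly around a deterministic ``population'' surrogate $H^{\ast}(B,L)$, and then to exploit a strict separation of $H^{\ast}$ between the true communities and any set with appreciable misclassification. Let $C_1, C_2$ be the true community sets, and for any candidate $B\subseteq[n]$ let $b_j(B):=|B\cap C_j|/n$ denote the community overlaps. The hypothesis $\delta(L)>0$ is precisely what makes $C_1$ (or $C_2$) identifiable from the population score in every $\ell\in L$, so the whole argument hinges on quantifying this separation in terms of $\delta(L)$ and $\kappa(L)$.

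First, I would compute the population counterpart $Q^{\ast}_\ell(B)$ obtained by replacing $x_\ell(u,v)$ with $P_\ell(c_u,c_v)$ and each degree $d_\ell(u)$ with its expectation. A direct calculation shows that $Q^{\ast}_\ell(B)$ is a quadratic form in $(b_1,b_2)$ whose ``cross'' coefficient factors through $\det P_\ell$, so that on the simplex $\{b_1+b_2\leq 1\}$ the map is uniquely maximized (up to label-swap) at $B=C_1$ and $B=C_2$, with gap bounded below by a multiple of $\delta_\ell$ times the squared Euclidean distance of $(b_1,b_2)$ from the two extremal configurations. Summing over $\ell\in L$ and using the lower bound $\kappa(L)$ on the overall edge density in the denominator of the set-modularity gives an inequality of the form
\[
H^{\ast}(C_1,L) \;-\; H^{\ast}(B,L) \;\geq\; c\bigl(\delta(L),\kappa(L)\bigr)\cdot \frac{\mathtt{Error}(B)^{2}}{n}
\]
for any $B$ with $\mathtt{Error}(B)\leq n/2$.

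Next, I would establish a concentration bound $|\widehat Q_\ell(B) - Q^{\ast}_\ell(B)| \leq \zeta_n$ with high probability, using a Bernstein bound for the bilinear-in-edges numerator and a separate Chernoff step for the random total-degree normalizer $\sum_w d_\ell(w)$. A naive union bound over all $B\subseteq[n]$ would cost a factor $2^n$ and destroy the rate. Instead, fixing $C_1$ as the reference community, I would carry out the paper's promised induction on the symmetric-difference size $k:=d_h(B,C_1)$: assuming $\widehat B_{\text{opt}}(n)$ achieves $\mathtt{Error}>An^\varepsilon\log n$, single-vertex swaps between $\widehat B_{\text{opt}}(n)\cap C_1^c$ and $\widehat B_{\text{opt}}(n)^c\cap C_1$ produce a sequence of test sets; I would show that on a good event, each such swap moves $\widehat H$ in the direction predicted by the population gap, because the gap at level $k$ scales like $\delta(L)\cdot k/n$ while the needed swap concentration is controlled by a family of only $O(n)$ ``local'' statistics per step.

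The main obstacle is the last ingredient: one needs a high-probability event on which the \emph{observed} single-swap gain agrees in sign with the population gain simultaneously across all $k\leq An^\varepsilon\log n$ relevant swaps, \emph{without} paying a $2^{n}$ union-bound price. The resolution is the mismatch of scales: at error level $k\gtrsim n^{\varepsilon}\log n$ the population gap $\delta(L)k/n$ dwarfs the swap-level Bernstein fluctuation, whose deviation probability over the $O(n)\cdot k$ relevant swaps can be absorbed into a single exponential of the form $\exp\{-\kappa(L)^{2}\varepsilon n^{\varepsilon}(\log n)^{2-\varepsilon}/32 + \log 4|L|\}$ appearing in \eqref{eq:largesampleconsistency}. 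The constant $\eta$ controlling the admissible range of $\varepsilon$ will come from requiring that the population gap dominate both the concentration error and the $|L|$-term cost of also taking a union over layers in $L$, and from requiring $An^\varepsilon\log n\leq n/2$ so that the quadratic lower bound on the gap remains valid.
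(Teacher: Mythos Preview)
Your high-level plan (define a population surrogate, show concentration, then bootstrap) matches the paper's, but the induction you describe is not the one the paper uses, and as stated it has a real gap.

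The paper's induction is \emph{not} on the Hamming distance $k=d_h(B,C_1)$ via single-vertex swaps. It is a bootstrap over a sequence of shrinking neighborhood radii $b_{n,k}:=(\log n/n)^{1-1/2^k}$. At stage $k$ one assumes $\widehat B_{\mathrm{opt}}(n)$ already lies in the Hamming neighborhood $N(C_1,A\,b_{n,k-1})\cup N(C_2,A\,b_{n,k-1})$; the cardinality of this neighborhood is only $\exp\{O(n\,b_{n,k-1}\log(1/b_{n,k-1}))\}$, so a uniform concentration bound over it costs far less than $2^n$ and yields $|\widehat H-H_\ast|\le 5|L|b_{n,k}$ on that neighborhood. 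The population gap (Lemma~\ref{prop:optimizer}) is \emph{linear}, not quadratic: $H_\ast(C_1,L)-H_\ast(B,L)\ge a|L|\phi(L)t$ whenever $B\notin\cR(t)$. Taking $t=c\,b_{n,k}$ with $c$ large enough beats the concentration error $5|L|b_{n,k}$, forcing $\widehat B_{\mathrm{opt}}(n)$ into the next neighborhood $N(C_i,A'\,b_{n,k})$. One then iterates until $1/2^k\le\eps$, which gives the $n^\eps\log n$ rate.

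Your swap argument does not avoid the exponential union bound. You assert that the score change from a single swap at vertex $u$ is governed by ``$O(n)$ local statistics,'' but the swap gain at $u$ depends on $\sum_{v\in B}x_\ell(u,v)$, which is a function of the \emph{entire current set} $B$, not just per-vertex quantities. To conclude that $\widehat B_{\mathrm{opt}}(n)$ admits an improving swap whenever $\mathtt{Error}(\widehat B_{\mathrm{opt}}(n))>An^\eps\log n$, you would need swap-gain control simultaneously for every $B$ at that error level---the same $\binom{n}{k}$-sized family you are trying to avoid. The paper's neighborhood-shrinking trick is exactly the device that replaces this with a manageable union bound; without it (or an equivalent entropy-vs-gap balancing), the argument does not close. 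Separately, your claimed gap $H^\ast(C_1,L)-H^\ast(B,L)\ge c\cdot\mathtt{Error}(B)^2/n$ is both dimensionally off ($H_\ast$ is $O(1)$) and weaker than the linear bound the paper actually proves and needs.
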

Note that the right-hand-side of \eqref{eq:largesampleconsistency} converges to 1 for all $\eps\in(0,1)$, regardless of $\eta$. Furthermore, if $\eps\in(0,\eta)$, we have $n^\eps<n^{\eps'}$ for all $\eps'\geq\eta$. Therefore, a corollary of Theorem \ref{thm:largegraph} is that for any $\eps\in(0,1)$, 
\[
\mathbb{P}_{m,n}\left(\mathtt{Error}\left(\Bopt\right) < n^\eps\log n\right)\rightarrow1\;\text{ as }\;n\rightarrow\infty.
\]
The above statement is perhaps a more illustrative version of the Theorem \ref{thm:largegraph}, and shows that the constants $A$ and $\eta$ play a role only in bounding the convergence rate of the probability.

The proof of Theorem \ref{thm:largegraph} is given in Section \ref{thm2-proof-sec}. We note that the assumption that $\pi_1\leq\pi_2$ is made without loss of generality, since the community labels are arbitrary. When $m = 1$, Theorem \ref{thm:largegraph} implies asymptotic $n\rightarrow\infty$ consistency in the (single-layer) stochastic block model. In this case, the condition that $\delta_\ell = {P}_{\ell}(1,1) {P}_{\ell}(2,2) - {P}_{\ell}(1,2)^2>0$ is a natural requirement on the inner community edge density of a block model. This condition appears in a variety of consistency analyses, including the evaluation of modularity \citep{zhao2012consistency}. When $m > 1$, Theorem \ref{thm:largegraph} implies the vertex set that maximizes $H(B,L)$ will have asymptotically vanishing error with high probability, given that $L$ is a fixed layer set with \emph{all} layers satisfying $\delta_\ell > 0$. 

\subsubsection{Consistency of the joint optimizer}\label{layer-set-theory}
Theorem \ref{thm:largegraph} does not address the \emph{joint} optimizer of the score across all vertex-layer pairs. First, we point out that for a fixed $B\subseteq[n]$, the limiting behavior of the score $\widehat H(B, L)$ depends on $L\subseteq[m]$ through the layer-wise determinants $\{\delta_\ell:\ell\in[n]\}$ and the scaling constant $\frac{1}{|L|}$ inherent to $H(B,L)$, as defined in equation \eqref{score}. Let $\gamma:\mathbb{N}^+\mapsto\mathbb{R}^+$ be a non-decreasing function of $|L|$. Define
\begin{equation}\label{score2}
H_\gamma(B,L) := \frac{1}{\gamma(|L|)}\left(\sum_{\ell \in L}Q_\ell(B)_+\right)^2.
\end{equation}
and let $\widehat H_\gamma(B,L)$ be the corresponding random version of this score under an MSBM. We analyze the joint node-set optimizer of $H_\gamma$ under some representative choices of $\gamma$, an analysis which will ultimately motivate the choice $\gamma(|L|) = |L|$.

We first provide an illustrative example. Consider a MSBM with $m>1$ layers having the following structure: the first layer has positive determinant, and all $m-1$ remaining layers have determinant equal to 0. Note that $\delta_1>0$ implies that the first layer has ground-truth assortative community structure, and that $\delta_\ell = 0$ for all $\ell > 1$ implies that the remaining layers are (independent) Erdos-Renyi random graphs. In this case, the desired global optimizer of $H_\gamma(B, L)$ is community 1 (or 2) and the first layer. However, setting $\gamma(|L|) \equiv 1$ (effectively ignoring the scaling of $H$) will ensure that, in fact, the \emph{entire} layer set is optimal, since $Q_\ell(B)_+\geq0$ by definition. It follows that setting $\gamma(|L|)$ to increase (strictly) in $|L|$, which introduces a penalty on the size of the layer set, is desirable.

For a fixed scaling function $\gamma$, define the global joint optimizer of $\widehat H(B, L)$ by
\begin{equation}\label{jopt}
\Jopt := \underset{2^{[n]}\times 2^{[m]}}{\argmax}\; \widehat H_\gamma(B, L).
\end{equation}
Note that $\Jopt$ is random, and may contain multiple elements of $2^{[m]}\times 2^{[n]}$. The next theorem addresses the behavior of $\Jopt$ under the MSBM for various choices of $\gamma(|L|)$, and shows that setting $\gamma(|L|) = |L|$ is desirable for consistency.

\begin{theorem}\label{layer-set-consistency}
Fix $m$ and let $\{\widehat{\bm{G}}({m,n})\}_{n > 1}$ be a sequence of multilayer stochastic 2 block models where $\widehat{\bm{G}}({m,n})$ is a random graph with $m$ layers and $n$ nodes generated under $\mathbb{P}_{m,n}(\cdot | \bm{P}, \pi_1, \pi_2)$. Assume $\pi_1\leq\pi_2$, and that $\pi_1$, $\pi_2$, and $\mathbb{P}$ do not change with $n$. Fix $0=\delta^{(0)}<\delta^{(1)}<1$. Suppose the layer set $[m]$ is split according to $[m] = \cup_{i = 0, 1}L_i$, where $\delta_\ell = \delta^{(i)}$ for all $\ell\in L_i$. Then for any $\eps>0$, the following hold:
\begin{enumerate}[(a)]
\item Let $\widehat L^+ := \{\ell: \widehat Q_\ell(\widehat B_{opt}^{(n)})>0\}$. If $\gamma(|L|)\equiv 1$, then for all $n>1$, $\widehat L_{opt}^{(n)} = \widehat L^+$, and 
\[
\mathbb{P}_{m,n}\left(\mathtt{Error}\left(\widehat B_{opt}^{(n)}\right) < n^\eps\log n\right)\rightarrow1\;\text{ as }\;n\rightarrow\infty.
\]
\item If $\gamma(|L|) = |L|$,
\[
\mathbb{P}_{m,n}\left(\widehat L_{opt}^{(n)} = L_1,\;\mathtt{Error}\left(\widehat B_{opt}^{(n)}\right) < n^\eps\log n\right)\rightarrow1\;\text{ as }\;n\rightarrow\infty.
\]
\item If $\gamma(|L|) = |L|^2$,
\[
\mathbb{P}_{m,n}\left(\widehat L_{opt}^{(n)} \subseteq 2^{L_1},\;\mathtt{Error}\left(\widehat B_{opt}^{(n)}\right) < n^\eps\log n\right)\rightarrow1\;\text{ as }\;n\rightarrow\infty.
\]
\end{enumerate}
\end{theorem}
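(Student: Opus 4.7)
The plan is to leverage Theorem~\ref{thm:largegraph}, which establishes vertex-set consistency for a fixed layer set, and extend it to the joint optimization by analyzing how the maximum score over $B$ varies with $L$ under each of the three choices of $\gamma$. The engine driving all three parts is a uniform-in-$B$ control on $\widehat Q_\ell(B)$ that I would extract from the proof of Theorem~\ref{thm:largegraph}: with probability at least $1-o(1)$, and uniformly over $B\subseteq[n]$, one has $|\widehat Q_\ell(B)|\leq\rho_n$ with $\rho_n\to 0$ for every $\ell\in L_0$ (since $\delta_\ell=0$ makes the layer behave like an \erdos graph with no community signal), while for $\ell\in L_1$ the quantity $\widehat Q_\ell(B)$ is close to a deterministic function attaining a strictly positive maximum $q^{(1)}$ exactly when $\mathtt{Error}(B)=0$ and decreasing in $\mathtt{Error}(B)/n$.

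Given these uniform bounds, part~(a) follows quickly: because $\widehat H_\gamma(B,L)=\bigl(\sum_{\ell\in L}\widehat Q_\ell(B)_+\bigr)^2$ is monotone non-decreasing in $L$ when $\gamma\equiv 1$, the layer maximizer for any fixed $B$ is precisely $\{\ell:\widehat Q_\ell(B)>0\}$, giving the identity $\widehat L_{opt}^{(n)}=\widehat L^+$. After plugging this in, the joint problem reduces to maximizing $\bigl(\sum_{\ell=1}^m \widehat Q_\ell(B)_+\bigr)^2$ over $B$; the $L_0$-contribution is at most $|L_0|\rho_n$ and is negligible next to the $L_1$-contribution near a true community, so the machinery of Theorem~\ref{thm:largegraph} (applied essentially to $L=L_1$) yields the vertex-set error bound.

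For part~(b) with $\gamma(|L|)=|L|$, the uniform bounds give
$$\widehat H_\gamma(B,L)\;\approx\;\frac{1}{|L|}\bigl(|L\cap L_1|\,q^{(1)}\bigr)^2\;=\;\frac{|L\cap L_1|^2}{|L|}\,(q^{(1)})^2$$
whenever $B$ is near a true community. The scalar function $|L\cap L_1|^2/|L|$ is strictly maximized at $L=L_1$: adjoining any $\ell\in L_0$ leaves the numerator unchanged but enlarges $|L|$, while dropping any $\ell\in L_1$ strictly decreases the ratio. For $B$ far from either community the maximum over $L$ is uniformly smaller, so the joint optimum is attained at $L=L_1$ paired with a $B$ close to a true community, and Theorem~\ref{thm:largegraph} applied with $L=L_1$ delivers the error bound. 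For part~(c) with $\gamma(|L|)=|L|^2$, the score becomes the squared average of $\widehat Q_\ell(B)_+$ over $\ell\in L$; since each $L_0$-contribution $\rho_n$ is strictly smaller than a typical $L_1$-contribution for $B$ near a true community, the average is lowered by including any layer from $L_0$, forcing $\widehat L_{opt}^{(n)}\subseteq 2^{L_1}$, after which Theorem~\ref{thm:largegraph} on any non-empty subset of $L_1$ gives the vertex bound (the determinants on $L_1$ are all positive by construction).

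The hard part is upgrading the pointwise concentration in Theorem~\ref{thm:largegraph} to the uniform-in-$B$ concentration used above, since a naive union bound over the $2^n$ subsets of $[n]$ cannot be absorbed by the $\exp(-cn^\eps(\log n)^{2-\eps})$ tail in \eqref{eq:largesampleconsistency}. My expectation is that the novel inductive argument driving the proof of Theorem~\ref{thm:largegraph} already furnishes this uniform control implicitly (its conclusion amounts to saying that no badly misclassified $B$ can beat a near-community $B$), so the real work is isolating and replaying that inductive machinery in the joint $(B,L)$ setting, paying particular attention to the cross-comparisons between different layer sets.
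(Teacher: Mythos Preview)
Your overall architecture matches the paper's: use the population-score formula (the paper's Lemma~\ref{pop-score-rep}) to see that $q_\ell(B)=0$ for $\ell\in L_0$ and hence compare $H_\ast(B,L)$ across layer sets via the exact combinatorial quantities you wrote down ($|L\cap L_1|^2/|L|$, etc.), then invoke the concentration machinery behind Theorem~\ref{thm:largegraph} to transfer the comparison to $\widehat H$. Your treatment of the three cases (a)--(c) is essentially what the paper does.

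Where you diverge is in the final paragraph, and you are making the problem harder than it is. You worry that a union bound over $2^n$ node sets is too expensive and that the induction must be replayed ``in the joint $(B,L)$ setting.'' The paper sidesteps this entirely: since $m$ is \emph{fixed}, there are only $2^m$ layer sets, so one simply takes a union bound over $L\subseteq[m]$ (a constant factor) and applies the existing per-fixed-$L$ machinery unchanged. Concretely, Lemma~\ref{prop:sup-score-conc-ineq} already gives supremum-over-$B$ concentration of $\widehat H(B,L)$ around $H_\ast(B,L)$ on the classes $N_{n,k}(A)$, and the inductive Lemma~\ref{lem:refinement} already controls the node-set optimizer for each fixed $L$; multiplying the failure probability by $2^m$ changes nothing asymptotically. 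The ``cross-comparisons between different layer sets'' are therefore handled purely at the population level via Lemma~\ref{pop-score-rep}, not by any joint induction. So your plan is sound, but drop the last paragraph's program and replace it with the observation that the layer-set union bound is free.
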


The proof of Theorem \ref{layer-set-consistency} is given in Section \ref{layer-set-proof}. Part (a) implies that setting $\gamma(|L|)\equiv1$ ensures that the optimal layer set will be, simply, all layers with positive modularity, thereby making this an undesirable choice for the function $\gamma$. Part (c) says that if $\gamma(|L|)= |L|^2$, the layer set with the highest \emph{average} layer-wise modularity will be optimal (with high probability as $n\rightarrow\infty$), which means that all subsets of $L_1$ are asymptotically equivalent with respect to $\widehat H(B,L)$ (with high probability). By part (b), if $\gamma(|L|) = |L|$, then $L_1$ is the unique asymptotic maximizer of the population score (with high probability). Therefore, $\gamma(|L|) = |L|$ is the most desirable choice of scaling.

\subsection{Discussion of theoretical results}\label{ss:theory-discussion}
As mentioned previously, the theoretical results in Section \ref{subsec:score-consistency} regard the \emph{global} optimizer of the score. The results have the following in-practice implication: if one simulates from an MSBM satisfying the assumptions of the theorem, and subsequently finds the \emph{global} optimizer of the score, the classification error of the optimizer will be vanishingly small with increasingly high probability (as $n\rightarrow\infty$). To illustrate this point, consider an MSBM with $m>1$ layers, each having community structure $\pi_1 = 0.4$ and for $r\in(0, .95)$, $P(1, 1) = P(2, 2) = 0.05 + r$ and $P(1, 2) = 0.05$. Under this model, $\delta_\ell = r(r + 0.1) > 0$ for all $\ell\in[m]$, and the assumptions of Theorem \ref{layer-set-consistency} are satisfied. Therefore, if we were to simulate this MSBM and find the global optimizer of $H$, with high probability (increasingly, $n\rightarrow\infty$) we would recover (1) the correct layer set $[m]$, and (2) the optimal node set will have small classification error (vanishingly, $n\rightarrow\infty$).

Of course, in practice it is computationally infeasible to find the global optimizer, so we employ the method laid out in Section \ref{sec:algorithm} to find local maxima. We find through simulation (see Section \ref{sec:simulations}) that our method achieves extremely low error rates for relatively small networks, including on the  MSBM described above, for many values of $r$. These results reflect the intuition that theoretical results for global optimizers should have practical implications when local optimization methods are sufficiently effective.

A limitation of our theoretical results is that they assume an MSBM with only two communities. Furthermore, the model does not allow for any notion of overlap between communities. Nevertheless, in the setting we consider we demonstrate that there is a concise condition on the parameters of the model that guarantees consistency, namely, that $\delta(L) > 0$ for a target layer set $L$. We expect a similar condition to exist in more complicated settings. In particular, at the outset of Section \ref{thm2-proof-sec}, we sketch how the $\delta(L)>0$ condition relates to the maxima of the \emph{population} version of the score, which roughly speaking is the deterministic, limiting form of the score under the MSBM. We expect that, in more complex settings with (say) more than two communities or overlapping nodes, similar conditions on the model parameters would guarantee that the population version of the score be maximized at the correct community/layer partitions, which (as we show) would entail that the classification error of the global maximizer converges in probability to zero. Though the proofs in such settings would undoubtedly be more complicated, the analyses in this paper should serve as a theoretical foundation. Furthermore, we have empirically analyzed those settings via simulation in Section \ref{sec:msbm} (for more than two communities) and Appendix \ref{sec:appendix-sims} (for overlapping communities).

\section{Proofs}

In this section we prove the theoretical results given in Section \ref{sec:consistency}. The majority of the section is devoted to a detailed proof of Theorem \ref{thm:largegraph} and supporting lemmas. This is followed by the proof of Theorem \ref{layer-set-consistency}, of which we give only a sketch, as many of the results and techniques contributing to the proof of Theorem \ref{thm:largegraph} can be re-used.




\subsection{Proof of Theorem \ref{thm:largegraph}, and Supporting Lemmas}\label{thm2-proof-sec}

We prove Theorem \ref{thm:largegraph} via a number of supporting lemmas. We begin with some notation:

\begin{definition}\label{rho}
For a fixed vertex set $B\subseteq[n]$ define
\begin{equation}
	\rho_n(B) = \dfrac{|B \cap C_{1,n}|}{|B|}, \hskip 1pc s_n(B) = \dfrac{|B|}{n}, \hskip 1pc  \Bv_n(B) := (\rho_n(B), 1 - \rho_n(B)).
\end{equation}
We will at times suppress dependence on $n$ and $B$ in the above expressions.
\end{definition}
\begin{definition}\label{def:pop-score}
Define the \textbf{population} normalized modularity of a set $B$ in layer $\ell$ by
\begin{equation}\label{pop-mod}
\cQ_\ell(B) := \frac{s_n(B)}{\sqrt{2}}\left(\Bv_n(B)^tP_\ell \Bv_n(B) - \dfrac{(v_n(B)^t P_\ell \pi)^2}{\kappa_\ell}\right).
\end{equation}
Define the \textbf{population} score function $\cH(\cdot, L):2^{[n]}\mapsto\mathbb{R}$ by 
\begin{equation}\label{pop-score}
\cH(B, L) = |L|^{-1}\left(\sum_{\ell\in[L]}\cQ_\ell(B)\right)^2.
\end{equation}
\end{definition}

Throughout the results in this section, we assume that $L\subseteq[m]$ is a fixed layer set (as in the statement of Theorem \ref{thm:largegraph}). We will therefore, at times, suppress the dependence on $L$ from $\delta(L)$ and $\kappa(L)$ (from Definition \ref{parameter-defs}).

\subsubsection{Sketch of the Proof of Theorem \ref{thm:largegraph}}
The proof of Theorem \ref{thm:largegraph} is involved and broken into many lemmas. In this section, we give a rough sketch of the argument, as follows. The lemmas in this section establish that:
\begin{enumerate}
\item $C_{1,n}$ maximizes the population score $H_\ast(\cdot, L)$ (Lemmas \ref{pop-score-rep} and \ref{prop:optimizer}).
\item For large enough sets $B\subseteq[n]$, the random score $\widehat H(B, L)$ is bounded in probability around the population score $H_\ast(B, L)$ (Lemmas \ref{prop:sup-score-conc-ineq} and \ref{conc-cor}).
\item \textbf{Inductive Step}: For fixed $k>1$, assume that $d_h(\Bopt, C_{1,n})/n = O_p(b_{n,k})$, where larger $k$ makes $b_{n,k}$ of smaller order. Then, based on concentration properties of the score, in fact $d_h(\Bopt, C_{1,n})/n = O_p(b_{n,k+1})$ (Lemma \ref{lem:refinement}).
\item There exists a constant $\eta$ such that for any $\eps\in (0, \eta)$, $d_h(\Bopt, C_{1,n})/n = O_p(n^\eps\log n)$ (Theorem \ref{thm:largegraph}). This result is shown using the Inductive Step.
\end{enumerate}

\subsubsection{Supporting lemmas for the Proof of Theorem \ref{thm:largegraph}}\label{thm2-lemmas}

\begin{lemma}\label{pop-score-rep}
Define $\phi(L) := (|L|^{-1}\sum_\ell\frac{\det P_\ell}{2\kappa_\ell})^2$. Then:
\begin{enumerate}
\item For any $B\subseteq[n]$,  $q_\ell(B) = \frac{s_n(B)}{\sqrt{2}}(\pi_1 - \rho_n(B))^2\cdot \frac{\det P_\ell}{2\kappa_\ell}$, and therefore 
\[
\cH(B, L) = |L|\phi(L)\frac{s_n(B)^2}{2}(\pi_1 - \rho_n(B))^4.
\]
\item $\delta(L)^2\leq\phi(L)\leq\frac{1}{\pi_1^2\delta(L)^2}$ and therefore $\cH(C_{1,n}, L) \geq |L|\frac{\pi_1^2}{2}(1 - \pi_1^4)\delta(L)^2$.
\end{enumerate}
\end{lemma}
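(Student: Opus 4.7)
For Part 1, the central observation is that $v_n(B)$ differs from $\pi$ only along the direction $e := (1,-1)^T$: explicitly, $v_n(B) = \pi + w$ with $w := (\rho_n(B)-\pi_1)\,e$. I would substitute this decomposition into the bracket $v_n(B)^T P_\ell v_n(B) - (v_n(B)^T P_\ell \pi)^2/\kappa_\ell$ and expand. The constant piece $\pi^T P_\ell \pi$ cancels against the corresponding term in $(v_n(B)^T P_\ell \pi)^2/\kappa_\ell$, the cross-terms linear in $w$ also cancel, and what remains is the pure $w$-quadratic $(\rho_n(B)-\pi_1)^2\bigl[e^T P_\ell e - (e^T P_\ell \pi)^2/\kappa_\ell\bigr]$. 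A short $2\times 2$ calculation using $\pi_1+\pi_2=1$ then establishes the algebraic identity $(e^T P_\ell e)\,\kappa_\ell - (e^T P_\ell \pi)^2 = \det P_\ell$, from which the claimed closed form for $\cQ_\ell(B)$ follows.

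With the factored form of $\cQ_\ell(B)$ in hand, the formula for $\cH(B,L)$ is mechanical. The $B$-dependent factor $(s_n(B)/\sqrt{2})(\rho_n(B)-\pi_1)^2$ pulls out of the sum $\sum_{\ell \in L}\cQ_\ell(B)$, and since $\det P_\ell/\kappa_\ell > 0$ for every $\ell \in L$, no positive-part truncation is needed. Squaring and dividing by $|L|$ then yields the asserted expression $|L|\phi(L)(s_n(B)^2/2)(\pi_1-\rho_n(B))^4$.

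For Part 2, the two-sided bound on $\phi(L)$ reduces to two-sided control of each $\kappa_\ell$. Since every entry of $P_\ell$ lies in $(0,1)$ we have $\kappa_\ell < 1$, which combined with $\det P_\ell \geq \delta(L)$ delivers the lower bound on $\phi(L)$. For the upper bound, I would use that $\det P_\ell > 0$ together with symmetry and positive diagonal force $P_\ell$ to be positive definite, then apply the Rayleigh bound $\kappa_\ell \geq \lambda_{\min}(P_\ell)\|\pi\|^2$ together with $\lambda_{\min}(P_\ell) = \delta_\ell/\lambda_{\max}(P_\ell) \geq \delta_\ell/\tr(P_\ell) \geq \delta_\ell/2$ and $\|\pi\|^2 \geq 2\pi_1^2$ (since $\pi_2 \geq \pi_1$) to obtain $\kappa_\ell \geq \pi_1^2\,\delta(L)$. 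The lower bound on $\cH(C_{1,n},L)$ then follows by substituting $s_n(C_{1,n}) = \pi_1$ and $\rho_n(C_{1,n}) = 1$ into the Part 1 formula and invoking the lower bound on $\phi(L)$.

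The main obstacle is not conceptual but bookkeeping: the cancellation in Part 1 has to be executed carefully so that the factor $\det P_\ell/\kappa_\ell$ emerges with the correct sign and scaling, and the precise numerical constants in the Part 2 bounds depend on which elementary inequalities one chooses to relate $\kappa_\ell$ to $\delta_\ell$ and $\pi_1$. If the resulting constants differ slightly from the displayed ones, the structural content of the lemma---the closed-form expression for $\cH(B,L)$ and the $|L|\,\delta(L)^2$ lower bound for $\cH(C_{1,n},L)$---is preserved, which is all that subsequent arguments require.
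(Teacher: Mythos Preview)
Your proposal is correct. The arguments for Part~1 are essentially equivalent: the paper applies the $2\times2$ identity $(\mathbf{x}^TA\mathbf{x})(\mathbf{y}^TA\mathbf{y})-(\mathbf{x}^TA\mathbf{y})^2=(\mathbf{x}_1\mathbf{y}_2-\mathbf{x}_2\mathbf{y}_1)^2\det A$ directly with $\mathbf{x}=\pi$, $\mathbf{y}=v$, obtaining the factor $(\pi_1(1-\rho)-\pi_2\rho)^2=(\pi_1-\rho)^2$ in one step. Your decomposition $v=\pi+(\rho-\pi_1)e$ reaches the same endpoint after the cancellation of constant and linear terms, and your residual identity $(e^TP_\ell e)\kappa_\ell-(e^TP_\ell\pi)^2=\det P_\ell$ is precisely the paper's identity specialized to $\mathbf{x}=e$, $\mathbf{y}=\pi$ (since $e_1\pi_2-e_2\pi_1=\pi_1+\pi_2=1$). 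The paper's route is slightly slicker; yours is more hands-on but equally valid.

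For Part~2 the approaches genuinely diverge. The paper bounds $\kappa_\ell$ from below via AM--GM on the diagonal contribution $\pi_1^2P_\ell(1,1)+\pi_2^2P_\ell(2,2)$ followed by a difference-of-squares factorization, yielding $\kappa_\ell\geq2\pi_1\pi_2\delta_\ell\geq\pi_1\delta_\ell$. Your spectral route ($\kappa_\ell\geq\lambda_{\min}\|\pi\|^2$, $\lambda_{\min}\geq\delta_\ell/\tr P_\ell$) produces $\kappa_\ell\geq\pi_1^2\delta_\ell$, which is weaker by a factor of $\pi_1$ and so gives a looser upper bound on $\phi(L)$. You already anticipate this, and you are right that it does not matter: only the lower bound $\phi(L)\geq\delta(L)^2$ is invoked in the subsequent arguments (Lemma~\ref{lem:refinement} and the proof of Theorem~\ref{thm:largegraph}), so the discrepancy in the upper constant is harmless.
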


\begin{lemma}\label{prop:optimizer}
Fix any $n>1$. Define 
\[
\cR(t) := \begin{cases}
\big\{B\subseteq[n]:\max\{|s(B)-\pi_1|,\;1-\rho(B)\}\leq t\big\},&\pi_1<\pi_2\\
\big\{B\subseteq[n]:\max\{|s(B)-\pi_1|,\;\rho(B)\}\leq 1-\rho(B) \leq t\big\},& \pi_1 = \pi_2
\end{cases}
\]
Then there exists a constant $a>0$ depending just on $\pi_1$ such that for sufficiently small $t$, $B\notin\cR(t)$ implies $H_\ast(B, L)<H_\ast(C_{1,n}, L) - a|L|\phi(L)t$. 
\end{lemma}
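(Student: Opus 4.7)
Proof proposal.

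My plan reduces the statement to a two-dimensional constrained optimization. By Lemma \ref{pop-score-rep}, $H_\ast(B, L) = |L|\phi(L)\, h(s(B),\rho(B))$ where $h(s,\rho) := \tfrac{1}{2} s^2 (\pi_1 - \rho)^4$, and a direct counting argument gives the feasibility constraints $\rho s \leq \pi_1^{(n)}$ and $(1-\rho)s \leq 1 - \pi_1^{(n)}$ with $\pi_1^{(n)} := \lceil \pi_1 n \rceil/n$. It therefore suffices to produce a constant $a > 0$ depending only on $\pi_1$ such that $(s,\rho)$ outside $\cR(t)$ implies $h(s,\rho) \leq h(\pi_1^{(n)}, 1) - a t$ for small enough $t$.

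First I would locate the maximizer of $h$ on the feasible region. Since $h$ is non-decreasing in $s$ at fixed $\rho$, the maximum over feasibility lies on the boundary. Substituting $s = \pi_1^{(n)}/\rho$ for $\rho \geq \pi_1^{(n)}$ gives $g_1(\rho) = (\pi_1^{(n)})^2(\rho - \pi_1^{(n)})^4/(2\rho^2)$, and the one-line derivative computation $g_1'(\rho) = (\pi_1^{(n)})^2(\rho-\pi_1^{(n)})^3(\rho+\pi_1^{(n)})/\rho^3 > 0$ shows $g_1$ is maximized at $\rho = 1$, with value $\tfrac{1}{2}(\pi_1^{(n)})^2(1-\pi_1^{(n)})^4$. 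Similarly $s = (1-\pi_1^{(n)})/(1-\rho)$ for $\rho \leq \pi_1^{(n)}$ yields $g_2(\rho)$ with $g_2' < 0$, maximized at $\rho = 0$ with value $\tfrac{1}{2}(\pi_1^{(n)})^4(1-\pi_1^{(n)})^2$. The assumption $\pi_1 \leq \pi_2$ implies $g_1(1) \geq g_2(0)$, strictly iff $\pi_1 < \pi_2$, identifying $B = C_{1,n}$ as the global maximizer (uniquely so in the asymmetric case) and explaining why $\cR(t)$ has two distinct definitions.

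Next I would bound the drop-off outside $\cR(t)$ by case analysis, starting with the asymmetric case $\pi_1 < \pi_2$. If $s < \pi_1^{(n)} - t$, then $s^2 \leq (\pi_1^{(n)})^2 - \pi_1 t$ for small $t$; combined with the universal bound $(\pi_1 - \rho)^4 \leq \pi_2^4$, this yields $h \leq h(\pi_1^{(n)}, 1) - \tfrac{1}{2}\pi_1 \pi_2^4 t$. If $s > \pi_1^{(n)} + t$, feasibility $\rho \leq \pi_1^{(n)}/s$ forces $1-\rho \geq t/2$ for small $t$, reducing to the next case. If $1 - \rho > t$, split on $\rho \geq \pi_1^{(n)}$ versus $\rho < \pi_1^{(n)}$: in the former, $h \leq g_1(1-t)$ and a first-order Taylor bound using $g_1'(1) = \pi_1^2\pi_2^3(1+\pi_1) > 0$ gives a drop of at least $\tfrac{1}{2}g_1'(1)\, t$; in the latter, $h \leq g_2(0)$, a constant gap of $\tfrac{1}{2}\pi_1^2\pi_2^2(\pi_2-\pi_1) > 0$ below $g_1(1)$, which dominates any linear-in-$t$ bound for small $t$. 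Taking $a$ to be the minimum of the resulting coefficients (all depending only on $\pi_1$) concludes the asymmetric case.

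The main obstacle is the symmetric case $\pi_1 = \pi_2 = \tfrac{1}{2}$, in which the constant gap in the last subcase collapses to zero and $C_{1,n},C_{2,n}$ are indistinguishable as maximizers. The lemma handles this by defining $\cR(t)$ with an additional one-sided restriction (effectively $\rho \leq \tfrac{1}{2}$), canonicalizing to the community closer to $C_{2,n}$ via label symmetry; the argument above is then replayed with the roles of $C_{1,n}$ and $C_{2,n}$ interchanged, and the same constant $a$ (depending only on $\pi_1 = \tfrac{1}{2}$) works. The $O(1/n)$ discrepancy between $\pi_1^{(n)}$ and $\pi_1$ is absorbed by the "sufficiently small $t$" clause, since all constants in the drop-off bounds depend continuously on $\pi_1^{(n)} \in [\pi_1, \pi_1 + 1/n]$ and can be chosen uniformly over this range at the cost of possibly shrinking $t$.
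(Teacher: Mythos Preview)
Your proposal is correct and follows essentially the same approach as the paper: both reduce via Lemma~\ref{pop-score-rep} to analyzing $g(s,\rho)=\tfrac{1}{2}s^2(\rho-\pi_1)^4$ under the feasibility constraints $\rho s\le\pi_1$ and $(1-\rho)s\le\pi_2$, then establish the linear drop-off by case analysis combined with first-order Taylor bounds. Your organization---identifying the boundary curves $g_1,g_2$ first and splitting by which defining inequality of $\cR(t)$ fails---is slightly more streamlined than the paper's trisection on $s\in[0,\pi_1]\cup(\pi_1,\pi_2]\cup(\pi_2,1]$, but the inequalities produced (e.g.\ your $g_1(1-t)$ bound is exactly the paper's inequality \eqref{D2-3-ineq1}) and the resulting constants are the same.
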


The proofs of Lemmas \ref{pop-score-rep}-\ref{prop:optimizer} are given in Appendix \ref{sec:AppA}. We now give a general concentration inequality for $\widehat H(B, L)$, which shows that for sufficiently large sets $B\subseteq[n]$, $\widehat H(B,L)$ is close to the population score $H_\ast(B, L)$ with high probability. This result is used in the proof of Lemma \ref{conc-cor}, and its proof is given in Appendix \ref{sec:AppA}. We first give the following definition:

\begin{definition} For fixed $\eps>0$ and $n>1$, define $\cB_n(\eps) := \{B\subseteq[n]:|B|\geq n\eps\}$.
\end{definition}

\begin{lemma}
	\label{prop:sup-score-conc-ineq}
Fix $\eps\in(0, 1)$. Let $\kappa$ be as in Definition \ref{parameter-defs}. For each $n>1$ suppose a collection of node sets $\cB_n$ is contained in $\cB_n(\eps)$. Then for large enough $n$,
\[
		\mathbb{P}_n\left(\sup_{\mathcal{B}_n}\left(\Big|\widehat{H}(B,L) - \cH(B, L)\Big|\right) > \dfrac{4|L|t}{n^2} + \frac{52|L|}{\kappa n}\right) \leq 4|L||\mathcal{B}_n| \exp\left(-\kappa^2 \dfrac{\eps t^2}{16n^2}\right)
\]
for all $t>0$.
	\end{lemma}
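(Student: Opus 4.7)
The plan is to reduce the score difference to a sum of layerwise modularity differences, establish a per-layer concentration inequality with the correct two-term form, and then union bound over $\cB_n\times L$. First, using $|a^2-b^2|\le(|a|+|b|)|a-b|$, the a priori bound $|\widehat Q_\ell(B)|,|q_\ell(B)|\le 1/\sqrt{2}$ (immediate from bounding each summand in~\eqref{set-modularity} and noting $\binom{|B|}{2}^{1/2}\le |B|/\sqrt2\le n/\sqrt2$), the nonnegativity of $q_\ell(B)$ (Lemma~\ref{pop-score-rep}, which applies under the $\delta(L)>0$ regime in which this concentration is invoked), and the $1$-Lipschitz property of $x\mapsto x_+$, I would show
\[
\bigl|\widehat H(B,L)-\cH(B,L)\bigr|\;\le\;\sqrt{2}\sum_{\ell\in L}\bigl|\widehat Q_\ell(B)-q_\ell(B)\bigr|.
\]
It then suffices to prove that for every $(B,\ell)\in\cB_n\times L$, $|\widehat Q_\ell(B)-q_\ell(B)|\le \tfrac{2\sqrt{2}t}{n^2}+\tfrac{26}{\kappa n}$ with failure probability at most $4\exp(-\kappa^2\eps t^2/(16n^2))$; multiplying by $\sqrt{2}|L|$ and union-bounding over $|L||\cB_n|$ pairs yields the target.

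For a fixed layer $\ell$, I would split $\widehat Q_\ell(B)-q_\ell(B)$ into three pieces by interposing an \emph{oracle} version
\[
\widetilde Q_\ell(B)=\frac{1}{n\binom{|B|}{2}^{1/2}}\left(\sum_{u<v\in B}P_\ell(c_u,c_v)-\sum_{u<v\in B}\frac{\mu_\ell(u)\mu_\ell(v)}{\bar D_\ell}\right),
\]
with $\mu_\ell(u)=\mathbb{E}\,d_\ell(u)$ and $\bar D_\ell=\mathbb{E}\,D_\ell$. The three pieces are: (i) the normalized deviation of the edge count $\sum_{u<v\in B}x_\ell(u,v)$ from its mean; (ii) the normalized deviation of the degree-correction sum $\sum_{u<v\in B}d_\ell(u)d_\ell(v)/D_\ell$ from $\sum_{u<v\in B}\mu_\ell(u)\mu_\ell(v)/\bar D_\ell$; and (iii) the purely deterministic error $\widetilde Q_\ell(B)-q_\ell(B)$, where $q_\ell(B)$ is the closed form in Lemma~\ref{pop-score-rep}.

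The Hoeffding-type term $t/n^2$ in the final bound comes from (i): applied to the $\binom{|B|}{2}$ independent bounded summands of the edge count and normalized by $n\binom{|B|}{2}^{1/2}$, Hoeffding gives deviation of size $t/n^2$ with tail $2\exp(-c\,\eps\,t^2/n^2)$, where the $\eps$ arises because $|B|\ge \eps n$ forces $\binom{|B|}{2}\ge c\eps n^2$. The $52|L|/(\kappa n)$ term absorbs (iii), which can be evaluated in closed form from $\binom{b_i}{2}$ vs.\ $b_i^2/2$ (with $b_i=|B\cap C_i|$), the diagonal terms dropped in $\sum_{u<v}$, and $\bar D_\ell-n^2\kappa_\ell=O(n)$; the denominators $\bar D_\ell\asymp n^2\kappa_\ell$ are the source of the $\kappa^{-1}$ scaling.

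The main obstacle will be piece~(ii), the ratio term. My plan is to work on a high-probability event, \emph{uniform over $\cB_n$ because it does not depend on $B$}, on which $|d_\ell(u)-\mu_\ell(u)|=O(\sqrt{n\log n})$ for all $u\in[n]$ and $|D_\ell-\bar D_\ell|=O(n\sqrt{\log n})$; both follow from Hoeffding applied to the relevant independent Bernoulli sums, and the additional tail cost is absorbed into the constant~$4$ in the union-bound factor. On this event I would apply the algebraic identity
\[
\frac{d_\ell(u)d_\ell(v)}{D_\ell}-\frac{\mu_\ell(u)\mu_\ell(v)}{\bar D_\ell}=\frac{d_\ell(u)d_\ell(v)-\mu_\ell(u)\mu_\ell(v)}{\bar D_\ell}+\frac{d_\ell(u)d_\ell(v)(\bar D_\ell-D_\ell)}{D_\ell\bar D_\ell}
\]
term by term, sum over $u<v\in B$, and normalize by $n\binom{|B|}{2}^{1/2}$; the $1/\bar D_\ell\asymp 1/(n^2\kappa_\ell)$ factor generates the $\kappa^{-1}$ scaling, and the $\sqrt{\log n}$ factors from the degree fluctuations are absorbed into the Hoeffding tuning parameter~$t$. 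The hard part is pure bookkeeping: tracking constants with enough care that the aggregate bound lands in the stated form $\tfrac{4|L|t}{n^2}+\tfrac{52|L|}{\kappa n}$ with exactly the factor $4|L||\cB_n|$ in the union bound.
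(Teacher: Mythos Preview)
Your reduction from $|\widehat H-H_\ast|$ to layerwise differences is workable, though the a~priori bound $|\widehat Q_\ell(B)|\le 1/\sqrt{2}$ is not a deterministic fact: the summand $x_\ell(u,v)-d_\ell(u)d_\ell(v)/D_\ell$ is not bounded below by~$-1$, since $d_\ell(u)d_\ell(v)/D_\ell$ can be of order $n$ in adversarial realizations. A constant bound holds only on a high-probability event (or via the crude estimate $\sum_{u<v\in B}d_\ell(u)d_\ell(v)\le D_\ell^2/2$, which gives $|\widehat Q_\ell(B)|\le C/\eps$ for $|B|\ge\eps n$). This is patchable, but the constants you claim will not survive.

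The real gap is piece~(ii). Per-vertex Hoeffding gives $|d_\ell(u)-\mu_\ell(u)|=O(\sqrt{n\log n})$ uniformly in $u$ on an event of probability $1-n^{-c}$, and on that event the ratio-term error you derive is, after your normalization by $n\binom{|B|}{2}^{1/2}$ and $\bar D_\ell\asymp n^2\kappa$, of order $\sqrt{\log n}/(\kappa\sqrt{n})$, not $1/(\kappa n)$. This cannot be ``absorbed into $t$'': the event has a fixed tail, independent of $t$, and the resulting additive error is a fixed sequence $c_n\gg 1/n$. You would end up proving a statement with threshold $t/n^2+O(\sqrt{\log n}/(\kappa\sqrt{n}))$ and an additive polynomial tail, which is not the lemma.

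The paper avoids this by choosing a different proxy: it replaces only the random normalizer $2|\widehat E|$ by the constant $n^2\kappa$, leaving the random degrees in the numerator, and then applies McDiarmid's bounded-differences inequality to the entire resulting proxy $\widetilde Q_\ell(B)$ as a single function of the edge indicators touching $B$. The key computation is that flipping one edge changes $\widetilde Q_\ell(B)$ by at most $2\kappa^{-1}n^{-1}\binom{|B|}{2}^{-1/2}$, and there are at most $n|B|$ relevant edges, which yields exactly the $\exp(-\kappa^2\eps t^2/(16n^2))$ tail with a clean $t/n^2$ deviation and no $\sqrt{\log n}$ contamination. The $O(1/(\kappa n))$ additive term then comes solely from the normalizer replacement (controlled by Hoeffding on $|\widehat E|$) and the deterministic gap $|\mathbb{E}\widetilde Q_\ell(B)-q_\ell(B)|$. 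Your decomposition splits the random degree-correction term off from the edge count and tries to control each by separate Hoeffding applications; this loses the cancellation that McDiarmid captures when treating the proxy as one function.
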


We now define new notation that will serve the remaining lemmas:
\begin{definition}\label{gamma-n}
Let $\gamma_n := \log n/n$, and for any integer $k>0$, define $b_{n,k} := \gamma_n^{1-\halfk}$.
\end{definition}

\begin{definition}\label{N-ast}
For any $r\in[0,1]$ and $C\subseteq[n]$, define the $r$-neighborhood of $C$ by $N(C, r) := \{B\subseteq[n]:d_h(B, C)/n\leq r\}$. For all $n>1$, any constant $A>0$, and fixed $k>1$, define
\[
\SmallNa := \begin{cases}
N\left(C_1, A\cdot b_{n,k-1}\right)\cup N\left(C_2, A\cdot b_{n,k-1}\right),&k>1\\
\cB_n(A),&k=1
\end{cases}
\]
\end{definition}

Lemma \ref{conc-cor}, stated below, is a concentration inequality for the random variable $\widehat H(B, L)$ on particular neighborhoods of $C_1$:

\begin{lemma}\label{conc-cor}
Fix $\eps\in(0, \pi_1)$ and any constant $A>0$. For $k>1$ satisfying $1 / 2^{k-1}<\eps$, we have for sufficiently large $n$ that
\begin{equation}
\sup_{B\in \SmallNa}\Big|\widehat{H}(B, L) - \cH(B, L)\Big| \leq 5|L|b_{n,k}.
\end{equation}
with probability greater than $1 - 2\exp\{-\frac{\kappa^2\eps}{32}n\gamma_n^{1-\eps}\log(n) + \log4|L|\}$. The conclusion holds with $k=1$ if $A = \eps$.
\end{lemma}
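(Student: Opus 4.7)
The plan is to apply Lemma~\ref{prop:sup-score-conc-ineq} with the family $\cB_n := \SmallNa$ and a deviation parameter $t$ chosen so that both the magnitude $5|L|b_{n,k}$ and the tail probability in the conclusion fall out directly.

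The setup requires three preparatory steps. First, I would check that $\SmallNa$ is contained in some $\cB_n(\eps^\ast)$ with $\eps^\ast>0$ constant. For $k>1$, any $B \in N(C_i, A b_{n,k-1})$ satisfies $\bigl||B|-|C_i|\bigr| \leq A b_{n,k-1} n$, and since $\min(|C_1|,|C_2|)\geq\pi_1 n-1$ and $b_{n,k-1}\to 0$, for $n$ large every $B \in \SmallNa$ has $|B|\geq \pi_1 n/2$, giving $\eps^\ast = \pi_1/2$; for $k=1$ with $A=\eps$, $\SmallNa = \cB_n(\eps)$ is immediate. Second, I would take $t = n^2 b_{n,k}$: because $b_{n,k}=(\log n/n)^{1-1/2^k}$ decays strictly slower than $1/n$, the quantity $52|L|/(\kappa n)$ is dominated by $|L|b_{n,k}$ for $n$ large, yielding $\tfrac{4|L|t}{n^2}+\tfrac{52|L|}{\kappa n}\leq 5|L|b_{n,k}$; this choice also gives $t^2/n^2=n^{1/2^{k-1}}(\log n)^{2-1/2^{k-1}}$. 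Third, I would bound $|\SmallNa|$: for $k>1$, using $\binom{n}{d}\leq (en/d)^d$ with $d:=\lfloor A b_{n,k-1} n\rfloor$ and $\log(en/d)=(1-1/2^{k-1})\log n\,(1+o(1))$ yields $\log|\SmallNa|\leq A\,n^{1/2^{k-1}}(\log n)^{2-1/2^{k-1}}(1+o(1))$, while for $k=1$ the crude $|\SmallNa|\leq 2^n$ suffices since $n\log 2=o(n\log n)$.

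Assembling these via Lemma~\ref{prop:sup-score-conc-ineq} yields a failure probability at most
\[
4|L|\cdot|\SmallNa|\cdot\exp\!\left(-\frac{\kappa^2 \eps^\ast}{16}\,n^{1/2^{k-1}}(\log n)^{2-1/2^{k-1}}\right).
\]
Choosing $A$ small enough compared to $\pi_1$ and $\kappa$ absorbs the combinatorial factor into the exponent, leaving a bound of the form $4|L|\exp\bigl(-c\,n^{1/2^{k-1}}(\log n)^{2-1/2^{k-1}}\bigr)$ for some $c>0$.

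The main obstacle is the final rate comparison: one must verify that $c\,n^{1/2^{k-1}}(\log n)^{2-1/2^{k-1}}$ eventually exceeds $\tfrac{\kappa^2\eps}{32} n^\eps(\log n)^{2-\eps}$ so that the obtained bound falls under $2\exp\{-\tfrac{\kappa^2\eps}{32} n^\eps(\log n)^{2-\eps}+\log 4|L|\}$. This is precisely where the hypothesis relating $1/2^{k-1}$ to $\eps$ enters, since it controls whether the polynomial rate $n^{1/2^{k-1}}$ coming out of the concentration inequality can dominate the target rate $n^\eps$ (the slowly-varying $\log$ factors being subsumed into polynomial slack). The $k=1$ base case follows identically: there $t^2/n^2=n\log n$ outpaces $n^\eps(\log n)^{2-\eps}$ by a polynomial factor in $n$, easily swallowing the crude $2^n$ bound on $|\SmallNa|$ and reducing the analysis to choosing constants.
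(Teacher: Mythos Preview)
Your overall strategy coincides with the paper's: show $\SmallNa\subseteq\cB_n(\text{const})$, apply Lemma~\ref{prop:sup-score-conc-ineq} with $t=n^2b_{n,k}$ so that $\tfrac{4|L|t}{n^2}+\tfrac{52|L|}{\kappa n}\leq 5|L|b_{n,k}$ for large $n$, bound $|\SmallNa|$ via the binomial estimate (packaged in the paper as Corollary~\ref{special_set_size}), and then compare the resulting exponent to the target rate $n\gamma_n^{1-\eps}\log n$. Two points deserve correction.

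First, a minor one: the paper takes the size lower bound to be $\eps$ itself (possible because $\eps<\pi_1$ and $b_{n,k-1}\to0$), so that the factor $\eps$ appearing in the exponent of Lemma~\ref{prop:sup-score-conc-ineq} matches the $\eps$ in the lemma's stated probability bound. Your choice $\eps^\ast=\pi_1/2$ would land a different constant there.

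Second, and this is a genuine gap: you write ``Choosing $A$ small enough compared to $\pi_1$ and $\kappa$ absorbs the combinatorial factor into the exponent.'' You cannot do this---the lemma fixes \emph{any} constant $A>0$ in its hypothesis, so $A$ is not yours to tune. The paper instead writes the net exponent, after inserting $|\SmallNa|\leq 2\exp\{3A\,n\,b_{n,k-1}\log(1/b_{n,k-1})\}$ and $t_n^2/n^2=n\,b_{n,k-1}\log n$, as
\[
-\frac{\kappa^2\eps}{16}\,n\,b_{n,k-1}\Bigl(\log n-\tfrac{48A}{\kappa^2\eps}\log(1/b_{n,k-1})\Bigr),
\]
and argues that the parenthetical factor is at least $\tfrac12\log n$ for large $n$ (invoking $1/b_{n,k-1}=o(n)$), so that the combinatorial term is absorbed \emph{asymptotically} rather than by shrinking $A$. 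Your $k=1$ case, where the crude bound $|\SmallNa|\leq 2^n$ gives $n\log 2=o(n\log n)$, is handled the same way in the paper and is fine.

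Your final paragraph correctly identifies that the hypothesis relating $1/2^{k-1}$ to $\eps$ is what licenses replacing the derived rate $n\,b_{n,k-1}\log n=n^{1/2^{k-1}}(\log n)^{2-1/2^{k-1}}$ by the target $n\gamma_n^{1-\eps}\log n=n^{\eps}(\log n)^{2-\eps}$; the paper makes this substitution in a single line once the absorption step is complete.
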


The proof of Lemma \ref{conc-cor} is given in Appendix \ref{sec:AppA}.  We now state and prove the key lemma used to drive the induction step in the proof of Theorem \ref{thm:largegraph}:

\begin{lemma}
	\label{lem:refinement}
	Fix $\eps\in(0, \pi_1)$ and an integer $k>1$ satisfying $1/2^{k-1}<\eps$. Suppose there exist constants $A,b>0$ such that for large enough $n$,
\[
\mathbb{P}_n\left(\widehat B_{opt}(n)\in\SmallNa\right) \geq 1 - b\exp\left\{-\frac{\kappa^2\eps}{32}n\gamma_n^{1-\eps}\log n + \log4|L|\right\} := 1-b\beta_n(\eps)
\]
Then there exists a constant $A'>0$ depending only on $\pi_1$ and $\delta$ such that for large enough $n$, $\mathbb{P}_n\left(\widehat B_{opt}(n)\in\SmallNap\right)\geq 1 - (4+b)\beta_n(\eps)$. The conclusion holds for $k = 1$ if $A = \eps$.
\end{lemma}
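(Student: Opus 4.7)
The plan is to bootstrap the inductive hypothesis using two ingredients: the uniform concentration of $\widehat H$ around its population counterpart $\cH$ on $N_{n,k}(A)$ from Lemma \ref{conc-cor}, and the quadratic-growth separation of $\cH$ near its maximizer from Lemma \ref{prop:optimizer}. Introduce the events $\mathcal{E}_1 := \{\widehat B_{opt}(n) \in N_{n,k}(A)\}$ and $\mathcal{E}_2 := \{\sup_{B \in N_{n,k}(A)} |\widehat H(B,L) - \cH(B,L)| \leq 5|L|b_{n,k}\}$. By the inductive hypothesis, $\mathbb{P}_n(\mathcal{E}_1) \geq 1 - b\beta_n(\eps)$; and Lemma \ref{conc-cor}, applied to the neighborhood $N_{n,k}(A)$ at level $k$, gives $\mathbb{P}_n(\mathcal{E}_2) \geq 1 - 2\beta_n(\eps)$. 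Since both $C_{1,n}$ and $C_{2,n}$ lie trivially in $N_{n,k}(A)$, the event $\mathcal{E}_2$ automatically controls $|\widehat H - \cH|$ at the reference communities as well. (In the base case $k=1$, the choice $A = \eps$ ensures $C_{1,n} \in \cB_n(A)$, which is precisely what is needed to invoke Lemma \ref{conc-cor}.)

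On $\mathcal{E}_1 \cap \mathcal{E}_2$, the optimality $\widehat H(\widehat B_{opt}(n), L) \geq \widehat H(C_{1,n}, L)$ combined with two applications of the concentration bound yields
$$\cH(C_{1,n}, L) - \cH(\widehat B_{opt}(n), L) \;\leq\; 10|L|b_{n,k}.$$
Set $t_n := 10 b_{n,k}/(a\phi(L))$, which tends to zero as $n\to\infty$; for $n$ large enough that $t_n$ falls below the threshold in Lemma \ref{prop:optimizer}, the contrapositive of that lemma gives $\widehat B_{opt}(n) \in \cR(t_n)$. It remains to translate this into a Hamming-radius bound: using the identity $d_h(B, C_{1,n})/n = s_n(B) + \pi_1 - 2 s_n(B)\rho_n(B)$ together with the $\cR(t_n)$ constraints $|s_n(B) - \pi_1| \leq t_n$ and $1 - \rho_n(B) \leq t_n$, a direct expansion produces $d_h(\widehat B_{opt}(n), C_{1,n})/n \leq c(\pi_1) t_n$ for an explicit constant $c(\pi_1)$. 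Hence $\widehat B_{opt}(n) \in N(C_1, A' b_{n,k})$ with $A' := 10\,c(\pi_1)/(a\phi(L))$; since $\phi(L) \geq \delta^2$ by Lemma \ref{pop-score-rep}, $A'$ depends only on $\pi_1$ and $\delta$, as required. The symmetric case $\pi_1 = \pi_2$ runs analogously against both $C_{1,n}$ and $C_{2,n}$ and places $\widehat B_{opt}(n)$ in one of the two balls, thereby giving $\widehat B_{opt}(n) \in N_{n,k+1}(A')$ in all cases.

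The main obstacles are (i) choosing $n$ large enough that $t_n$ lies below the ``sufficiently small $t$'' threshold of Lemma \ref{prop:optimizer}, which is made quantitative from the constant $a$ and the bound $\phi(L) \geq \delta^2$; and (ii) the $\pi_1 = \pi_2$ subcase, where the population gap between $C_{1,n}$ and $C_{2,n}$ collapses, forcing a separate comparison of $\widehat H(\widehat B_{opt}(n), L)$ to each of $\widehat H(C_{1,n}, L)$ and $\widehat H(C_{2,n}, L)$. This doubling of the comparison invokes a second application of the concentration bound (an additional $2\beta_n(\eps)$), yielding the final probability $1 - (b+4)\beta_n(\eps)$ claimed in the lemma.
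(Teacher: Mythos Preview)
Your proposal is correct and follows essentially the same route as the paper: use Lemma~\ref{conc-cor} to pin $\widehat H$ to $H_\ast$ uniformly on $N_{n,k}(A)$, invoke Lemma~\ref{prop:optimizer} to force the optimizer into the shrunken region $\cR(t_n)$, and then convert the $(s,\rho)$-constraints into a Hamming bound via the identity $d_h(B,C_{1,n})/n = s + \pi_1 - 2s\rho$. One small remark on your probability bookkeeping: because $C_{1,n}$ and $C_{2,n}$ both lie in $N_{n,k}(A)$ (as you yourself observe), the single event $\mathcal{E}_2$ already controls $|\widehat H - H_\ast|$ at both reference sets, so no additional $2\beta_n(\eps)$ is actually needed in the $\pi_1=\pi_2$ case---your argument in fact yields $1-(b+2)\beta_n(\eps)$ throughout, which is sharper than (and certainly implies) the stated $1-(b+4)\beta_n(\eps)$; the paper arrives at the looser constant by invoking Lemma~\ref{conc-cor} a second time for $C_{1,n}$ separately.
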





\begin{proof} Assume $\pi_1<\pi_2$; the following argument may be easily adapted to the case where $\pi_1 = \pi_2$, which we explain at the end. Recall $b_{n,k}$ from Definition \ref{gamma-n}. For $c>0$, define
\[
\cR_{n,k}(c) := \big\{B\subset[n]:\max\{|s(B) - \pi_1|,\; 1 - \rho(B)\}\leq c\cdot b_{n,k}\big\},
\]
Note that sets $B\in\cR_{n,k}(c)$ have bounded Hamming distance from $C_{1,n}$, as shown by the following derivation. Writing $s = s(B)$ and $\rho = \rho(B)$, for all $B\in\cR_{n,k}(c)$ we have
\begin{align}
n^{-1}|d_h(B, C_{1,n})| & = n^{-1}\big(|B\setminus C_{1,n}| + |C_{1,n}\setminus B|\big)\nonumber\\[1em]
 &= n^{-1}\big(|B| - |B\cap C_{1,n}| + |C_{1,n}| - |B\cap C_{1,n}|\big)\nonumber\\[1em]
& = s + \pi_1 - 2\rho s \nonumber\\[1em]
&\leq s + \left(s + c\cdot b_{n,k}\right) - 2\left(1 - c\cdot b_{n,k}\right)s\nonumber\\[1em]
& = c\cdot b_{n,k} +2sc\cdot b_{n,k}\leq 3c\cdot b_{n,k}.\label{refinement-ineq-2}
\end{align}
Therefore, $\cR_{n,k}(c)\subseteq N(C_{1,n}, A'\cdot b_{n,k})\subset \SmallNap$ with $A' = 3c$. 

\begin{figure}
\centering
\includegraphics[scale = 0.55, trim = .5cm 5.5cm 0cm 4cm, clip = TRUE]{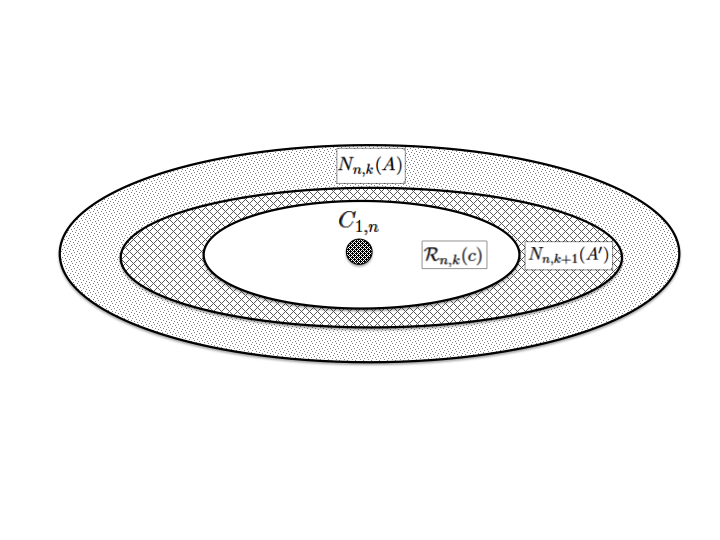}
\caption{\label{fig:proofFig} Illustration of relationship between collections of node sets.}
\end{figure}
We have assumed $\Bopt\in \SmallNa$ with high probability; our aim is to show $\Bopt\in \SmallNap$. Since $\cR_{n,k}(c)\subseteq \SmallNap$, it is sufficient to show that $\Bopt \notin \SmallNa\cap\cR_{n,k}(c)^c$ with high probability. This is illustrated by figure \ref{fig:proofFig}: since $\Bopt$ is inside the outer oval (with high probability), it is sufficient to show that it cannot be outside the inner oval. To this end, it is enough to show that, with high probability, $\widehat H(B, L)<\widehat H(C_{1,n}, L)$ for all sets $B$ in $\SmallNa\cap\cR_{n,k}(c)^c$. Note that by Lemma \ref{conc-cor},
\begin{equation}\label{eq:refinement-lem-9}
\underset{B\in\SmallNa}{\sup}\;\;\widehat H(B, L) < \cH(B, L)+5|L| b_{n,k}
\end{equation}
for large enough $n$, with probability at least $1 - 2\beta_n(\eps)$. Next, since $cb_{n,k}\rightarrow0$ as $n\rightarrow\infty$, by Lemma \ref{prop:optimizer} there exists a constant $a>0$ depending just on $\pi_1$ such that for large enough $n$, $B\in\cR_{n,k}(c)^c$ implies $\cH(B, L)<\cH(C_{1,n}) - a|L|\phi(L)cb_{n,k}$. Applying Lemma \ref{conc-cor} again, we also have $\cH(C_{1,n}, L)<\widehat H(C_{1,n}) + 5|L|b_{n,k}$ with probability at least $1 - 2\beta_n(\eps)$. Furthermore, $\phi(L)\geq\delta^2$ by Lemma \ref{pop-score-rep}. Applying these inequalities to \eqref{eq:refinement-lem-9}, we obtain
\begin{equation}\label{eq:refinement-lem-11}
\underset{B\in\SmallNa \cap \cR_{n,k}(c)^c}{\sup}\;\;\widehat H(B, L) < \widehat H(C_{1,n}, L)- a|L|\delta^2 cb_{n,k} + 10|L| b_{n,k}
\end{equation}
with probability at least $1-4\beta_n(\eps)$. With $c$ large enough, \eqref{eq:refinement-lem-11} implies that $\widehat H(B, L)<\widehat H(C_{1,n}, L)$ for all $B\in \SmallNa \cap \cR_{n,k}(c)^c$. This proves the result in the $\pi_1<\pi_2$ case.

If $\pi_1 = \pi_2$, the argument is almost identical. We instead define $\cR_{n,k}(c)$ as
\[
\cR_{n,k}(c) := \big\{B\subseteq[n]:\max\{|s(B) - \pi_1|,\;\rho(B),\;1 - \rho(B)\}\leq c \cdot b_{n,k}\big\}.
\]
A derivation analogous to that giving inequality \eqref{refinement-ineq-2} yields
\[
n^{-1}\max\{d_h(B, C_{1,n}),\;d_h(B, C_{2,n})\}\leq3c\cdot b_{n,k},
\]
which directly implies that $\cR_{n,k}(c)\subseteq\SmallNap$ with $A' = 3c$. The remainder of the argument follows unaltered.
\end{proof}

\subsubsection{Proof of Theorem \ref{thm:largegraph}}
Recall $Q_\ell(B)$ from Definition \ref{set-modularity} and let $\widehat Q_\ell(B)$ be its random version under the MSBM, as in Section \ref{subsec:score-consistency}. For any $B\subseteq[n]$, we have the inequality
\begin{align}
\left\{\widehat Q_\ell(B)\right\}_+ & \leq \frac{Y_\ell(B)}{n{|B|\choose 2}^{1/2}}\leq \frac{{|B|\choose 2}}{n{|B|\choose 2}^{1/2}} \leq \frac{|B|}{n}.\label{mod-ineq}
\end{align}
This yields the following inequality for $\widehat H(B, L)$:
\begin{align}
\widehat H(B, L) & = |L|^{-1}\left\{\left(\sum_{\ell\in[L]}Q_\ell(B)\right)_+\right\}^2\leq |L|^{-1}\left\{\sum_{\ell\in[L]}Q_\ell(B)_+\right\}^2\leq |L|^{-1}n^{-2}|B|^2.\label{H-ineq}
\end{align}
Recall that $\cB_n(\eps) := \{B\in2^{[n]}: |B|\geq \eps n\}$. Inequality \eqref{H-ineq} implies $\widehat H(B, L)\leq |L|\eps^2$ for all $B\in\cB_n(\eps)^c$. By part 2 of Lemma \ref{pop-score-rep}, $\phi(L)\geq \delta^2$. Therefore, defining $\tau := \frac{\pi_1^2}{2}(1 - \pi_1)^4\delta^2/2$,
\[
|L|\tau<|L|\phi(L)\frac{\pi_1^2}{2}(1 - \pi_1)^4 = H_\ast(B, L).
\]
Therefore, for all $B\in\cB_n(\eps)^c$, we have $\widehat H(B, L) \leq |L|\eps^2<H_\ast(C_{1,n}, L) - |L|(\tau - \eps^2)$. By Lemma \ref{conc-cor}, for large enough $n$ we therefore have
\begin{equation}\label{eq:proof-0}
\sup_{\cB_n(\eps)^c}\widehat H(B,L)< \widehat H(C_{1,n}, L) - |L|(\tau - \eps^2) + 5|L|\gamma_n^{1-\eps}
\end{equation}
with probability greater than $1-2\beta_n(\eps)$, where $\beta_n(\eps):=\exp\{-\frac{\kappa^2\eps}{32}n\gamma_n^{1-\eps}\log n + \log4|L|\}$. For any $\eps<\sqrt{\tau}$, inequality \eqref{eq:proof-0} implies $\widehat H(B, L)<\widehat H(C_{1,n}, L)$ for all $B\in\cB_n(\eps)$, and therefore $\widehat B_{opt}(n)\in\cB_n(\eps)$, with probability at least $1-2\beta_n(\eps)$. Note that $\eps<\sqrt{\tau}<\pi_1$, and $N_{n,k}(\eps) = \cB_n(\eps)$ by Definition \ref{N-ast}. Therefore, the conditions for Lemma \ref{lem:refinement} with $k = 1$ (and $A = \eps$) are satisfied. For any fixed $\eps \in(0,\eta)$ with $\eta:=\sqrt{\tau}$, we may now apply Lemma \ref{lem:refinement} recursively until $1 / 2^k\leq \eps $. This establishes that for sufficiently large $n$,
\begin{equation}\label{eq:proof-2}
\mathbb{P}_n\left(\widehat B_{opt}(n)\in\SmallNa\right) \geq 1 - (2+4k)\beta_n(\eps).
\end{equation}
By definition, $\widehat B_{opt}(n)\in\SmallNa$ implies that
\begin{equation}\label{eq:proof-3}
\mathtt{Error}(\widehat B_{opt}(n)):=\min_{C = C_1, C_2}d_h(\widehat B_{opt}(n), C) \leq A\cdot n \cdot b_{n,k}.
\end{equation}
Since $1/2^k\leq \eps$, note that
\[
n\cdot b_{n,k} = n\gamma_n^{1 - \halfk} = n\cdot n^{\halfk - 1}(\log n)^{1-\halfk} < n^\eps\log n.
\]
Combined with inequality \eqref{eq:proof-2}, this completes the proof.\qed

\subsection{Proof of Theorem \ref{layer-set-consistency}}\label{layer-set-proof}
To prove part (a), we first note that Theorem \ref{thm:largegraph} implies that on the layer set $L_1$, for any $\eps>0$, $\mathtt{Error}(\widehat B_{opt}^{(n)}) = O_p(n^\eps\log n)$. Lemma \ref{pop-score-rep} can be used to show that $\cH(B, L) = 0$ for any $L\subseteq L_0$ and any $B\subseteq[n]$. Using Lemma \ref{prop:sup-score-conc-ineq} and taking a union bound over $L_0$, it is then straightforward to show (using techniques from the proof of Theorem \ref{thm:largegraph}) that on the full layer set $[m]$, for any $\eps>0$, $\mathtt{Error}(\widehat B_{opt}) = O_p(n^\eps\log n)$. Considering now $\widehat L_{opt}^{(n)}$, observe that if $\widehat Q_\ell(B)\leq 0$, then $\widehat H(B,L)=\widehat H(B,L\setminus\{\ell\})$. This immediately implies that $\widehat L_{opt}^{(n)} = \widehat L^+$.

To prove part (b), we note that it is straightforward to show (using Lemma \ref{pop-score-rep}) that $\cH(B,L_1)\geq\cH(B, L)$ for any $L\subset[m]$, with equality if and only if $L = L_1$. Using Lemma \ref{prop:sup-score-conc-ineq} and a union bound over $[m]$ will show that $\widehat L_{opt}^{(n)} = L_1$ with high probability. Applying Theorem \ref{thm:largegraph} completes the part. Part (c) is shown similarly, with the application of Lemma \ref{pop-score-rep} showing that for any $L\subseteq L_1$ and $L'\subseteq[m]$, $\cH(B, L)\geq \cH(B, L')$, with equality if and only if $L'\subseteq L_1$.\qed

\section{The Multilayer Extraction Procedure}\label{sec:algorithm}

The Multilayer Extraction procedure is built around three operations: initialization, extraction, 
and refinement.  In the initialization stage, a family of seed vertex sets is specified.
Next an iterative extraction procedure ({\bf Extraction}) 
is applied to each of the seed sets. {\bf Extraction} alternately updates the layers 
and vertices in a vertex-layer community in a greedy fashion, improving the score at each iteration, 
until no further improvement to the score is possible. 
The family of extracted vertex-layer communities 
is then reduced using the {\bf Refinement} procedure, which ensures that the final collection of 
communities contains the extracted community with largest score, and that the pairwise overlap between any
pair of communities 
is at most $\beta$, where $\beta \in [0,1]$ is a user-defined parameter. {The importance and relevance of this parameter is discussed in Section \ref{sec:beta}}. 
We describe the Multilayer Extraction 
algorithm in more detail below. 

%
%
%

\subsection{Initialization}

For each vertex $u \in [n]$ and layer $\ell \in [m]$ let $N(u,\ell) = \{v \in [n]: \{u,v\} \in E_\ell \}$ be the set of vertices connected to $u$
in $G_\ell$.  We will refer to $N(u,\ell)$ as the neighborhood of $u$ in layer $\ell$. 
Let $\mathcal{B}_0 = \{ N(u,\ell), u \in [n], \ell \in [m] \}$ be the family of all vertex neighborhoods in the observed multilayer network 
$\mathbf{G}(m,n)$.  Multilayer Extraction uses the vertex sets in $\mathcal{B}_0$ as seed sets for identifying communities. Our choice of seed sets is motivated by \citet{gleich2012vertex}, who empirically justified the use of vertex neighborhoods as good seed sets for local detection methods seeking communities with low conductance.


\subsection{Extraction}
Given an initial vertex set, the {\bf Extraction} procedure seeks a vertex-layer community with large score. The algorithm iteratively conducts a {\it Layer Set Search} followed by a {\it Vertex Set Search}, and repeats these steps until a vertex-layer set, whose score is a local maximum, is reached. In each step of the procedure, the score of the candidate community strictly increases, and the procedure is stopped once no improvements to the score are possible. These steps are described next.

\noindent {\it Layer Set Search}: 
For a fixed vertex set $B \subseteq [n]$, {\bf Extraction} searches for the layer set $L$ that maximizes $H(B, \cdot)$ 
using a rank ordering of the layers that depends only on $B$. In particular let $Q_{\ell}(B)$ be the local set modularity of layer $\ell$ from \eqref{set-modularity}. Let $L_o$ be the layer set identified in the previous iteration of the algorithm. {We will now update the layer set $L_0\leadsto L$. This consists of the following three steps}:
\begin{enumeratei}
	\item Order the layers so that $Q_{\ell_1}(B) \geq \cdots \geq Q_{\ell_m}(B)$.
	\item Identify the smallest integer $k$ such that 
$
H(B, \{\ell_1, \ldots, \ell_k\}) \geq H(B, \{\ell_1, \ldots, \ell_k, \ell_{k+1}\}) 
$. {Write $L_p := \set{\ell_1, \ldots, \ell_k}$ for the proposed change in the layer set}.
    \item {\it If} $H(B, L_p) > H(B, L_o)$ set $L = L_p$. {\it Else} set $L = L_o$
\end{enumeratei}


\noindent In the first iteration of the algorithm (where we take $L_o = \emptyset$), 
we set $L = L_p$ in step (iii) of the search. The selected layer set $L_p$ is a local maximum for the score $H(B, \cdot)$. 

%



\noindent {\it Vertex Set Search}: 
Suppose now that we are given a vertex-layer set $(B,L)$. 
{\bf Extraction} updates $B$, one vertex at a time, in a greedy fashion, with updates depending on
the layer set $L$ and the current vertex set. In detail, for each $u \in [n]$ let
\begin{equation}
\label{eq:deltau} 
\delta_u(B,L) = 
\begin{cases} 
H(B / \{u\}, L) - H(B,L) & \mbox{if } \ u \in B\\
H(B \cup \{u\}, L) - H(B,L) & \mbox{if } \ u \notin B .
\end{cases} 
\end{equation}
Vertex Set Search iteratively updates $B$ using the following steps: 
\begin{enumeratei}
	\item Calculate $\delta_u(B,L)$ for all $u \in [n]$.  If $\delta_u(B, L) \leq 0$ for all $u \in [n]$, then stop. 
Otherwise, identify $u^* = \argmax_{u \in [n]} \delta_u(B, L)$.
\item If $u^* \in B$, then remove $u^*$ from $B$.  Otherwise, add $u^*$ to $B$.
\end{enumeratei}

At each iteration of {\bf Extraction}, the score of the updated vertex-layer set strictly increases, and the eventual convergence of this procedure to a local maximum is guaranteed as the possible search space is finite. The resulting local maxima is returned as an extracted community.

\subsection{Refinement}\label{sec:refinement}

Beginning with the $n$ vertex neighborhoods in each layer of the network,
the {\bf Extraction} procedure identifies a collection $\mathcal{C}_T = \{(B_t, L_t)\}_{t \in T}$ 
of at most $m*n$ vertex-layer communities. 
Given an overlap parameter $\beta \in [0,1]$, the family $\mathcal{C}_T$ {is} refined in a greedy fashion, via the {\bf Refinement} procedure,
to produce a subfamily $\mathcal{C}_{S}$, $S \subseteq T$,
of high-scoring vertex-layer sets having the property
that the overlap between any pair of sets is at most $\beta$. 

To quantify overlap, we specify a generalized Jaccard match score to measure overlap between two communities. 
We measure the overlap between two candidate communities $(B_{q}, L_{q})$ and $(B_{r}, L_{r})$ 
using a generalized Jaccard match score
\begin{equation}
\label{eq:match}
		J(q,r) \ = \ \frac{1}{2} \, \dfrac{|B_q \cap B_r|}{|B_q \cup B_r|} \, + \, \frac{1}{2} \, \dfrac{|L_q \cap L_r|}{|L_q \cup L_r|}.
\end{equation} 
It is easy to see that $J(q,r)$ is between 0 and 1.  Moreover, $J(q, r) = 1$ if and only if $(B_q, L_q) = (B_r, L_r)$ and $J(q, r) = 0$ if 
and only if $(B_q, L_q)$ and $(B_r, L_r)$ are disjoint.  Larger values of $J(\cdot, \cdot)$ indicate more overlap between communities.

In the first step of the procedure, {\bf Refinement} identifies and retains the community $(B_s, L_s)$ in 
$\mathcal{C}_{T}$ with the largest score and sets $S = \{s\}$. In the next step, the procedure identifies the community $(B_s, L_s)$ with largest score that satisfies $J(s, s') \leq \beta$ for all $s' \in S$. The index $s$ is then added to $S$. {\bf Refinement} continues expanding $S$ in this way until no further additions to $S$ are possible, namely when for each $s \in T$, there exists an $s' \in S$ such that $J(s, s') > \beta$. The refined collection $C_S = \{B_s, L_s\}_{s \in S}$ is returned.  

\subsubsection{Choice of $\beta$}
\label{sec:beta}
Many existing community detection algorithms have one or more tunable parameters that control the 
number and size of the communities they identify 
\citep{von2007tutorial, leskovec2008statistical, mucha2010community, lancichinetti2011finding, wilson2014testing}.
The family of communities output by Multilayer Extraction depends on the overlap parameter 
$\beta \in [0,1]$. In practice, the value of $\beta$ plays an important role in the structure of the vertex-layer communities. For instance, setting $\beta = 0$ will provide vertex-layer communities that are fully disjoint (no overlap between vertices or layers). On the other hand, 
when $\beta = 1$ the procedure outputs the full set of extracted communities, many of which may be redundant. In exploratory applications, we recommend investigating the identified communities at multiple values of $\beta$, as the structure of communities at different resolutions may provide useful insights about the network itself (see for instance \citet{leskovec2008statistical} or \citet{mucha2010community}). 

Empirically, we observe that the number of communities identified by the Multilayer Extraction procedure is non-decreasing with $\beta$, and there is typically a long interval of $\beta$ values over which the
number and identity of communities remains constant.
In practice we specify a default value of $\beta$ by analyzing the number of communities across a grid of $\beta$ between 0 and 1 in increments of size 0.01. 
For fixed $i$, let $\beta_i = (i-1)*0.01$ and let $k_i =  k(\beta_i)$ denote the number of communities identified at $\beta_i$.  
The default value $\beta'$ is the smallest $\beta$ value in the longest stable window, namely
\[
\beta' \ = \ \mbox{smallest $\beta_i$ such that } k(\beta_i) = \text{mode}(k_1, \ldots, k_{101}). 
\]
%

\section{Application Study}\label{sec:application}

In this section, we assess the performance and potential utility of the Multilayer Extraction procedure through an empirical case study of three multilayer networks, including a multilayer social network, transportation network, and collaboration network. 
We compare and contrast the performance of Multilayer Extraction with six benchmark methods:
Spectral Clustering \citep{newman2006finding}, 
Label Propagation \citep{raghavan2007near},
Fast and Greedy \citep{clauset2004finding},
Walktrap \citep{pons2005computing}, multilayer Infomap \citep{de2014identifying}, and multilayer GenLouvain \citep{mucha2010community, jutla2011generalized}. 
The multilayer Infomap and GenLouvain methods are generalized multilayer methods that can be directly applied to each network considered here. Each of the other four methods have publicly available implementations in the {\it igraph} package in {R} and in Python, and each method is a standard single-layer detection method that can handle weighted edges. We apply the first four methods to both the aggregate (weighted) network computed from the average of the layers in the analyzed multilayer network, and to each layer separately. A more detailed description of the competing methods and their parameter settings is provided in the Appendix. For this analysis and the subsequent analysis in Section \ref{sec:simulations}, we set Multlayer Extraction to identify vertex-layer communities that have a large significance score as specified by equation (\ref{set-modularity}). 



For each method we calculate a number of quantitative features, including the number and size of the identified communities, as well as the number of identified background vertices. For all competing methods, we define a \emph{background vertex} as a vertex that was placed in a trivial community, namely, a community of size one.
We also evaluate the similarity of communities identified by each method. As aggregate and layer-by-layer methods do not provide informative layer information,
we compare the vertex sets identified by each of the competing methods with those identified by 
Multilayer Extraction. To this end, consider two collections of vertex sets $\mathcal{B}, \mathcal{C} \subseteq 2^{[n]}$. Let $\text{size}(\mathcal{B})$ denote the number of vertex sets contained in the collection $\mathcal{B}$ and let $|A|$ represent the number of vertices in the vertex set $A$. Define the coverage of $\mathcal{B}$ by $\mathcal{C}$ as
\begin{equation}
\label{eq:overlap_vertex} 
\text{Co}(\mathcal{B};\mathcal{C}) = \dfrac{1}{\text{size}\left(\mathcal{B}\right)} \sum_{B \in \mathcal{B}} \max_{C \in \mathcal{C}} \left(\dfrac{|B \cap C|}{|B \cup C|} \right).
\end{equation}
The value $\text{Co}(\mathcal{B};\mathcal{C})$ quantifies the extent to which vertex sets in $\mathcal{B}$ are contained {\it in} $\mathcal{C}$.  
In general, $\text{Co}(\mathcal{B};\mathcal{C}) \neq \text{Co}(\mathcal{C};\mathcal{B})$.
The coverage value $\text{Co}(\mathcal{B};\mathcal{C})$ is between 0 and 1, with $\text{Co}(\mathcal{B};\mathcal{C}) = 1$ if and only if 
$\mathcal{B}$ is a subset of $\mathcal{C}$.

We investigate three multilayer networks of various size, sparsity, and relational types: a social network from an Austrailian computer science department \citep{han2014consistent}; an air transportation network of European airlines \citep{cardillo2013emergence}; and a collaboration network of network science authors on arXiv.org \citep{de2014identifying}. The size and edge density of each network is summarized in Table \ref{tab:mult_networks}.


\begin{table}[ht]
	\centering
	\tabcolsep = 0.11cm
	{\small
	\singlespacing
	\begin{tabular}{l c c c}
		Network & $\#$ Layers & $\#$ Vertices & Total $\#$ Edges \\ \hline
		AU-CS & 5 & 61 & 620 \\ 
		EU Air Transport & 36 & 450 & 3588 \\ 
		arXiv & 13 & 14489 & 59026
		\end{tabular}
	}
	\caption[Real Application Summary]{\small Summary of the real multilayer networks in our study. \label{tab:mult_networks}}
	\end{table}
	


\begin{table}[ht]
	\centering
	\tabcolsep = 0.11cm
	\resizebox{\textwidth}{!}{ 
	\begin{tabular}{l| c c c c| c c c c| c c c c}
		\multicolumn{1}{c}{} & \multicolumn{4}{c}{\underline{AU-CS Network}} & \multicolumn{4}{c}{\underline{EU Air Transport Network}} & \multicolumn{4}{c}{\underline{arXiv Network}} \\
		\multicolumn{1}{c}{}  & \multicolumn{4}{c}{} & \multicolumn{4}{c}{} & \multicolumn{4}{c}{} \\
		 & & $\#$ Nodes & & &  & $\#$ Nodes &  & &  & $\#$ Nodes & &\\
		 & Comm. & mean (sd) & Back. & Cov. & Comm. & mean (sd) & Back. & Cov.& Comm. & mean (sd) & Back. & Cov. \\ \hline
		 M-E & 6 & 8.7(2.3) & 11 & 1.00 & 11 & 13.1(5.0) & 358 & 1.00 & 272 &  8.2(3.5) & 12412 & 1.00\\
		 Fast A. & 5 & 12.2(2.6) & 0 & 0.68 & 8 & 52.1(54.6) & 33 & 0.10 & 1543 & 9.1(55.5) & 424 & 0.27 \\
		 Spectral A. & 7 & 8.7(5.4) & 0 & 0.54 &  8 & 52.0(39.6) & 34 & 0.10 & 1435 & 9.8(222.6) & 424 & 0.15\\
		 Walktrap A.& 6 & 10.2(3.3) & 0 & 0.75 & 15 & 22.9(46.5) & 107 & 0.15 & 2238 & 6.3(36.7) & 424 & 0.50\\
		 Label A. & 6 & 10.2(6.5) & 0 & 0.59 & 4 & 104.3(190.5) & 33 & 0.09 & 2329 & 6.0(14.4) & 424 & 0.59\\
		 Fast L. & 6 & 7.7(4.2) & 16.2 & 0.78 & 4 & 13.1(12.4) & 395 & 0.34 & 301 & 6.9(27.8) & 12428 & 0.68\\
		 Spectral L. & 6 & 8.0(5.0) & 16.4 & 0.72 &  3 & 16.1(13.9) & 398 & 0.30 & 283 & 7.2(38.1) & 12457 & 0.66\\
		 Walktrap L.& 7 & 6.6(5.1) & 16.2 & 0.74 & 5 & 10.2(12.8) & 404 & 0.34& 353 & 5.8(15.4) & 12428 & 0.76\\
		 Label L. & 5 & 9.7(9.6) & 16.2 & 0.76 & 2 & 27.8(26.8) & 395 & 0.29 & 383 & 5.4(6.7) & 12428 & 0.79\\
		 GenLouvain & 6 & 50.8(28.4) & 0 & 0.50 & 9 & 49.6(26.1) & 422 & 0.64 & 402 & 9.1(4.5) & 12101 & {0.83}\\
		 Infomap &  20 & 7.0(4.9) & 0 & 0.74 & 41 & 2.3 (2.4) & 33 & 0.21 & 3655 & 7.1(109.1) & 423 & 0.29
		 
	\end{tabular}
	}
	\caption[Summary of Real Application Results]{\small Quantitative summary of the identified communities in each of the three real applications. Shown are the number of communities (Comm.), the mean and standard deviation of the number of nodes in each community, the number of background nodes per layer (Back.) and the coverage of the M-E with the method (Cov.). Methods run on the aggregate network are followed A. and layer-by-layer methods are followed by L. \label{tab:results_mult}}
	\end{table}

Table \ref{tab:results_mult} provides a summary of the quantitative features of the communities identified by each method. For ease of discussion, we will abbreviate Multilayer Extraction by M-E in the next two sections of the manuscript, where we evaluate the performance of the method on real and simulated data.



\subsection{AU-CS Network}
Individuals interact across multiple modes: for example, two people may share a friendship, business partnership, college roommate, or a sexual relationship, to name a few. Thus multilayer network models are particularly useful for understanding social dynamics. In such a model, vertices represent the individuals under study and layers represent the type of interaction among individuals. Correspondingly, vertex-layer communities represent groups of individuals that are closely related across a subset of interaction modes. The number and type of layers in a vertex-layer community describes the strength and type of relationships that a group of individuals share.

We demonstrate the use of M-E on a social network by investigating the AU-CS network, which {describes} online and offline relationships of 61 employees of a Computer Science research department in Australia. The vertices of the 
network represent the employees in the department. The layers of the network represent five different relationships among the employees: 
{\it Facebook}, {\it leisure}, {\it work}, 
{\it co-authorship}, and {\it lunch}.

	\begin{figure}[ht]
		\centering
		\includegraphics[width = 0.8\textwidth, trim = 0cm 0cm 0cm 0cm, clip = TRUE]{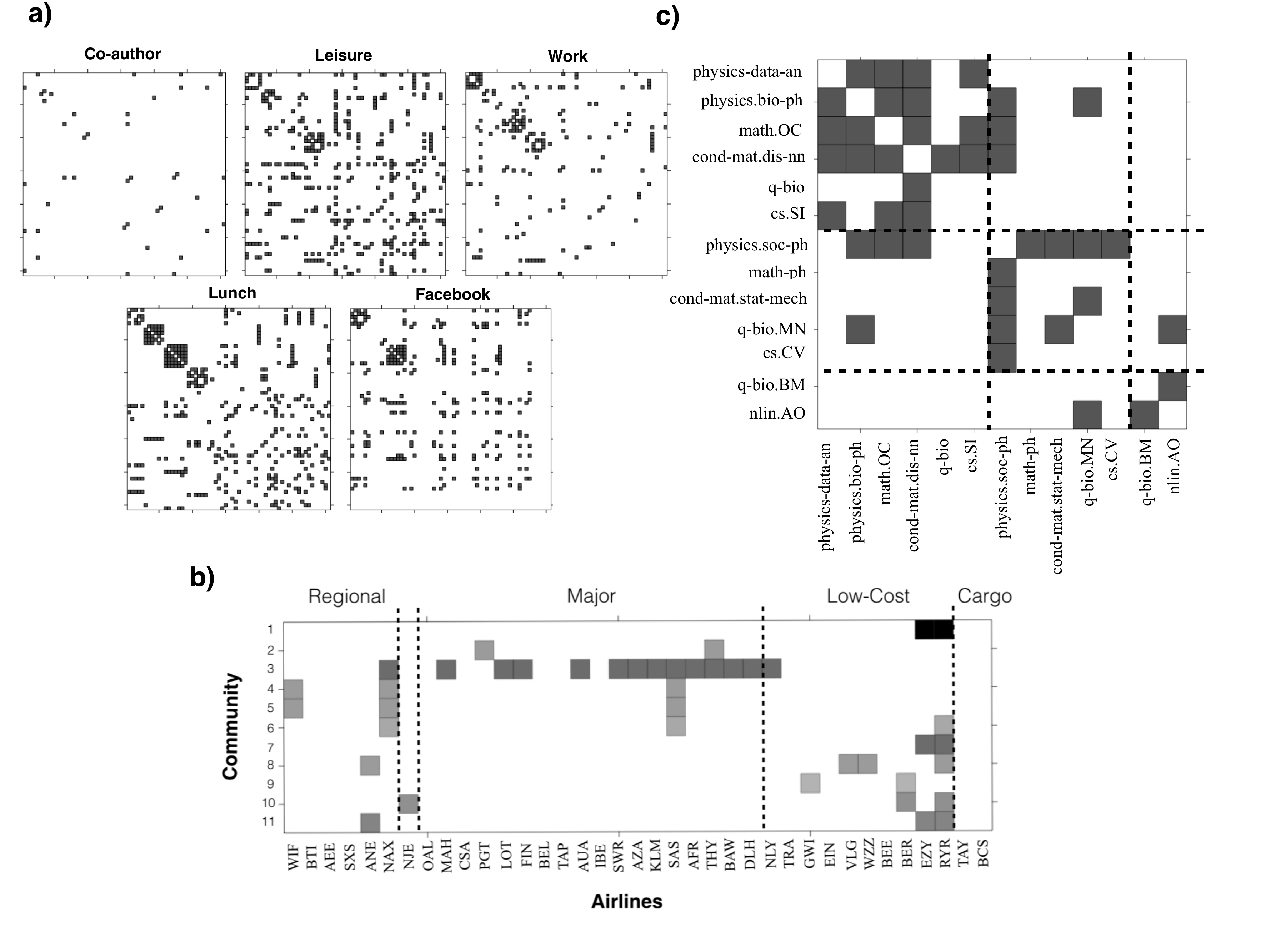}
		\caption[Multilayer Applications]{\small {\bf a)} The AU-CS multilayer network.  The vertices have been reordered based on the six communities identified by M-E.  {\bf b)} The layers of the eleven extracted significant communities identified in the EU transport network. The layers are ordered according to the type of airline. The darkness of the shaded in blocks represents the score of the identified community. {\bf c)} Adjacency matrix of layers in the arXiv network, where edges are placed between layers that were contained in one or more of the communities 
		identified by M-E. Dotted lines separate three communities of submission types that were identified using spectral clustering.}
		\label{fig:mult_applications}
		\end{figure}

\subsubsection*{Results}
M-E identified 6 non-overlapping vertex-layer communities, which are illustrated in Figure \ref{fig:mult_applications} {\bf a}. These communities reveal several interesting patterns among the individuals in the network. Both the {\it work} and {\it lunch} layers were contained in all six of the identified communities, reflecting a natural co-occurrence of work and lunch interactions among the employees. Of the competing methods, GenLouvain was the only other method to identify this co-occurrence.     
Furthermore, two of the identified communities by M-E contained the {\it leisure} and {\it Facebook} layers, both of which are social activities that sometimes extend beyond work and lunch. Thus these communities likely represent groups of employees with a stronger friendship than those that were simply colleagues. These interpretable features identified by M-E provide an example of how our method can be used to provide useful insights beyond aggregate and layer by layer methods. 


With the exception of Infomap, all of the methods identify a similar number of communities (ranging from 5 to 7). Infomap, on the other hand, identified 20 small communities in the multilayer network. The 11 background vertices identified by M-E were sparsely connected, having two or fewer connections in 3 of the layers. As seen by the coverage measure in Table \ref{tab:results_mult}, the vertex sets identified by M-E were similar to those identified by the single-layer methods as well as by Infomap. Furthermore, the communities identified by the aggregate approaches are in fact well contained in the family identified by M-E (average coverage = 0.78). In summary, the vertex sets identified by M-E reflect both the aggregate and separate layer community structure of the network, and the layer sets reveal important features about the social relationships among the employees.

\subsection{European Air Transportation Network}
Multilayer networks have been widely used to analyze transportation networks \citep{strano2015multiplex, cardillo2013emergence}. Here, vertices generally represent spatial locations (e.g., an intersection of two streets, a railroad crossing, GPS coordinates, or an airport) and layers represent transit among different modes of transportion (e.g., a car, a subway, an airline, or bus). The typical aim of multilayer analysis of transportation networks is to better understand the efficiency of transit in the analyzed location. Vertex-layer communities in transportation networks contain collections of vehicles (layers) that frequently travel along the same transit route (vertices). Vertex-layer communities reveal similarity, or even redundancy, in transportation among various modes of transportation and enable optimization of travel efficiency in the location. 

In the present example, we use M-E to analyze the European air transportation network, where vertices represent 450 airports in Europe and layers represent 37 different airlines. An edge in layer $j$ is present between two airports if airline $j$ traveled a direct flight between the two airports on June 1st, 2011. Notably here, each airline belongs to one of five classes: {\it major} (18); {\it low-cost} (10); {\it regional} (6); {\it cargo} (2); and {\it other} (1). A multiplex visualization of this network is shown in Figure \ref{fig:airport}.

	\begin{figure}[ht]
		\centering
		\includegraphics[width = 0.5\textwidth, trim = 0cm 0cm 0cm 0cm, clip = TRUE]{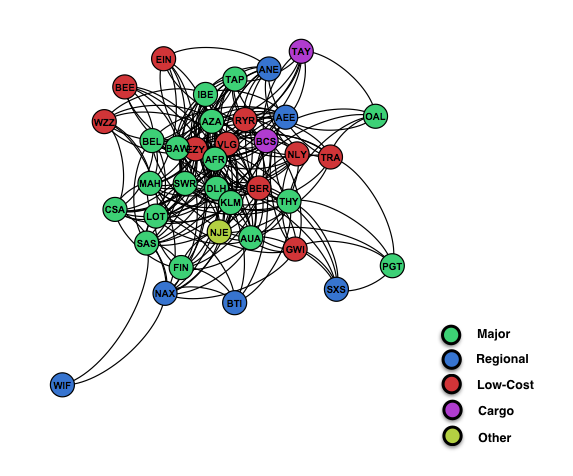}
		\caption[Airport]{\small A one-dimensional visualization of the European Air Transportation Network. Edges are placed between airlines that share at least two routes between airports.}
		\label{fig:airport}
		\end{figure}

\subsubsection*{Results}
The layer sets of the extracted M-E communities are illustrated in Figure \ref{fig:mult_applications} {\bf b}, and a summary of M-E and the competing methods is available in the second major column of Table \ref{tab:results_mult}. M-E identified 11 small communities (mean number of vertices = 13.1, mean number of layers = 3.73). This suggests that the airlines generally follow distinct routes containing a small number of unique airports. Furthermore, Figure \ref{fig:mult_applications} illustrates that the layers of each community are closely associated with airline classes. Indeed, an average of 78 $\%$ of the layers in each community belonged to the same airline class. This reflects the fact that airlines of the same class tend to have direct flights between similar airports. Together these two findings suggest that in general there is little redundancy among the travel of airlines in Europe but that airlines of a similar class tend to travel the same routes. 

Interestingly, the {\it regional} airline Norwegian Air Shuttle (NAX) and the {\it major} airline Scandanavian Airlines (SAS) appeared together in 4 unique communities. These airlines are in fact the top two air carriers in Scandanavia and fly primarily to airports 
in Norway, Sweden, Denmark, and Finland. Thus, M-E reveals that these two airlines share many of the same transportation routes despite the fact that they are different airline classes.

In comparison to the competing methods, we find that both the single-layer methods and M-E identified on the order of 400 background vertices ($\approx 89\%$), which suggests that many of the airports are not frequented by multiple airlines. This finding aligns with the fact that many airports in Europe are small and do not service multiple airline carriers. Aggregate detection approaches, as well as multilayer GenLouvain, identified a similar number of communities as M-E. In fact, the results of M-E most closely matched those of GenLouvain (Coverage = 0.64). We found that Infomap again identified the most communities (almost 4 times as many as any other method) and, like the aggregate approaches, identified few background vertices.

\subsection{arXiv Network}\label{sec:ADHD}

Our final demonstration of M-E is on the multilayer collaboration arXiv network from \citet{de2014identifying}. In a multilayer representation of a collaboration network, vertices represent researchers or other possible collaborators, and layers represent scientific fields or sub-fields under which researchers collaborated. For these applications, multilayer networks provide information about the dissemination and segmentation of scientific research, including the possible overlap of collaborative work under different scientific fields. Vertex-layer communities in collaborative networks represent groups of individuals who collaborated with one another across a subset of scientific fields. Such communities describe how differing scientific fields overlap, which collaborators are working in a similar area, as well as how to disseminate research across fields and how scientists can most easily take part in interdisciplinary collaborations.

The arXiv network that we analyze represents the authors of all arXiv submissions that contained the word ``networks'' in its title or abstract between the years 2010 and 2012. The network has 14489 vertices representing authors, and 13 layers representing the arXiv category under which the submission was placed. An edge is placed between two authors in layer $\ell$ if they co-authored a paper placed in that category. The network is sparse, with each layer having edge density less than 1.5\%.


\subsubsection*{Results}

M-E identified 272 multilayer communities, with an average of 
2.39 layers per community. The communities were small in size, suggesting that network 
science collaboration groups are relatively tightly-knit, both in number of authors and number of differing fields. In Figure \ref{fig:mult_applications} {\bf c}, we plot an adjacency 
matrix for layers whose $(i,j)$ entry is 1 if and only if layers $i$ and $j$ were contained 
in at least one multilayer community. Using the adjacency matrix, the layers of the network
were partitioned into communities using Spectral clustering. This figure identifies the existence of three active interdisciplinary working groups among the selected researchers. These results suggest two important insights. First, network researchers can identify his or her primary sub-field community and best disseminate research in this area by communicating with researchers from other fields in the same community. Second, Figure \ref{fig:mult_applications} {\bf c} illustrates clear separation of three major areas of study. By promoting cross-disciplinary efforts between each of these three major areas, the dissemination of knowledge among network scientists will likely be much more efficient.

Infomap and the aggregate approaches identify on the order of 1000 small to moderately 
sized communities (mean number of vertices between 6.04 and 9.80) with
approximately 423 background vertices. Notably, the 423 background vertices identified by each of these methods had a 0.92 match. On the other hand 
M-E, GenLouvain and the single layer approaches identify a smaller number of communities (between 272 and 402), and classify about 12 thousand 
(roughly $86\%$) of the vertices as background (of which share a match of 0.9).  
These findings suggest that the individual layers of the arXiv network have heterogeneous community structure, and that they contain many non-preferentially attached vertices. Once again, GenLouvain had the highest match with M-E (coverage = 0.83).

\section{Simulation Study}\label{sec:simulations}
 
As noted above, Multilayer Extraction has three key features:
it allows community overlap;
it identifies background; 
and it can identify communities that are present in a small subset of the available layers. 
Below we describe a simulation study that aims to evaluate the performance of M-E with
respect to these features.  The results of additional simulations are described in the Appendix.
 
In this simulation study, we are particularly interested in comparing the performance of M-E with off-the-shelf aggregate and layer-by-layer methods. We make this comparison using the layer-by-layer and aggregate 
approaches using the community detection methods described in Section \ref{sec:application}. 
Define the {\it match} between two vertex families $\mathcal{B}$ and 
$\mathcal{C}$ by
\begin{equation}
\label{eq:total_match}
M( \mathcal{B}; \mathcal{C}) 
\ = \ 
\frac{1}{2} \hskip .2pc \text{Co}( \mathcal{B} ; \mathcal{C}) 
\, + \, 
\frac{1}{2} \hskip .2pc \text{Co}( \mathcal{C} ; \mathcal{B}) ,
\end{equation}
where $\text{Co}( \mathcal{B} ; \mathcal{C} )$ is the coverage measure for vertex families from (\ref{eq:overlap_vertex}). 
The match $M( \mathcal{B}; \mathcal{C})$ is symmetric and takes values in
$[0,1]$.  In particular, 
$M( \mathcal{B}; \mathcal{C}) = 1$ if and only if  
$\mathcal{B} = \mathcal{C}$ and 
$M( \mathcal{B}; \mathcal{C}) = 0$ if and only if
$\mathcal{B}$ and $\mathcal{C}$ are disjoint.

In our simulations, 
we compute the match between the family of vertex sets identified by each method and the family of 
true simulated communities. 
For the layer-by-layer methods, we evaluate the average match of the communities identified in each layer. 
Suppose that $\mathcal{T}$ is the true family of vertex sets in a simulation and $\mathcal{D}$ is 
the family of vertex sets identified by a detection procedure of interest. 
Note that the value $\text{Co}(\mathcal{D}; \mathcal{T})$ quantifies the specificity of $\mathcal{D}$, 
while $\text{Co}(\mathcal{T}; \mathcal{D})$ quantifies its sensitivity;  
thus, $M(\mathcal{D}; \mathcal{T})$ is a quantity between 0 and 1 that summarizes both the sensitivity and specificity of the identified vertex sets $\mathcal{D}$.
The results of the simulation study are summarized in Figures \ref{fig:MSBM}, \ref{fig:persistence}, and \ref{fig:single_embedded} and discussed in more 
detail below.


\subsection{Multilayer Stochastic Block Model}\label{sec:msbm}


In the first part of the simulation study we generated multilayer stochastic block models with $m \in \{1, 5, 10, 15\}$ layers, 
$k \in \{2, 5\}$ blocks, and $n = 1000$ vertices such that each layer has the same community structure.
In more detail, each vertex is first assigned a community label $\{1,\ldots,k\}$ according to a probability mass function 
$\pi = (0.4, 0.6)$ for $k=2$ and $\pi = (0.2,0.1,0.2,0.1,0.4)$ for $k=5$.
In each layer, edges are assigned independently, based on vertex community membership, according to the probability 
matrix $P$ with entries $P(i,i) = r + 0.05$ and $P(i,j) = 0.05$ for $i \neq j$. {Here $r$ is a parameter representing connectivity strength of vertices within the same community.}
The resulting multilayer network consists of $m$ independent realizations of a 
stochastic $k$ block model with the same communities. 
For each value of $m$ and $k$ we vary $r$ from 0.00 to 0.10 in increments of 0.005. 
M-E and all other competing methods are run on ten replications of each simulation. The average match 
of each method to the true communities is given in Figure \ref{fig:MSBM}.

\subsubsection*{Results}

In the single-layer ($m=1$) setting M-E is competitive with 
the existing single-layer and multilayer methods for $r \geq 0.05$, and identifies the true communities without error 
for $r \geq 0.06$. For $m \geq 5$ M-E outperforms all competing single-layer methods 
for $r \geq 0.02$.  As the number of layers increases, M-E and the multilayer Infomap and GenLouvain methods exhibit improved performance 
across all values of $r$. For the two community block model, M-E and the other multilayer methods have comparable performance in networks with five or more layers. M-E outperforms the single-layer and multilayer methods in the $k = 5$ block model when $m > 1$. As expected, aggregate approaches perform well in this simulation, 
outperforming or matching other methods when $m \leq 5$ (results not shown). 
These results suggest that in homogeneous multilayer networks M-E can outperform or match existing methods 
when the network contains a moderate to large number of layers. 

\begin{figure}
		\centering
	\includegraphics[width = \textwidth]{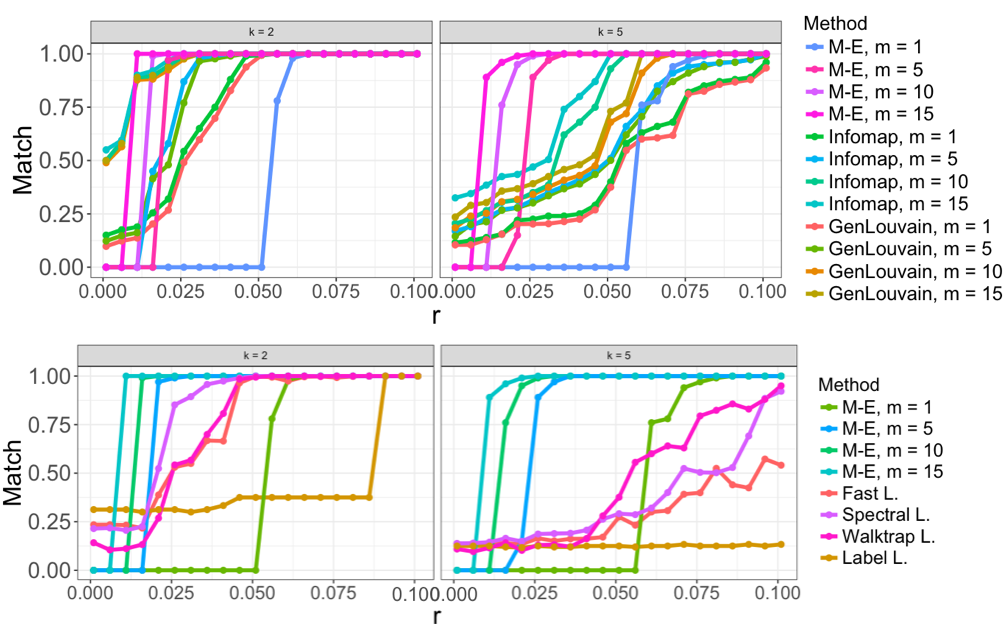}
	\caption[MSBM Simulation Results]{\small (Color) Simulation results for multilayer stochastic block model. In each plot, we report the match of the identified communities  with the true communities where the match is calculated using the match score in (\ref{eq:total_match}).} \label{fig:MSBM}
\end{figure}


\subsection{Persistence} 
In the second part of the simulation study we consider multilayer networks with heterogeneous community structure. 
We simulated networks with 50 layers and 1000 vertices. 
The first $\tau * 50$ layers follow the stochastic block model outlined in Section \ref{sec:msbm} 
with a fixed connection probability matrix $P$ having entries $P(i,i) = 0.15$ and $P(i,j) = 0.05$ for $i \neq j$. 
The remaining $(1 - \tau) * 50$ layers are independent \erdos random graphs with $p = 0.10$, so that in each 
layer every pair of vertices is connected independently with probability $0.10$.  
For each $k \in \{2, 5\}$ we vary the persistence parameter $\tau$ from 0.02 to 1 in increments of 0.02, and  
for each value of $\tau$, we run M-E as well as the competing methods on ten 
replications.  The average match of each method is reported in Figure \ref{fig:persistence}.

\subsubsection*{Results}
\begin{figure}
		\centering
	\includegraphics[width = \textwidth]{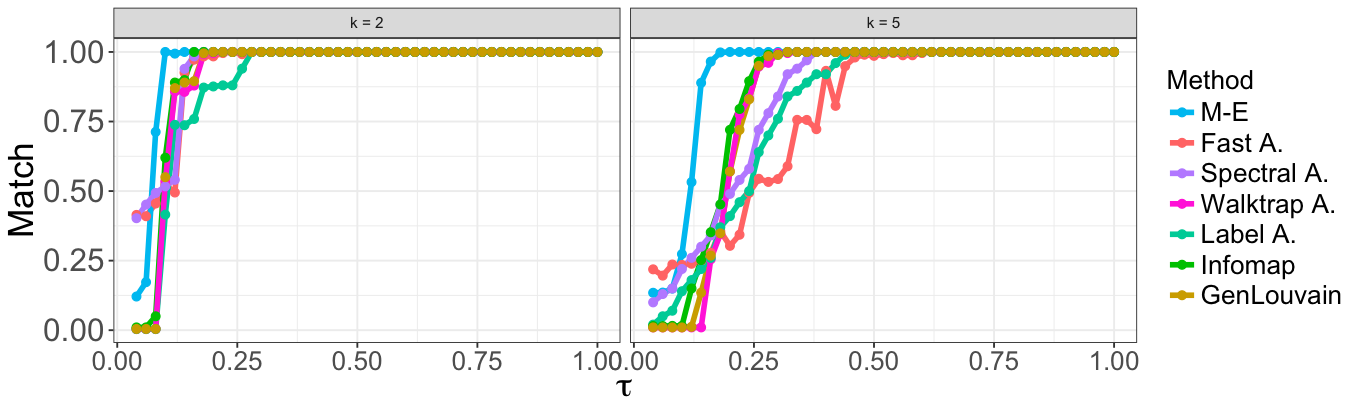}
	\caption[Persistence Simulation Results]{\small (Color) Simulation results persistence simulations. In each plot, we report the match of the identified communities with the true communities where the match is calculated using the match score in (\ref{eq:total_match}).} \label{fig:persistence}
\end{figure}

In both block model settings with $k = 2$ and 5 communities, M-E outperforms competing aggregate and multilayer methods for small values of $\tau$. At these values, aggregate methods perform poorly since the community structure in the layers with signal is hidden by the noisy \erdos layers once the layers are aggregated. Though not shown in Figure \ref{fig:single_embedded}, the layer-by-layer methods are able to correctly identify the community structure of the layers with signal. However, these methods identify on average of 4 or more non-trivial communities in each noisy layer where there is in fact no community structure present. Multilayer GenLouvain and Infomap outperform aggregate methods in this simulation, suggesting that available multilayer methods are able to better handle multilayer networks with heterogeneous layers. Whereas the noisy \erdos layers posed a challenge for both single-layer and aggregate methods, M-E never included any of these layers in an identified community. These results highlight M-E's ability to handle networks with noisy and heterogeneous layers.




\subsection{Single Embedded Communities}

We next evaluate the ability of M-E to detect a single 
embedded community in a multilayer network. 
We construct multilayer networks with $m \in \{1, 5, 10, 15\}$ layers and 1000 vertices according to the following procedure.
Each layer of the network is generated by embedding a common community of size $\gamma * 1000$ in
an \erdos random graph with connection probability $0.05$ in such a way that vertices within the community
are connected independently with probability $0.15$. 
The parameter $\gamma$ is varied between 0.01 and 0.20 in increments of 0.005; ten independent 
replications of the embedded network are generated for each $\gamma$. For each method, we calculate the coverage $C(E, \mathcal{C})$ of the true embedded community $E$ by the identified collection $\mathcal{C}$. We report the average coverage over the ten replications in Figure \ref{fig:single_embedded}.

\subsubsection*{Results}

In the single layer setting, M-E is able to correctly identify the embedded community when the 
embedded vertex set takes up approximately 11 percent ($\gamma = 0.11$) of the layer. 
As before, the performance of M-E greatly improves as the number of layers 
in the observed multilayer network increases. For example at $m =5$ and $m = 10$, the algorithm correctly 
identifies the embedded community (with at least $90\%$ match) once the community has size 
taking as little as 6 percent ($\gamma = 0.055$) of each layer. At $m = 15$, M-E correctly extracts communities with size as small as three percent of the graph in each layer. 

\begin{figure}[ht]
		\centering
	\includegraphics[width = \textwidth]{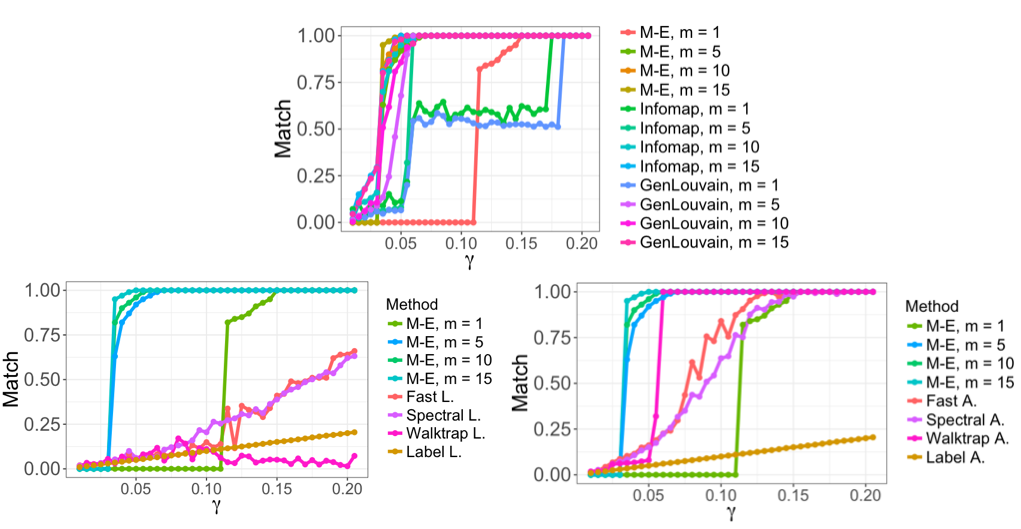} 
	\caption[Single Embedded Community Simulation Results]{\small (Color) Simulation results for single embedded simulations. In each plot, we report the match of the identified communities with the true communities where the match is calculated using the match score in (\ref{eq:total_match}).} \label{fig:single_embedded}
\end{figure}

In the lower right plot of Figure \ref{fig:single_embedded}, we illustrate the results for the aggregate methods applied to the simulated network with $m = 15$ layers. In the lower left plot of Figure \ref{fig:single_embedded}, we show the results for the layer-by-layer methods in this simulation and the upper part of the figure show the results for the multilayer methods. For $m \geq 5$ M-E outperforms all of the competing aggregate methods. In addition, M-E outperforms every layer-by-layer method for all $m$. Multilayer GenLouvain and Infomap have comparable performance to M-E across all $m$ and $\gamma$. These results emphasize the extraction cabilities of M-E and show that the procedure, contrary to aggregate and single-layer competing methods, is able to detect small embedded communities in the presence of background vertices.

\section{Discussion}\label{sec:discussion}

Multilayer networks have been profitably applied to a number of complex systems, and community detection is a valuable exploratory technique to analyze and understand networks. In many applications, the community structure of a multilayer network will differ from layer to layer due to heterogeneity. In such networks, actors interact in tightly connected groups that persist across only a subset of layers in the network. In this paper we have introduced and evaluated the first community detection method to address multilayer networks with heterogeneous communities, Multilayer Extraction. The core of Multilayer Extraction is a significance score that quantifies the connection strength of a vertex-layer set by comparing connectivity in the observed network to that of a fixed degree random network. 

Empirically, we showed that Multilayer Extraction is able to successfully identify communities in multilayer networks with overlapping, disjoint, and heterogeneous community structure. Our numerical applications revealed that Multilayer Extraction can identify relevant insights about complex relational systems beyond the capabilities of existing detection methods. We also established asymptotic consistency of the global maximizer of the Multilayer Extraction score under the multilayer stochastic block model. We note that in practice the Multilayer Extraction procedure can identify overlapping community structure in multilayer networks; however, our consistency results apply to a multilayer model with non-overlapping communities. Future work should investigate consistency results like the ones derived here on a multilayer model with overlapping communities, such as a multilayer generalization of the mixed membership block model introduced in \citet{airoldi2008mixed}. We expect that similar theory will hold in an overlapping model and that the theoretical techniques utilized here can be used to prove such results.

The Multilayer Extraction method provides a first step in understanding and analyzing multilayer networks with heterogeneous community structure. This work encourages several interesting areas of future research. For instance, the techniques used in this paper could be applied, with suitable models, to networks having ordered layers (e.g. temporal networks), as well as to networks with weighted edges such as the recent work done in \citet{palowitch2016continuous, wilson2017stochastic}. Furthermore, one could incorporate both node- and layer-based covariates in the null model to handle exogenous features of the multilayer network. Finally, it would be interesting to evaluate the consistency of Multilayer Extraction in multilayer networks in the high dimensional setting where the number of vertex-layer communities grows with the number of vertices. 

\section*{Acknowledgements}
The authors gratefully acknowledge Peter Mucha for helpful discussions and suggestions for this work. We also thank the associate editor and the two anonymous referees whose comments greatly improved the content and exposition of the paper. The work of JDW was supported in part by NSF grants DMS-1105581, DMS-1310002, and SES grant 1357622. The work of JP was supported in part by NIH/NIMH grant R01-MH101819-01. The work of SB was supported in part by NSF grants DMS-1105581, DMS-1310002, DMS-160683, DMS-161307, SES grant 1357622, and ARO grant W911NF-17-1-0010. The work of ABN was supported in part by NSF DMS-1310002, NSF DMS-1613072, as well as NIH HG009125-01, NIH MH101819-01. The content is solely the responsibility of the authors and does not necessarily represent the official views of the National Institutes of Health.

\appendix
\section{Proofs of Lemmas from Section \ref{subsec:score-consistency}} \label{sec:AppA}




\subsection{Proof of Lemma \ref{pop-score-rep}}
It is easy to show that for any $2\times 2$ symmetric matrix $A$ and 2-vectors $\mathbf x$, $\mathbf y$,
\[
(\mathbf x^TA\mathbf x)(\mathbf y^TA\mathbf y) - (\mathbf x^TA\mathbf y)^2 = (\mathbf x_1\mathbf y_2 - \mathbf x_2\mathbf y_1)^2\text{det}(A).
\]
Fix $B\subseteq[n]$ and let $s,\rho$, and $v$ correspond to $B$, as in Definition \ref{rho}. Then for any $\ell\in[L]$, using the fact that $\kappa_\ell := \pi^TP_\ell\pi$ and the identity above, we have
\begin{align*}
\Bv^tP_\ell \Bv - \dfrac{(\pi^t P_\ell \pi)^2}{\kappa_\ell} & = \frac{\kappa_\ell \Bv^tP_\ell \Bv}{\kappa_\ell} - \dfrac{(\Bv^t P_\ell \pi)^2}{\kappa_\ell} = \dfrac{(\pi^tP_\ell\pi)(\Bv^tP_\ell \Bv) - (\Bv^t P_\ell \pi)^2}{\kappa_\ell}\\[1em]
& = (\pi_1(1-\rho) - \pi_2\rho)^2\dfrac{\det{P_\ell}}{\kappa_\ell} = (\pi_1 - \rho)^2\dfrac{\det{P_\ell}}{\kappa_\ell}.
\end{align*}
Recall that $q_\ell(B) :=\frac{s}{\sqrt{2}}\left(\Bv^tP_\ell \Bv - (\pi^t P_\ell \pi)^2/\kappa_\ell\right)$ and $H_\ast(B, L) = |L|^{-1}\left(\sum_\ell q_\ell(B)\right)^2$. Part 1 follows by summation over $L$. For part 2, note that\\
\noindent $\pi_1^2P_\ell(1,1) + \pi_2^2P(2,2)\geq 2\pi_1\pi_2\sqrt{P_\ell(1,1)P_\ell(2,2)}$. Therefore,
\begin{align*}
\kappa_\ell & = \pi_1^2P_\ell(1,1) + 2\pi_1\pi_2P_\ell(1,2) + \pi_2^2P(2,2)\\[1em]
& \geq 2\pi_1\pi_2\left(\sqrt{P_\ell(1,1)P_\ell(2,2)} + P_\ell(1,2)\right)\\[1em]
& \geq 2\pi_1\pi_2\left(\sqrt{P_\ell(1,1)P_\ell(2,2)} + P_\ell(1,2)\right)\left(\sqrt{P_\ell(1,1)P_\ell(2,2)} - P_\ell(1,2)\right)\\[1em]
& = 2\pi_1\pi_2\delta \geq \pi_1\delta.
\end{align*}
Thus $\delta\leq\frac{\det P_\ell}{\kappa_\ell}\leq \frac{1}{\pi_1\delta}$. Part 2 follows.\qed




\subsection{Proof of Lemma \ref{prop:optimizer}}
Define $g:2^{[n]}\mapsto\mathbb{R}$ by $g(B) := \frac{s(B)}{2}(\rho(B) - \pi_1)^4$. Recall the function $\phi(L)$ defined in Lemma \ref{pop-score-rep}. Note that part 1 of Lemma \ref{pop-score-rep} implies $H_\ast(B, L) = |L|\phi(L)g(B)$. It is therefore sufficient to show that there exists a constant $a>0$ such that for sufficiently small $t$, $B\in\cR(t)^c$ implies $g(B) < g(C_{1,n}) - at$. We will show this separately for the $\pi_1<\pi_2$ and $\pi_1=\pi_2$ cases.\\

\underline{\emph{Part 1} ($\pi_1<\pi_2$)}: Define the intervals $I_1 := [0,\pi_1]$, $I_2:=(\pi_1,\pi_2]$, and $I_3:=(\pi_2,1]$. We trisect $2^{[n]}$, the domain of $g$, with the collections $\cD_{i,n} := \{B\subseteq[n]:s(B)\in I_i\}$, for $i = 1,2,3$. We will prove that the inequality $g(B) < g(C_{1,n}) - at$ holds for all $B\in\cR(t)$ on each of those collections. We will continually rely on the fact that $B\in\cR(t)$ implies at least one of the inequalities (I) $|s(B)-\pi_1|>t$ or (II) $1-\rho(B)<t$ is true.

Suppose $B\in\cR(t)^c\cap\cD_{1,n}$ and inequality (I) is true. Then $s(B)<\pi_1-t$, and
\begin{align}
g(B) := \frac{s(B)^2}{2}(\rho(B) - \pi_1)^4 & \leq \frac{s(B)^2}{2}(1 - \pi_1)^4 \text{ (since $\pi_1\leq 1/2$)}\nonumber\\
& < \frac{(\pi_1 - t)^2}{2}(1 - \pi_1)^4= \frac{\pi_1^2}{2}(1 - \pi_1)^4 - 2t(1 - \pi_1)^4 + o(t)\nonumber\\
& < \frac{\pi_1^2}{2}(1 - \pi_1)^4 - t(1 - \pi_1)^4 = g(C_{1,n}) - t(1 - \pi_1)^4\label{zero-ineq}
\end{align}
for sufficiently small $t$. If inequality (II) is true, then
\begin{align*}
(\rho(B) - \pi_1)^4 \leq \max\{(1-t-\pi_1)^4,\pi_1^4\} & = \max\{(\pi_2 - t)^4,\pi_1^4\} = (\pi_2 - t)^4
\end{align*}
for sufficiently small $t$, as $\pi_1<\pi_2$. Therefore,
\begin{equation}\label{zero-ineq2}
g(B) \leq \frac{\pi_1^4}{2}(\pi_2 - t)^4 = \frac{\pi_1^4}{2}\pi_2^4 - 4\pi_2^3t + o(t) < g(C_{1,n}) - 2\pi_2^3t
\end{equation}
for sufficiently small $t$. Thus for all $B\in\cR(t)^c\cap\cD_{1,n}$, $g(B)<g(C_{1,n}) - a_1t$ with $a_1 = \min\{(1 - \pi_1)^4,2\pi_2^3\}$.

Suppose $B\in\cR(t)^c\cap\cD_{2,n}$ and inequality (I) is true. Then $s(B)>\pi_1 + t$. Note that $0\leq\rho(B)|B|\leq |C_{1,n}|$, yielding the useful inequality
\begin{equation}\label{useful-ineq1}
0\leq \rho(B)\leq \pi_1/s(B).
\end{equation}
Subtracting through by $\pi_1$ gives 
\[
(\rho(B) - \pi_1)^4 \leq\max\{\pi_1^4,\pi_1^4(1/s(B) - 1)^4\} = \pi_1^4(1/s(B) - 1)^4.
\]
Therefore,
\begin{align}\label{D2-1-ineq1}
g(B)\leq\frac{s(B)^2}{2}\pi_1^4(1/s(B) - 1)^4 &= \frac{\pi_1^4}{2}(1/\sqrt{s(B)} - \sqrt{s(B)})^4 \nonumber \\[1em]
& < \frac{\pi_1^4}{2}(1/\sqrt{\pi_1 + t} - \sqrt{\pi_1 + t})^4,
\end{align}
since $F(x) := (1/\sqrt{x} - \sqrt{x})^4$ is decreasing on $(0,1]$, and $s(B)>\pi_1 + t$. Note that 
\begin{equation}\label{taylor-derivative1}
\frac{d}{dt}\left(\frac{1}{\sqrt{\pi_1 + t}} - \sqrt{\pi_1 + t}\right)^4 = -3\left(\frac{1}{\sqrt{\pi_1 + t}} - \sqrt{\pi_1 + t}\right)^3\left(\frac{1}{2(\pi_1+t)^{3/2}} + \frac{1}{2\sqrt{\pi_1 + t}}\right).
\end{equation}
By Taylor's theorem, this implies that
\[
(1/\sqrt{\pi_1 + t} - \sqrt{\pi_1 + t})^4 = (1/\sqrt{\pi_1} - \sqrt{\pi_1})^4 - a_2t + o(t) < (1/\sqrt{\pi_1} - \sqrt{\pi_1})^4 - a_2t/2
\]
for sufficiently small $t$, where $a_2$ is the right-hand-side of \eqref{taylor-derivative1} at $t = 0$. Note further that $(1/\sqrt{\pi_1} - \sqrt{\pi_1})^4 = (\pi_2/\sqrt{\pi_1})^4 = \pi_2^4/\pi_1^2$. Putting these facts together with inequality \eqref{D2-1-ineq1}, we obtain
\begin{equation}\label{D2-1-ineq-final}
g(B) < \frac{\pi_1^4}{2}\frac{\pi_2^4}{\pi_1^2} - a_2t/2 = \frac{\pi_1}{2}\pi_2^4 - a_2t/2 = g(C_{1,n}) - a_2t/2.
\end{equation}
If inequality (II) is true, $\rho(B) < 1 - t$. If $\rho(B)\leq\pi_1$, $(\rho(B) - \pi_1)^4$ is maximized when $\rho(B) = 0$, so that, since $s(B)\leq\pi_2$,
\begin{equation}\label{D2-2-ineq-final}
g(B) \leq \frac{\pi_2^2}{2}\pi_1^4 = g(C_{1,n}) + \frac{\pi_2^2}{2}\pi_1^4 - \frac{\pi_1^2}{2}\pi_2^4 = g(C_{1,n}) + \frac{\pi_1^2\pi_2^2}{2}(\pi_2^2 - \pi_1^2) < g(C_{1,n}) - t
\end{equation}
for sufficiently small $t$, since $\pi_1$ is fixed. If $\rho(B) >\pi_1$, note that inequality \eqref{useful-ineq1} implies $s(B)\leq\pi_1/s(B)$. Therefore,
\begin{align}\label{D2-3-ineq1}
g(B)\leq\frac{\pi_1^2}{2\rho(B)^2}(\rho(B) - \pi_1)^4 &= \frac{\pi_1^2}{2}\left(\sqrt{\rho(B)} - \pi_1/\sqrt{\rho(B)}\right)^4\nonumber \\[1em]
& < \frac{\pi_1^2}{2}\left(\sqrt{1-t} - \pi_1/\sqrt{1-t}\right)^4
\end{align}
since $G(x):=(\sqrt{x} - \pi_1/\sqrt{x})^4$ is increasing on $(\pi_1,1]$. A similar Taylor expansion argument to that yielding inequality \eqref{D2-1-ineq-final} yields, for a constant $a_3$ depending only on $\pi_1$,
\begin{equation}\label{D2-3-ineq-final}
g(B) < \frac{\pi_1^2}{2}(1 - \pi_1)^4 - a_3t/2 = g(C_{1,n}) - a_3t/2,
\end{equation}
for sufficiently small $t$. Pulling together inequalities \eqref{D2-1-ineq-final}, \eqref{D2-2-ineq-final}, and \eqref{D2-3-ineq-final}, we have that for all $B\in\cR(t)^c\cap\cD_{1,n}$, $g(B) < g(C_{1,n}) -a_4$ with $a_4:=\min\{a_2/2,1,a_3/2\}$.

Suppose $B\in\cR(t)^c\cap\cD_{3,n}$. Note that $|B|-|C_{2,n}|\leq |B\cap C_{1,n}|\leq|C_{1,n}|$. Dividing through by $|B|$ yields the useful inequality
\begin{equation}\label{useful-ineq2}
1 - \pi_2/s(B)\leq\rho(B)\leq \pi_1/s(B).
\end{equation}
Subtracting inequality \eqref{useful-ineq2} by $\pi_1$ gives 
\[
\pi_2(1 - 1/s(B))\leq \rho(B) - \pi_1^4 \leq \pi_1(1/s(B) - 1).
\]
Since $\pi_1<\pi_2$, this implies that $(\rho(B) - \pi_1)^4 \leq \pi_2^4(1 - 1/s(B))^4$. Therefore,
\begin{equation}\label{D3-1-ineq1}
g(B)\leq\frac{s(B)^2}{2}\pi_2^4(1/s(B) - 1)^4 = \frac{\pi_2^4}{2}\left(1/\sqrt{s(B)} - \sqrt{s(B)}\right)^4<\frac{\pi_2^4}{2}(1/\sqrt{\pi_2} - \sqrt{\pi_2})^4,
\end{equation}
since $F(x):=(1\sqrt{x} - \sqrt{x})^4$ is decreasing on $I_3:=(\pi_2, 1]$ and $s(B)\in I_3$. Note that $\sqrt{\pi_2} - 1 / \sqrt{\pi_2}=-\pi_1/\sqrt{\pi_2}$. Therefore,
\[
g(B) < \frac{\pi_2^4}{2}\frac{\pi_1^4}{\pi_2^2} = \frac{\pi_2^2}{2}(0 - \pi_1)^4 = g(C_{2,n}) < g(C_{1,n}) - t
\]
for $t$ sufficiently small. Thus, for $a:=\min\{a_1, a_4, 1\}$, for sufficiently small $t$ we have $g(B) < g(C_{1,n}) - at$ whenever $B\in\cR(t)$. This completes the proof in the case $\pi_1 < \pi_2$.\\

\underline{\emph{Part 2} ($\pi_1=\pi_2$)}: Recall that when $\pi_1 = \pi_2$ we define $\cR(t)$ by
\[
\cR(t) := \big\{B\subseteq[n]: \max\{|s(B) - \pi_1|,\;\rho(B),\; 1  - \rho(B)\} \leq t\big\}.
\]
Hence we will use the fact that $B\in\cR(t)$ implies at least one of the inequalities (I) $|s(B)-\pi_1|>t$ or (II) $t<\rho(B)<1-t$ is true. Define the intervals $I_1 := [0,\pi_1]$, $I_2:=(\pi_1,1]$. We bisect $2^{[n]}$, the domain of $g$, with the collections $\cD_{i,n} := \{B\subseteq[n]:s(B)\in I_i\}$, for $i = 1,2$. We will prove that the inequality $g(B) < g(C_{1,n}) - at$ holds for all $B\in\cR(t)$ on each of those collections. 

Suppose $B\in\cR(t)^c\cap\cD_{1,n}$ and inequality (I) is true. Then the same derivation yielding inequality \eqref{zero-ineq} gives $g(B)<g(C_{1,n}) - t(1 - \pi_1)^4$ for sufficiently small $t$. If inequality (II) is true, then
\begin{align*}
(\rho(B) - \pi_1)^4 \leq \max\{(1-t-\pi_1)^4,(\pi_1-t)^4\} = \max\{(\pi_2-t)^4,(\pi_1-t)^4\} = (\pi_2 - t)^4,
\end{align*}
since $\pi_1 = \pi_2$. Therefore, inequality \eqref{zero-ineq2} remains intact. Both inequalities hold on $I_2$ as well, for the roles of $\pi_1$ and $\pi_2$ may be interchanged, and the derivations treated symmetrically. This completes the proof in the case $\pi_1 = \pi_2$.\qed




\subsection{Proof of Lemma \ref{prop:sup-score-conc-ineq}}
Recall the definitions of set modularity and population set modularity from Definitions \ref{set-modularity} and \ref{pop-mod}. Define $W := \sum_{\ell\in[L]}\widehat Q_\ell(B)$ and $w := \sum_{\ell\in[L]}\cQ_\ell(B)$. Note that by Part 1 of Lemma \ref{pop-score-rep}, $q_\ell(B)\geq0$ regardless of $B$, a fact which will allow the application of Lemma \ref{pos-sum-fact} in what follows. We have $\widehat H(B, L) = |L|^{-1}W_+^2$, $\cH(B, L) = |L|^{-1}w^2$, and for any $B$ such that $|B|\geq n\eps$,
\begin{align*}
& \mathbb{P}_n\left(\big|\widehat H(B, L) - \cH(B, L)\big|>\dfrac{4|L|t}{n^2} + \frac{52|L|}{\kappa n}\right) = \;\mathbb{P}_n\left(\big|W_+^2 - w^2|>\dfrac{4|L|^2t}{n^2} + \frac{52|L|^2}{\kappa n}\right)\\[1em]
& \leq \;\mathbb{P}_n\left(\max_{\ell\in[L]}\big|\widehat Q_\ell(B) - \cQ_\ell(B)\big|>\dfrac{t}{n^2} + \frac{13}{\kappa n}\right) \leq \;4|L|\exp\left(-\kappa^2 \dfrac{\eps t^2}{16n^2}\right)
\end{align*}
for large enough $t > 0$, where the first inequality follows from Lemma \ref{pos-sum-fact} for large enough $n$, and the second inequality follows from Lemma \ref{lem:sup-score-conc-ineq0} and a union bound. Applying a union bound over sets $B\in\cB_n$ yields the result.\qed




\subsection{Proof of Lemma \ref{conc-cor}}
Assume first that $k > 1$. By definition, $B\in \SmallNa$ implies that at least one of $d_h(B, C_1)\leq A\cdot n\cdot b_{n,k-1}$ or $d_h(B, C_2)\leq A\cdot n\cdot b_{n,k-1}$ is true. Suppose the first inequality holds. Since $d_h(B, C_1) = |B\setminus C_1| + |C_1\setminus B|$, we have the inequality
\begin{align*}
\big||B| - n\pi_1\big| = \big||B| - |C_{1}|\big| & \leq \big||B| - |B\cap C_{1}| - |C_{1}| + |B\cap C_{1}|\big| \leq \big||B| - |B\cap C_{1}|\big|\\[1em]
& + \big||C_{1}| - |B\cap C_{1}|\big| = |B\setminus C_{1}| + |C_{1} \setminus B|\leq A\cdot n\cdot b_{n,k-1}.
\end{align*}
Alternatively, if $d_h(B, C_2)\leq A\cdot n\cdot b_{n,k-1}$, we have the same bound for $\big||B| - n\pi_2\big|$. Therefore, since $\pi_1\leq\pi_2$, $B\in \SmallNa$ implies that $|B|\geq n\pi_1-A\cdot n\cdot b_{n,k-1}$. Since $b_{n,k-1} = o(1)$ as $n\rightarrow\infty$ and $\eps<\pi_1$, this implies that for large enough $n$, $\SmallNa\subseteq\cB_n(\eps)$. By Lemma \ref{prop:sup-score-conc-ineq}, therefore, for large enough $n$, we have
\begin{align*}
& \mathbb{P}_n\left(\sup_{\SmallNa}\Big|\widehat{H}(B, L) - \cH(B, L)\Big| > \dfrac{4|L|t}{n^2} + \frac{52|L|}{\kappa n}\right)\leq 4|L||\SmallNa| \exp\left(-\kappa^2\dfrac{\eps t^2}{16n^2}\right)\addtocounter{equation}{1}\tag{\theequation}\label{eq:refinement-lem-1}
\end{align*} 
for all $t>0$. We now bound the right-hand side of inequality \eqref{eq:refinement-lem-1} with $t$ replaced by $t_n: = n^{1 + \halfk}(\log n)^{1 -\halfk}$. Note that 
\[
\frac{t_n^2}{n^2} \ =\ \frac{1}{n^2}n^{2+\halfkMO}(\log n)^{2-\halfkMO} \ =\ n\cdot n^{\halfkMO-1}(\log n)^{1-\halfkMO}\log n \ =\ n\cdot b_{n,k-1}\log n.
\]
Furthermore, by Corollary \ref{special_set_size} (see Appendix \ref{technical-results}) we have $|\SmallNa|\leq 2\exp\left[3A\cdot n\cdot b_{n,k-1}\log\left(1/b_{n,k-1}\right)\right]$. These facts yield the bound
\begin{align*}
|\SmallNa| \exp\left(-\kappa^2\dfrac{\eps t_n^2}{16n^2}\right) & \leq 2\exp\left\{-\kappa^2\frac{\eps}{16}n\cdot b_{n,k-1}\left(\log n - \frac{16}{\kappa^2\eps}3A\log\left(1/b_{n,k-1}\right)\right)\right\}\\[1em]
& \leq 2\exp\left(-\kappa^2\frac{\eps}{32}n\cdot b_{n,k-1}\log n\right)\text{ (for large $n$, since $1/b_{n,k-1} = o(n)$)}\\[1em]
& < 2\exp\left( -\kappa^2\frac{\eps}{32}n\gamma_n^{1-\eps}\log n\right),
\end{align*}
where the final inequality follows from the choice of $k$ satisfying $\halfkMO<\eps$. Therefore,
\begin{equation}\label{eq:refinement-lem-2}
4|L||\SmallNa|\exp\left(-\kappa^2\dfrac{\eps t_n^2}{16n^2}\right) \leq 2\exp\left\{-\frac{\kappa^2\eps}{32}n\gamma_n^{1-\eps}\log n + O(\log|L|)\right\}
\end{equation}
for large enough $n$. Notice now that $t_n/n^2 = b_{n,k}$ vanishes slower than $1/n$, and is therefore the leading order term in the expression $\frac{4|L|t_n}{n^2}+\frac{52|L|}{\kappa n}$ (see equation \ref{eq:refinement-lem-1}). Hence for large enough $n$ we have $\frac{4|L|t_n}{n^2}+\frac{52|L|}{\kappa n}\leq 5|L|b_{n,k}$. Combining this observation with lines \eqref{eq:refinement-lem-1} and \eqref{eq:refinement-lem-2} proves the result in the case $k > 1$.

If $k = 1$, assume $A = \eps$. By definition, then (see Definition \ref{N-ast}), $N_{n,k}(A) = \cB_n(\eps)$. Returning to inequality \eqref{eq:refinement-lem-1}, we note that $\log|\cB_n(\eps)| = O(n)$, and thus we can derive the bound \eqref{eq:refinement-lem-2} with the same choice of $t_n := n^{1 + \halfk}(\log n)^{1 -\halfk} = n\sqrt{n\log n}$. The rest of the argument goes through unaltered.\qed





\section{Technical Results}\label{technical-results}



		\begin{lemma}
			\label{fact:close-sets-complexity}
		      Fix $\pi_1\in[0, 1]$. For each $n$, let $C_1\subseteq[n]$ be an index set of size $\lfloor n\pi_1 \rfloor$. Let $C_2 := [n]\setminus C_1$. Let $\gamma_n\in[0,1]$ be a sequence such that $\gamma_n \rightarrow 0$ and $n\gamma_n\rightarrow\infty$. Then for large enough $n$,
		      \[
		      |N(C_1,\gamma_n)| \leq \exp\{3n\gamma_n\log(1/\gamma_n)\}.
		      \]
\end{lemma}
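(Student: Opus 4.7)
The plan is to reduce $|N(C_1,\gamma_n)|$ to a partial sum of binomial coefficients and then apply the standard bound $\sum_{k\leq m}\binom{n}{k}\leq (en/m)^m$. The key observation is that any $B\in N(C_1,\gamma_n)$ is obtained from $C_1$ by toggling the membership of at most $\lfloor n\gamma_n\rfloor$ elements of $[n]$ (since $d_h(B,C_1) = |B\triangle C_1|$). Indexing such sets by the subset of toggled positions gives the identity
\[
|N(C_1,\gamma_n)| \;=\; \sum_{k=0}^{\lfloor n\gamma_n\rfloor}\binom{n}{k}.
\]

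Once this is in place, I would apply the classical bound $\sum_{k=0}^{m}\binom{n}{k}\leq (en/m)^m$, valid for all $m\leq n/2$. Setting $m=\lfloor n\gamma_n\rfloor$ and noting that $\gamma_n\to 0$ ensures $m\leq n/2$ for large $n$, together with $n\gamma_n\to\infty$ ensuring $m\geq 1$, this yields
\[
|N(C_1,\gamma_n)| \;\leq\; \left(\frac{en}{\lfloor n\gamma_n\rfloor}\right)^{\lfloor n\gamma_n\rfloor} \;\leq\; \left(\frac{2e}{\gamma_n}\right)^{n\gamma_n},
\]
where the second step absorbs the floor via $\lfloor n\gamma_n\rfloor \geq n\gamma_n/2$ for large $n$. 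Taking logarithms gives the bound $n\gamma_n(\log(2e) + \log(1/\gamma_n))$.

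The last step is to absorb the additive constant into the $\log(1/\gamma_n)$ factor using $\gamma_n\to 0$: for all $n$ large enough, $\log(1/\gamma_n)\geq \log(2e)/2$, so $\log(2e)+\log(1/\gamma_n)\leq 3\log(1/\gamma_n)$, which produces the claimed bound $\exp\{3n\gamma_n\log(1/\gamma_n)\}$.

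There is no serious obstacle here; the only care required is bookkeeping of what ``large enough $n$'' means, namely enough that (i) $\gamma_n\leq 1/2$, (ii) $n\gamma_n\geq 1$ so the floor is nonzero, and (iii) $\log(1/\gamma_n)\geq \log(2e)/2$ so the constant $3$ suffices. All three are guaranteed by the assumptions $\gamma_n\to 0$ and $n\gamma_n\to\infty$.
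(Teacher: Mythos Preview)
Your proof is correct and in fact cleaner than the paper's own argument. Both proofs ultimately reduce to the bound $\binom{n}{k}\leq (en/k)^k$, but you get there more directly: you observe that the map $B\mapsto B\triangle C_1$ is a bijection between $\{B:d_h(B,C_1)=k\}$ and the $k$-subsets of $[n]$, yielding immediately $|N(C_1,\gamma_n)|=\sum_{k\leq n\gamma_n}\binom{n}{k}$, and then apply the standard partial-sum bound $\sum_{k\leq m}\binom{n}{k}\leq (en/m)^m$. The paper instead decomposes each $B$ into $B\cap C_1$ and $B\cap C_2$, counts the boundary $\partial N(C_1,k/n)$ by summing over the split of $k$ between removals from $C_1$ and additions from $C_2$, and then separately establishes (via a geometric-series argument) the doubling bound $\sum_{m\leq K}\binom{N}{m}<2\binom{N}{K}$ before invoking $\binom{N}{K}\leq(Ne/K)^K$. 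Your route avoids this detour entirely; the paper's decomposition buys nothing here, since by Vandermonde the split recombines to $\binom{n}{k}$ anyway. The only minor remark is that your floor-handling can be tightened: since $x\mapsto(en/x)^x$ is increasing on $(0,n)$, one has directly $(en/\lfloor n\gamma_n\rfloor)^{\lfloor n\gamma_n\rfloor}\leq(e/\gamma_n)^{n\gamma_n}$, so the factor of $2$ is unnecessary---but your version is of course also valid.
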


\begin{proof}
Define the boundary of a neighborhood of $C\subseteq[n]$ by
\[
\partial N(C,r) := \{B\subseteq[n]:d_h(B,C) = \lfloor nr\rfloor\}.
\]
Note that any $B\subseteq[n]$ may be written as the disjoint union $B = \{C_2\cap B\}\cup\{C_1\cap B\}$. Since $C_1\cap B = C_1\setminus\{C_1\setminus B\}$, for fixed $k\in[n]$ it follows that each set $B\in \partial N(C,k/n)$ is uniquely identified with choices of $|C_2\cap B|$ indices from $C_2$ and $|C_1\setminus B|$ indices from $C_1$ such that
\[|B\cap C_2| + |C_1\setminus B| = |B \setminus C_1| + |C_1\setminus B| = d_h(B, C_1) = k.
\]
Therefore, we have the equality
\begin{equation}\label{perimeter-eq}
|\partial N(C_1,k)|
\;=\;
\sum_{m = 0}^{k}\left\{{|C_2|\choose m} + {|C_1|\choose k - m}\right\}.
\end{equation}
Note that for positive integers $K, N$ with $K < N/2$, properties of the geometric series yield the following bound:
\begin{align}
{N\choose K}^{-1}\sum_{m=0}^K{N\choose m} & = \sum_{m = 0}^K\frac{(N-K)!K!}{(N-m)!m!} = \sum_{m = 0}^K\prod_{j = m + 1}^{K}\frac{j}{N - j + 1}\nonumber\\[1em]
& < \sum_{m = 0}^K\left(\frac{K}{N - K + 1}\right)^m < \frac{N - (K - 1)}{N - (2K - 1)}.\label{geoseries}
\end{align}
For sufficiently small $K/N$, the right-hand side of inequality \eqref{geoseries} is less than 2, and thus $\sum_{m=0}^K{N\choose m}< 2{N\choose K}$ if $K\ll N$. We apply this inequality to equation \eqref{perimeter-eq}. Choose $n$ large enough so that $n\gamma_n<\frac{1}{2}\min\{|C_1|, |C_2|\}$, which is possible since $\gamma_n\rightarrow0$. Then for fixed $k\leq n\gamma_n$, we have that $|\partial N(C_1,k)| < 2\left[{|C_2|\choose k} + {|C_1|\choose k}\right]$ for large enough $n$. By another application of the inequality derived from \eqref{geoseries}, using the fact that $n\gamma_n = o(n)$, we therefore obtain
\begin{align*}
|N(C_1, \gamma_n)| = \sum_{k = 0}^{\lfloor n\gamma_n\rfloor} |\partial N(C_1, k)| & < \sum_{k = 0}^{\lfloor n\gamma_n\rfloor} 2\left\{{|C_2|\choose k} + {|C_1|\choose k}\right\}\\[1em]
& < 4\left\{{|C_2|\choose \lfloor n\gamma_n\rfloor} + {|C_1|\choose \lfloor n\gamma_n\rfloor}\right\}\leq 8{n\choose \lfloor n\gamma_n\rfloor}.
\end{align*}
As ${N\choose K}\leq\left(\frac{N\cdot e}{K}\right)^K$, we have
\[
|N(C_1, \gamma_n)| \leq \exp\left(\log(8) + n\gamma_n\left\{\log(e) + \log(1 / \gamma_n)\right\}\right) \leq \exp(3n\gamma_n\log(1/\gamma_n))
\]
for large enough $n$, since $1/\gamma_n\rightarrow\infty$.
\end{proof}




Here we give a short Corollary to Lemma \ref{fact:close-sets-complexity} which directly serves the proof of Lemma \ref{conc-cor}. Recall $\SmallNa$ from Definition \ref{N-ast} in Section \ref{thm2-lemmas}. 
\begin{corollary}\label{special_set_size}
Fix an integer $k > 1$. For large enough $n$,
\[
|\SmallNa|\leq 2\exp\left[3A\cdot n \cdot b_{n,k-1} \log\left(1/b_{n,k-1}\right)\right].
\]
\end{corollary}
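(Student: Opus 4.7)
The plan is a direct reduction to Lemma \ref{fact:close-sets-complexity}. Definition \ref{N-ast} gives, for $k > 1$, $N_{n,k}(A) = N(C_1, A\cdot b_{n,k-1}) \cup N(C_2, A\cdot b_{n,k-1})$, so by subadditivity of cardinality,
\[
|N_{n,k}(A)| \;\leq\; 2\max_{i\in\{1,2\}}|N(C_i, A\cdot b_{n,k-1})|.
\]
It therefore suffices to bound $|N(C_i, A\cdot b_{n,k-1})|$ for each $i$ and pick up a factor of $2$ at the end.

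I would apply Lemma \ref{fact:close-sets-complexity} with its generic radius parameter set to $A\cdot b_{n,k-1}$. The required hypotheses are easily verified: since $k>1$ we have $b_{n,k-1}=(\log n/n)^{1-2^{-(k-1)}}\to 0$, hence $A\cdot b_{n,k-1}\to 0$; and $n\cdot A\cdot b_{n,k-1} = A\,n^{2^{-(k-1)}}(\log n)^{1-2^{-(k-1)}}\to\infty$. Rather than quoting the lemma's closed-form bound $\exp\{3n\gamma_n\log(1/\gamma_n)\}$ directly (which becomes slightly awkward since $\log(1/(A\cdot b_{n,k-1}))=\log(1/A)+\log(1/b_{n,k-1})$ may exceed $\log(1/b_{n,k-1})$ when $A<1$), I would instead invoke the intermediate step $|N(C_i,\gamma_n)|\leq 8(e/\gamma_n)^{n\gamma_n}$ that appears inside its proof. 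Taking logarithms and substituting $\gamma_n=A\cdot b_{n,k-1}$ gives
\[
\log|N(C_i, A\cdot b_{n,k-1})| \;\leq\; \log 8 \;+\; n A\cdot b_{n,k-1}\bigl(1 + \log(1/A) + \log(1/b_{n,k-1})\bigr).
\]

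The last step is pure asymptotic bookkeeping. The term $n A\cdot b_{n,k-1}\log(1/b_{n,k-1})$ diverges to $+\infty$, while $\log 8$, $n A\cdot b_{n,k-1}$, and $n A\cdot b_{n,k-1}\log(1/A)$ are either constants or scalar multiples of $n A\cdot b_{n,k-1}$ alone, and hence are of strictly lower order. Thus, for $n$ large enough,
\[
\log 8 \;+\; n A\cdot b_{n,k-1}\bigl(1 + \log(1/A)\bigr) \;\leq\; 2\,n A\cdot b_{n,k-1}\log(1/b_{n,k-1}),
\]
yielding $|N(C_i, A\cdot b_{n,k-1})| \leq \exp\{3An\cdot b_{n,k-1}\log(1/b_{n,k-1})\}$ and, after multiplication by the union-bound factor of $2$, the claimed bound on $|N_{n,k}(A)|$. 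I foresee no real obstacle; the only subtlety is that the naive substitution $\gamma_n=A\cdot b_{n,k-1}$ into the statement of Lemma \ref{fact:close-sets-complexity} produces a $\log(1/A)$ term that must be absorbed, but this absorption is automatic because the dominant term in the exponent diverges.
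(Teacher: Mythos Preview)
Your proposal is correct and follows the same approach as the paper, which simply states that the corollary follows from a direct application of Lemma~\ref{fact:close-sets-complexity} to $N(C_1, A\cdot b_{n,k-1})$ and $N(C_2, A\cdot b_{n,k-1})$. Your treatment is in fact more careful than the paper's: you explicitly flag and absorb the extra $\log(1/A)$ term that arises when substituting $\gamma_n = A\cdot b_{n,k-1}$ into the lemma, a detail the paper leaves implicit.
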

\begin{proof} The corollary follows from a direct application of Lemma \ref{fact:close-sets-complexity} to $N(C_1, A\cdot b_{n,k-1})$ and $N(C_2, A\cdot b_{n,k-1})$.
\end{proof}




\begin{lemma}\label{pos-sum-fact}
Let $x_1, \ldots,x_k\in(0,1)$ be fixed and let $X_1,\ldots,X_k$ be arbitrary random variables. Define $W:=\sum_iX_i$ and $w:=\sum_ix_i$. Then for $t$ sufficiently small, $\mathbb{P}(|W_+^2 - w^2|>4k^2t) \leq \mathbb{P}(\max_i|X_i - x_i|>t)$.
\end{lemma}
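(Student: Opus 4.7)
The plan is to prove the complementary deterministic statement: on the event $E_t := \{\max_i |X_i - x_i| \leq t\}$, we have $|W_+^2 - w^2| \leq 4k^2 t$ for all sufficiently small $t$. Once this is established, the probabilistic conclusion follows by contrapositive, since $\{|W_+^2 - w^2| > 4k^2 t\} \subseteq E_t^c$.

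To set up the argument, I would first observe that on $E_t$, the triangle inequality gives $|W - w| \leq \sum_i |X_i - x_i| \leq kt$, and that $w \in (0, k)$ since each $x_i \in (0,1)$. I would then split into two cases according to the sign of $W$. In the case $W \geq 0$, we have $W_+ = W$, so the factorization $W_+^2 - w^2 = (W - w)(W + w)$ yields
\[
|W_+^2 - w^2| \leq |W-w| \cdot (|W| + w) \leq kt \cdot (2w + kt) \leq kt(2k + kt) = 2k^2 t + k^2 t^2.
\]
For $t \leq 2$, we have $k^2 t^2 \leq 2k^2 t$, so $|W_+^2 - w^2| \leq 4k^2 t$. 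In the case $W < 0$, we have $W_+ = 0$, so $|W_+^2 - w^2| = w^2$. But $W < 0 < w$ together with $|W - w| \leq kt$ forces $w \leq kt$, so $w^2 \leq k^2 t^2 \leq 4k^2 t$ whenever $t \leq 4$.

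Combining the two cases, both bounds hold simultaneously when $t$ is small enough (say $t \leq 2$), giving the deterministic inclusion $E_t \subseteq \{|W_+^2 - w^2| \leq 4k^2 t\}$, from which the lemma follows. The only minor subtlety is the case $W < 0$, where the naive factorization fails and one must instead exploit that the hypothesis forces $w$ itself to be of order $t$; this is the main conceptual step, though it is short.
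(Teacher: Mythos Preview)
Your proof is correct and follows essentially the same route as the paper's: establish the deterministic inclusion $E_t \subseteq \{|W_+^2 - w^2| \leq 4k^2t\}$ for small $t$ and conclude by containment. The only difference is that the paper takes $t < \min_i x_i$, which forces each $X_i > 0$ on $E_t$ and hence $W_+ = W$ automatically, eliminating the need for your case split on the sign of $W$; your treatment of the $W<0$ case is correct but becomes vacuous under this tighter restriction.
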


\begin{proof}
Define $D_i := |X_i - x_i|$ and fix $t<\min_ix_i$. Then if $\max_iD_i\leq t$, all $X_i$'s will be positive, and thus $W_+ = W$ and $|W-w|\leq kt$, by the triangle inequality. Therefore $\max_i D_i\leq t$ implies that 
\begin{equation}\label{S-ineq}
|W_+^2 - w^2| = |(W-w)^2 + 2w(W-w)|\leq k^2t^2 + 2wkt\leq k^2t^2 + 2k^2t.
\end{equation}
Thus by the law of total probability, we have
\[
\mathbb{P}(|W_+^2 - w^2|>4k^2t) \leq \mathbb{P}(\{|W_+^2 - w^2|>4k^2t\}\cap\{\max_i D_i\leq t\}) + \mathbb{P}(\max_iD_i>t).
\]
Inequality \eqref{S-ineq} implies that for sufficiently small $t$, the first probability on the right-hand side above is equal to 0. The result follows.
\end{proof}

In what follows we state and prove Lemma \ref{lem:sup-score-conc-ineq0}, a concentration inequality for the modularity of a node set (see Definition \ref{set-modularity}) from a single-layer SBM with $n$ nodes and two communities. We first give a few short facts about the 2-community SBM. For for all results that follow, let $s, \rho$, and $v$ (see Definition \ref{rho}) correspond to the fixed set $B\subseteq[n]$ in each result (though sometimes we will make explicit the dependence on $B$). Define a matrix $V$ by $V(i,j) := P(i,j)(1 - P(i,j))$ for $i = 1, 2$, where $P$ is the probability matrix associated with the 2-block SBM.




\begin{lemma}\label{expv-YBDB}
Consider a single-layer SBM with $n>1$ nodes, two communities, and parameters $P$ and $\pi_1$. Fix a node set $B\subseteq[n]$ with $|B|\geq\alpha n$ for some $\alpha\in(0, 1)$. Then
\begin{enumerate}
\item $\left|\expv(Y(B)) - {|B|\choose 2}v^tPv\right| \leq 3|B|/2$
\item $\left|\expv\left(\sum_{u\in B}\widehat d(u)\right) - |B|nv^tP\pi\right| \leq |B|$
\item $\text{Var}\left(\sum_{u\in B}\widehat d(u)\right) \leq 9|B|n$
\end{enumerate}
\end{lemma}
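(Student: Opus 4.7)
The three parts of the lemma are routine computations on the 2-community single-layer SBM, with the bounds arising from comparing integer community counts to their continuous (probability) counterparts. Throughout, set $n_1 := |B \cap C_1|$ and $n_2 := |B \cap C_2|$, so that $n_1 + n_2 = |B|$, $v = (\rho, 1-\rho) = (n_1/|B|, n_2/|B|)$, and $|B|v = (n_1, n_2)^t$. My plan is to handle each part via a direct algebraic identity, then apply trivial bounds ($P(i,j) \in [0,1]$ and $v^t P v \leq 1$) to the leftover terms.

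\textbf{Part 1.} Since $\expv(Y(B)) = \sum_{u<v\in B} P(c_u, c_v)$, partitioning the pairs $\{u,v\}$ by community yields
\[
\expv(Y(B)) \;=\; \tbinom{n_1}{2} P(1,1) + n_1 n_2 P(1,2) + \tbinom{n_2}{2} P(2,2).
\]
I will rewrite this as $\tfrac{1}{2}\bigl(n_1^2 P(1,1) + 2n_1 n_2 P(1,2) + n_2^2 P(2,2)\bigr) - \tfrac{1}{2}\bigl(n_1 P(1,1) + n_2 P(2,2)\bigr) = \tfrac{|B|^2}{2} v^t P v - \tfrac{1}{2}(n_1 P(1,1) + n_2 P(2,2))$. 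Subtracting $\tbinom{|B|}{2} v^t P v = \tfrac{|B|(|B|-1)}{2} v^t P v$ cancels the leading order and leaves $\tfrac{|B|}{2} v^t P v - \tfrac{1}{2}(n_1 P(1,1) + n_2 P(2,2))$. Each term has absolute value at most $|B|/2$, giving the bound $|B| \leq 3|B|/2$.

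\textbf{Part 2.} For $u \in C_j$, $\expv(\widehat d(u)) = (N_j - 1) P(j,j) + N_{3-j} P(j, 3-j)$ where $N_j := |C_j|$. Summing over $u \in B$ (with $n_1$ terms of type $C_1$ and $n_2$ of type $C_2$) yields, after regrouping,
\[
\sum_{u \in B} \expv(\widehat d(u)) \;=\; |B| v^t P N - (n_1 P(1,1) + n_2 P(2,2)),
\]
with $N = (N_1, N_2)^t$. Meanwhile $|B| n v^t P \pi = |B| v^t P(n\pi)$. Using the approximation $N_j = n\pi_j$ (the ceiling in $N_1 = \lceil n \pi_1 \rceil$ introduces only a constant error, which I will note is absorbed into the slack between the bound $|B|$ and the loosest inequality), the difference collapses to $-(n_1 P(1,1) + n_2 P(2,2))$, whose magnitude is at most $n_1 + n_2 = |B|$.

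\textbf{Part 3.} This is the only place where true probabilistic structure enters, but the argument is direct. Writing
\[
\sum_{u \in B} \widehat d(u) \;=\; 2 \sum_{u < v \in B} X(u,v) + \sum_{u \in B,\, v \notin B} X(u,v) \;=:\; 2 Y(B) + Z(B),
\]
I will observe that $Y(B)$ and $Z(B)$ are sums over disjoint sets of the independent Bernoulli edge variables, so they are independent, and $\var(\sum \widehat d(u)) = 4 \var(Y(B)) + \var(Z(B))$. Bounding each Bernoulli variance by $1$ gives $\var(Y(B)) \leq \binom{|B|}{2} \leq |B|^2/2 \leq |B|n/2$ and $\var(Z(B)) \leq |B|(n - |B|) \leq |B| n$, so the total is $\leq 3|B|n \leq 9 |B| n$.

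The only part requiring any subtlety is the bookkeeping in Part 2 around the ceiling in $|C_1|$; the constant slack between $|B|$ and my derived bound comfortably absorbs it. Parts 1 and 3 are purely mechanical.
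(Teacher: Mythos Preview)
Your argument matches the paper's almost exactly. Parts 1 and 2 are the same computation: write $\expv(Y(B))$ (resp.\ $\sum_u \expv(\widehat d(u))$) as a quadratic form in $(n_1,n_2)$, pull out the $v^tPv$ (resp.\ $v^tP\pi$) leading term, and bound the leftover linear piece by $|B|$. The paper, like you, silently takes $|C_j| = n\pi_j$; your remark that the ceiling error is ``absorbed into the slack'' is slightly off, since your derived bound in Part 2 is already exactly $|B|$ with no room to spare---but this is a wrinkle the paper also ignores, and it is harmless downstream where only $O(|B|)$ is needed.

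In Part 3 you use the same decomposition $\sum_{u\in B}\widehat d(u) = 2Y(B) + Z(B)$ with $Y(B)$ and $Z(B)$ independent, but you bound each Bernoulli variance crudely by $1$ to get $3|B|n$, whereas the paper tracks the variance matrix $V(i,j)=P(i,j)(1-P(i,j))$ and an analogue of Part 1 to arrive at $9|B|n$. Your route is shorter and gives a better constant; the paper's extra bookkeeping buys nothing here.
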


\begin{proof} For part 1, note that by definition,
\(\expv\left(Y(B)\right) = \sum_{u,v\in B:u<v}\mathbb{P}\left((u,v)\in\widehat E\right) = \frac{1}{2}\sum_{u\ne v:u,v\in B}\mathbb{P}\left((u,v)\in\widehat E\right)\)
The right-hand sum can be expressed the sum of the entries of a $2\times 2$ symmetric block matrix with zeroes on the diagonal. In this matrix, the upper diagonal block is of size $|B\cap C_1|$ with off-diagonal entries equal to $P(1,1)$. The lower diagonal block is of size $|B\cap C_2|$ with off-diagonal entries equal to $P(2,2)$. The off-diagonal blocks have entries equal to $P(1,2)$. Therefore, summing over blocks and accounting for the zero diagonal, we have
\begin{align*}
\expv\left(Y(B)\right) & = \frac{1}{2}\Big(|B\cap C_1|^2P(1,1) + |B\cap C_1||B\cap C_2|P(1,2) + |B\cap C_2|^2 P(2,2)\Big)\\[1em]
&\;\;\; - \frac{1}{2}\Big(|B\cap C_1|P(1,1) +|B\cap C_2|P(2,2)\Big).
\end{align*}
By dividing and multiplying by $|B|^2$ and collapsing cross-products, we get
\begin{align*}
\expv\left(Y(B)\right) & = \frac{|B|^2}{2}\left(v^tPv - \frac{\rho P(1,1) + (1 - \rho) P(2,2)}{|B|}\right)\\[1em]
& = {|B|\choose 2}\left(1 + \frac{1}{|B|-1}\right)\left(v^tPv - \frac{\rho P(1,1) + (1 - \rho) P(2,2)}{|B|}\right)\\[1em]
& = {|B|\choose 2}\left(v^tPv - \frac{\rho P(1,1) + (1 - \rho) P(1,2)}{|B|} + \frac{v^tPv}{|B| - 1} - \frac{\rho P(1,1) + (1 - \rho) P(2,2)}{|B|(|B| - 1)}\right).
\end{align*}
Part 1 follows by carrying out the multiplication by ${|B|\choose 2}$ in the last expression.

For part 2, let $P(\cdot,i)$ denote the $i$-th column of $P$. Note that $\expv \left(\widehat{d}(u)\right) = n\pi^TP(\cdot, c_u) - P(c_u,c_u)$, with $c_u\in\{1,2\}$ denoting the community index of $u$. Therefore,
\begin{align*}
\expv\left(\sum_{u\in B}\widehat d(u)\right) & = \sum_{u\in B}\expv\left(\widehat d(u)\right) = \sum_{u\in B\cap C_1}\expv\left(\widehat d(u)\right) + \sum_{u\in B\cap C_2}\expv\left(\widehat d(u)\right)\\[1em]
& = |B\cap C_1|\left(n\pi^TP(\cdot, 1) - P(1,1)\right) + |B\cap C_2|\left(n\pi^TP(\cdot, 2) - P(2,2)\right)\\[1em]
& = |B|\left(n\rho\pi^TP(\cdot,1) + n(1-\rho)\pi^TP(\cdot, 2) - \rho P(1,1) - (1 - \rho) P(2,2) \right)\\[1em]
& = |B|nv^tP\pi - |B|\big(\rho P(1,1) + (1 - \rho) P(2,2)\big),
\end{align*}
which completes part 2.

Finally, for part 3, we have
\begin{equation}\label{sbm-fact-3}
\text{Var}\left(\sum_{u\in B}\widehat d(u)\right) = \text{Var}\left(2Y(B)\right) + \sum_{u,v:u\in B, v\in B^C}\text{Var}\left(\widehat X(u,v)\right).
\end{equation}
We address these two terms separately. For the first term, a calculation analogous to that from part 1 yields that $|\text{Var}\left(Y(B)\right) - {|B|\choose 2}v^tVv| \leq 3|B|/2$. Defining $\bar v:=(\rho(B^C),1 - \rho(B^C))^t$, it is easy to show that $\sum_{u,v:u\in B, v\in B^C}\text{Var}\left(\widehat X(u,v)\right) = |B||B^C|v^tV\bar v$, which is simply the sum of variances of all edge indicators for edges from $B$ to $B^C$. Applying these observations to equation \eqref{sbm-fact-3}, we have
\begin{align*}
\text{Var}\left(\sum_{u\in B}\widehat d(u)\right) & \leq 4{|B|\choose 2}v^tVv +  12|B|/2 + |B||B^C|v^TV\bar v\\[1em]
& \leq |B|\big(2(|B|-1)v^tVv + 6 + |B^C|v^tV\bar v\big)\leq9|B|n.
\end{align*}
\end{proof}




\begin{lemma}\label{E-conc}
Under a single-layer SBM with $n>1$ nodes, two communities, and parameters $P$ and $\pi_1$, define $\kappa:=\pi^TP\pi$. Then for large enough $n$, $\mathbb{P}\left(\big|2|\widehat E| - n^2\kappa\big| > t + 4n\right)\leq 2\exp\left\{-\frac{t^2}{n^2}\right\}$ for any $t>0$.
\end{lemma}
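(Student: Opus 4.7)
The plan is to use the triangle inequality to split the claim into two pieces: (a) an expectation estimate showing that $\mathbb{E}[2|\widehat E|]$ lies within $4n$ of the target value $n^2\kappa$, and (b) a Hoeffding-type concentration bound for $2|\widehat E|$ around its own mean. Combining these gives the desired tail bound.

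For step (a), I would apply part~2 of Lemma~\ref{expv-YBDB} with $B=[n]$. Since $\sum_{u\in[n]}\widehat d(u)=2|\widehat E|$ and the empirical proportion vector $v=v([n])=(|C_1|/n,\,|C_2|/n)$ differs from $\pi$ only through the rounding in $|C_1|=\lceil\pi_1 n\rceil$, we have $\|v-\pi\|_\infty\le 1/n$. The lemma then gives $\bigl|\mathbb{E}[2|\widehat E|]-n^2\,v^{T}P\pi\bigr|\le n$, while a short calculation (using that $v-\pi$ has the form $(e,-e)$ with $|e|\le1/n$ and that the entries of $P\pi$ lie in $[0,1]$) shows $|n^2 v^{T}P\pi-n^2\kappa|\le n$. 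Hence $|\mathbb{E}[2|\widehat E|]-n^2\kappa|\le 2n$, which fits comfortably inside the $4n$ slack allowed by the statement.

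For step (b), $|\widehat E|=\sum_{u<v}\widehat X(u,v)$ is a sum of ${n\choose 2}$ independent Bernoulli indicators (edge draws across distinct unordered pairs are independent under the SBM), each taking values in $[0,1]$. Hoeffding's inequality applied to $2|\widehat E|$ gives
\[
\mathbb{P}\Bigl(\bigl|2|\widehat E|-\mathbb{E}[2|\widehat E|]\bigr|>t\Bigr)\le 2\exp\!\left(-\frac{t^{2}}{n(n-1)}\right)\le 2\exp\!\left(-\frac{t^{2}}{n^{2}}\right).
\]

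Finally, by the triangle inequality, the event $\bigl\{\bigl|2|\widehat E|-n^2\kappa\bigr|>t+4n\bigr\}$ is contained in $\bigl\{\bigl|2|\widehat E|-\mathbb{E}[2|\widehat E|]\bigr|>t+4n-2n\bigr\}\subseteq\bigl\{\bigl|2|\widehat E|-\mathbb{E}[2|\widehat E|]\bigr|>t\bigr\}$, so the probability of the former is bounded by the Hoeffding bound above. This is essentially a routine concentration argument; the only minor technicality is tracking the $O(1)$ rounding error from $|C_1|=\lceil\pi_1 n\rceil$, which is easily absorbed into the $4n$ slack, and I do not anticipate any significant obstacle.
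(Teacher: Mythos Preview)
Your proposal is correct and follows essentially the same approach as the paper: establish that $\bigl|\mathbb{E}[2|\widehat E|]-n^2\kappa\bigr|\le 2n$ and then apply Hoeffding's inequality. The only minor difference is that the paper invokes part~1 of Lemma~\ref{expv-YBDB} (writing $|\widehat E|=Y([n])$) rather than part~2 via the degree sum identity, and the paper does not explicitly isolate the rounding error in $v$ versus $\pi$; your treatment is slightly more careful on that point but otherwise identical in spirit.
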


\begin{proof} Note that $|\widehat E| = Y([n])$. Thus part 1 of Lemma \ref{expv-YBDB} with $B = [n]$ yields $\left|\expv(|\widehat E|) - {n\choose 2}\kappa\right| \leq 3n/2$ for large enough $n$. As $n^2/2 = {n\choose 2} + n/2$, by the triangle inequality,
\(
\left|\expv(|\widehat E|) - \frac{n^2}{2}\kappa\right| \leq \left|\expv(|\widehat E|) - {n\choose 2}\kappa\right| + \frac{n}{2}\leq 2n
\)
Thus for any $t>0$, Hoeffding's inequality gives
\begin{align*}
\mathbb{P}\left(\left|2\widehat E - n^2\kappa\right|>t + 4n\right) & \leq \mathbb{P}\left(\left|2\widehat E - n^2\kappa\right|>t + 2\left|\expv(|\widehat E|) - \frac{n^2}{2}\kappa\right|\right)\\[1em]
& \leq \mathbb{P}\left(\left|2\widehat E - 2\expv(|\widehat E|)\right|>t\right)\leq 2\exp\left\{-2\frac{t^2}{4{n\choose 2}}\right\}\leq 2\exp\{-t^2/n^2\}.
\end{align*}
\end{proof}




\begin{lemma}
	\label{lem:sup-score-conc-ineq0}
Consider a single-layer 2-block SBM having $n>1$ nodes and parameters $P$ and $\pi$. Fix $\alpha \in(0,1)$ and $B\subseteq[n]$ such that $|B|\geq \alpha n$. Then for large enough $n$ we have
	\begin{equation}\label{eq:conc-sqrt}
		\mathbb{P}_n\left(\Big|\widehat{Q}(B) - \cQ(B)\Big| > \dfrac{t}{n^2} + \frac{8}{\kappa n}\right) \leq 4 \exp\left(-\dfrac{\kappa^2\alpha t^2}{16n^2}\right)
		\end{equation} 
for any $t>0$.

	\end{lemma}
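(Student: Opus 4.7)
The plan is to decompose $\widehat Q(B)-\cQ(B)$ into three deviation pieces, each controlled by Hoeffding's inequality applied to sums of independent edge indicators, and then to combine these with the expectation-level estimates from Lemma \ref{expv-YBDB} and the $|\widehat E|$-concentration from Lemma \ref{E-conc}. First I would rewrite the numerator of the random modularity using the identity $2\sum_{u<v\in B}\widehat d(u)\widehat d(v)=S(B)^2-\sum_{u\in B}\widehat d(u)^2$, where $S(B):=\sum_{u\in B}\widehat d(u)$, giving
$$\widehat Q(B) \;=\; \frac{Y(B)}{n\binom{|B|}{2}^{1/2}} \;-\; \frac{S(B)^2-\sum_{u\in B}\widehat d(u)^2}{4\,n\binom{|B|}{2}^{1/2}\,|\widehat E|}.$$
Using $\sum_{u\in B}\widehat d(u)^2\leq n S(B)\leq |B|n^2$ together with Lemma \ref{E-conc} (so that $|\widehat E|\geq n^2\kappa/4$ on a high-probability event), the diagonal correction $\sum \widehat d(u)^2$ contributes a deterministic error of order $1/(\kappa n)$, which is absorbed into the additive $8/(\kappa n)$ term in the target bound. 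Similarly, Lemma \ref{expv-YBDB} parts (1)--(2) show that $\cQ(B)$ agrees, up to an $O(1/n)$ term, with the analogous ``mean-field'' expression $\{\expv Y(B)\,-\,(\expv S(B))^2/(2n^2\kappa)\}/(n\binom{|B|}{2}^{1/2})$, so the problem reduces to concentrating the random triple $(Y(B),S(B),|\widehat E|)$ around its expectation.

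Next I would concentrate the three quantities separately. By Hoeffding applied to $\binom{|B|}{2}$ independent $\{0,1\}$ edge indicators,
$$\mathbb{P}(|Y(B)-\expv Y(B)|>t_Y)\;\leq\;2\exp\bigl(-2t_Y^2/\tbinom{|B|}{2}\bigr)\;\leq\;2\exp(-4t_Y^2/n^2).$$
The sum $S(B)=2Y(B)+\sum_{u\in B,v\in B^c}\widehat X(u,v)$ is a sum of $\binom{|B|}{2}+|B||B^c|$ independent bounded terms of ranges at most $2$ and $1$ respectively, so Hoeffding yields
$$\mathbb{P}(|S(B)-\expv S(B)|>t_S)\;\leq\;2\exp\bigl(-t_S^2/(3|B|n)\bigr).$$
Finally Lemma \ref{E-conc} gives $\mathbb{P}(|2|\widehat E|-n^2\kappa|>t_E+4n)\leq 2\exp(-t_E^2/n^2)$. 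A union bound over these three events produces a joint ``good event'' of failure probability at most the sum of the three individual bounds.

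On the good event I would then linearize the nonlinear piece $S(B)^2/(4|\widehat E|)$ around its ideal value $(\expv S(B))^2/(2n^2\kappa)$ via the algebraic identity
$$\frac{S^2}{4|\widehat E|} \;-\; \frac{\bar S^2}{2n^2\kappa}
\;=\;\frac{(S-\bar S)(S+\bar S)}{4|\widehat E|}\;+\;\frac{\bar S^{\,2}\bigl(n^2\kappa-2|\widehat E|\bigr)}{4|\widehat E|\,n^2\kappa}.$$
Using $S,\bar S\lesssim |B|n$, $|\widehat E|\gtrsim n^2\kappa/4$, and the deviation bounds $|S-\bar S|\leq t_S$, $|2|\widehat E|-n^2\kappa|\leq t_E+4n$, each of the two terms on the right is of order $|B|t_S/(\kappa n^2)$ and $|B|^2 (t_E+n)/(\kappa^2 n^4)$ respectively. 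Dividing through by $n\binom{|B|}{2}^{1/2}\asymp n|B|$ shows that the contribution to $|\widehat Q(B)-\cQ(B)|$ from these terms is of order $t_S/(\kappa n^2)+t_E/(\kappa^2 n^3)+1/(\kappa n)$, while the $Y$-piece contributes $t_Y/(n|B|)\lesssim t_Y/(\alpha n^2)$.

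Finally I would choose $t_Y,t_S,t_E$ all of the same order as $t$ (with explicit constants depending on $\kappa$ and $\alpha$) so that each contribution is at most $t/n^2$ and, after taking the union bound, the total failure probability is bounded by $4\exp(-\kappa^2\alpha t^2/(16n^2))$. The residual $8/(\kappa n)$ constant in the target accommodates both the $\sum \widehat d(u)^2$ correction and the $O(1/n)$ discrepancies between expectations and idealized values identified in Lemma \ref{expv-YBDB}. The main obstacle is the nonlinear dependence on $S(B)$ and $|\widehat E|$: one must verify that the linearization error is genuinely lower order on the good event and that the various constants aggregate cleanly into the stated rate $\kappa^2\alpha/16$; the factor $\alpha$ enters through the Hoeffding denominator for $Y(B)$ and the lower bound $|B|\geq\alpha n$, while $\kappa^2$ enters through the $1/|\widehat E|^{\,2}$ in the linearized expression.
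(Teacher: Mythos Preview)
Your decomposition-plus-Hoeffding route is a legitimate alternative but differs substantially from the paper's argument. The paper introduces a \emph{proxy} score $\widetilde Q(B)$ in which the random normalizer $2|\widehat E|$ is replaced by the deterministic $n^2\kappa$, and then proceeds in three steps: (i) bound $|\widehat Q-\widetilde Q|\leq |2|\widehat E|-n^2\kappa|/(2n^2\kappa)$ via a direct algebraic inequality on $\sum_{u<v}\widehat d(u)\widehat d(v)$ and apply Lemma~\ref{E-conc}; (ii) concentrate $\widetilde Q(B)$ about its mean using \emph{McDiarmid's bounded-difference inequality}, treating $\widetilde Q$ as a function of the $O(n|B|)$ edge indicators incident to $B$, each with bounded difference $O\bigl(1/(\kappa n\binom{|B|}{2}^{1/2})\bigr)$; (iii) show $|\expv\widetilde Q(B)-\cQ(B)|=O(1/(\kappa n))$ via Lemma~\ref{expv-YBDB}. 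The exponent $\kappa^2\alpha/16$ falls out of step~(ii) directly, since $\sum_j\Delta(j)^2=O\bigl(1/(\kappa^2 n|B|)\bigr)\leq O\bigl(1/(\kappa^2\alpha n^2)\bigr)$; the prefactor $4$ comes from the two-event union bound over steps~(i) and~(ii).

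Your approach would prove a bound of the same general shape, but your accounting of the constants is too optimistic. In your second linearization term $\bar S^{\,2}(n^2\kappa-2|\widehat E|)/(4|\widehat E|\,n^2\kappa)$, the effective denominator is of order $n^4\kappa^2$; after dividing by $n\binom{|B|}{2}^{1/2}$ and using $\bar S\leq |B|n$, the contribution to $|\widehat Q-\cQ|$ is $O\bigl(|B|\,t_E/(\kappa^2 n^3)\bigr)$, so matching the $t/n^2$ budget forces $t_E=O(\kappa^2 t\cdot n/|B|)$, and Lemma~\ref{E-conc} then yields an exponent $O(\kappa^4 t^2/n^2)$ rather than $O(\kappa^2 t^2/n^2)$. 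Likewise, your claim that $\alpha$ ``enters through the Hoeffding denominator for $Y(B)$'' does not produce a first-power $\alpha$: choosing $t_Y\asymp t\binom{|B|}{2}^{1/2}/n$ to meet the budget gives Hoeffding exponent $O(t^2/n^2)$ with no $\alpha$ at all, and none of your three pieces supplies one. You would therefore obtain something like $6\exp\bigl(-c\kappa^4 t^2/n^2\bigr)$ rather than $4\exp\bigl(-\kappa^2\alpha t^2/(16n^2)\bigr)$. This is perfectly adequate for every downstream use in the paper (only the form $C\exp(-c(\kappa,\alpha)t^2/n^2)$ matters), but it does not reproduce the lemma as stated. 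The paper's proxy-plus-McDiarmid argument avoids these losses by handling the quadratic $\sum_{u<v}\widehat d(u)\widehat d(v)$ in one stroke rather than through separate concentration of $S(B)$ and $|\widehat E|$.
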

\begin{proof}
With notation laid out in Section \ref{subsec:score}, define
\begin{equation}\label{proxy-score}
\widetilde Q(B) := n^{-1}{|B|\choose 2}^{-1/2}\left(Y(B) - \widetilde \mu(B)\right),
\end{equation}
where 
\begin{equation}\label{proxy-score-mu}
\widetilde \mu(B) := \frac{\underset{u,v \in B: u<v}\sum\widehat{d}(u)\widehat{d}(v)}{n^2 \kappa}.
\end{equation}
We will prove the inequality in three steps: \emph{Step 1}: bounding $\big|\widehat Q(B) - \widetilde{Q}(B)\big|$ in probability; \emph{Step 2}: deriving a concentration inequality for  $\widetilde{Q}(B)$; and \emph{Step 3}: showing that $\left|\expv\left(\widetilde{Q}(B)\right)-\cQ(B)\right|$ is eventually bounded by a constant.




\noindent \emph{Step 1.} As $\sum_{u,v\in B;u<v}\widehat d(u)\widehat d(v)\leq\sum_{u\in B}\widehat d(u)^2$, we have
\begin{align*}
\sum_{u,v \in B; u<v}\widehat{d}(u)\widehat{d}(v) & \leq \sqrt{\sum_{u,v \in B; u<v}\widehat{d}(u)\widehat{d}(v)}\sqrt{\sum_{u \in B}\widehat{d}(u)^2} \leq n{|B|\choose 2}^{1/2}|\widehat E|.
\end{align*}
Therefore,
\begin{align*}
\Big|{\widehat{Q}(B)} - {\widetilde{Q}(B)}\Big| & = n^{-1}{|B|\choose 2}^{-1/2}\Big|\dfrac{(2|\widehat E| - n^2\kappa) \sum_{u,v \in B; u<v}\widehat{d}(u)\widehat{d}(v)}{2|\widehat E|n^2\kappa }\Big| \leq \dfrac{\big|2|\widehat E| - n^2\kappa\big|}{2n^2 \kappa}.\addtocounter{equation}{1}\tag{\theequation}\label{step1-align1}
	\end{align*}
Combining the inequality in \eqref{step1-align1} with Lemma \ref{E-conc}, for any $t>0$,
\begin{align*}
\mathbb{P}\left(\Big|\widehat{Q}(B) - \widetilde{Q}(B) \Big| > \frac{t}{2n^2} + \frac{2}{\kappa n}\right) & \leq \mathbb{P}\left(\Big|2|\widehat E| - n^2\kappa\Big| > \kappa t + 4n\right)\nonumber \\[1em]
& \leq 2\exp\left(-\dfrac{\kappa^2 t^2}{n^2}\right). \addtocounter{equation}{1}\tag{\theequation}\label{step1-final}
\end{align*}




\noindent \emph{Step 2.} This step relies on McDiarmid's concentration inequality. Recall from Section \ref{sec:null} that $\widehat X(u,v)$ denotes the indicator of edge presence between nodes $u$ and $v$. Note that node pairs have a natural, unique ordering along the upper-diagonal of the adjacency matrix. Define $ord\{u,v\} = 2(u - 1) + (v - 1)$, for $\{u,v\}\in[n]^2$ with $u<v$ (e.g.\ $ord\{1,2\} = 1$, $ord\{1,3\} = 2$, etc.). For all $n>1$ and $i\leq n(n-1)/2$, define $\widehat Z(i) := \widehat X(u,v)$ such that $ord\{u,v\} = i$. If $ord\{u,v\} = i$, we call $\{u,v\}$ the ``$i$-th ordered node pair". Define the set
\(
\cI(B):=\{i:\text{the $i$-th ordered node pair has at least one node in $B$}\}
\)
and let $\widehat\cZ(B):=\{\widehat Z(i):i\in I(B)\}$. Note that the proxy score $\widetilde Q(B)$ is a function $f(z_1,z_2,\ldots)$ of the indicators $\widehat \cZ(B)$. 

Consider a \emph{fixed} indicator set $\cZ(B)$. For each $j\in I(B)$, define $\cZ^j(B) := \{Z^j(i):i\in I(B)\}$ with
\begin{equation}\label{ZjB}
Z^j(B) := \begin{cases}Z^j(i) = 1 - Z(i),&i = j\\Z^j(i) = Z(i), i\ne j\end{cases}
\end{equation}
To apply McDiarmid's inequality, we must bound $\Delta(j):=|f(\cZ(B)) - f(\cZ^{j}(B))|$ uniformly over $j\in \cI(B)$. Fix $j\in \cI(B)$ and let $\{u',v'\}$ be the $j$-th ordered edge. Without loss of generality, we assume $Z(j) = 1$. Since $f(\cZ(B)) = Q(B)$, $f(\cZ(B))$ has a representation in terms of $Y(B)$ and $\widetilde\mu(B)$. We let $Y^j(B)$ and $\widetilde\mu^j(B)$ correspond to $f(\cZ(B)^j)$. Notice that
\begin{equation}\label{step2-1}
n{|B|\choose 2}^{1/2}\Delta(j) = \left|Y(B) - Y^j(B) - \big(\widetilde \mu(B) - \widetilde \mu^j(B)\big)\right|.
\end{equation}
We bound the right hand side of equation \eqref{step2-1} in two cases: (i) $u',v'\in B$, and (ii) $u'\notin B, v'\in B$. In case (i), $Y(B) - Y^j(B) = 1$, and
\begin{align*}
\widetilde \mu(B) - \widetilde \mu^i(B) & = \dfrac{\displaystyle\sum_{u,v \in B; u\ne v}{d}(u){d}(v) - {d}^j(u){d}^j(v)}{n^2\kappa} = \dfrac{{d}(u'){d}(v')-{d}^j(u'){d}^j(v')}{n^2\kappa}\\[1em] & = \dfrac{{d}(u'){d}(v')-({d}(u') - 1)({d}(v') - 1)}{n^2\kappa} = \dfrac{{d}(u') + {d}(v')-1}{n^2\kappa},
\end{align*}
which is bounded in the interval $(0, 1)$ for large enough $n$. Thus in case (i), $\Delta(j) \leq 2{|B|\choose 2}^{-1/2}$ by the triangle inequality, for large enough $n$. In case (ii), $Y(B) - Y'(B) = 0$, and
\begin{align*}
\widetilde \mu(B) - \widetilde\mu^j(B) & = \dfrac{\displaystyle\sum_{u,v \in B; u\ne v}{d}(u){d}(v) - {d}^j(u){d}^j(v)}{n^2\kappa} = \dfrac{\displaystyle\sum_{u \in B; u\ne v'}{d}(u)\left({d}(v') - {d}^j(v')\right)}{n^2\kappa}\\[1em]
& = \dfrac{\displaystyle\sum_{u \in B; u\ne v'}{d}(u)}{n^2\kappa} \leq \dfrac{n|B|}{n^2\kappa}\leq \kappa^{-1}.
\end{align*}
Hence due to equation \eqref{step2-1}, we have for sufficiently large $n$ that
\begin{align*}
\Delta(j) & \leq n^{-1}{|B|\choose 2}^{-1/2}\cdot\max\{2, \kappa^{-1}\}\leq n^{-1}{|B|\choose 2}^{-1/2}\cdot2\cdot\kappa^{-1} \addtocounter{equation}{1}\tag{\theequation}\label{step2-align1}
\end{align*}
for all $j\in \cI(B)$, as $\kappa\leq1$. Since $|\cI(B)|= {|B|\choose 2} + |B||B^C|\leq n|B|$, McDiarmid's bounded-difference inequality implies that for sufficiently large $n$,
\begin{align*}
\mathbb{P}\left(\Big|\widetilde{Q}(B) - \expv\left(\widetilde{Q}(B)\right)\Big| > \frac{t}{n} \right) & = 2\exp\left(\frac{-t^2}{n|B|\Delta(j)}\right)\leq 2 \exp\left(-\kappa^2 \dfrac{n^2{|B|\choose 2} t^2}{4n^3|B|}\right) \\[1em]
& \leq 2 \exp\left(-\kappa^2 \dfrac{(|B| - 1) t^2}{8n}\right)\leq 2 \exp\left(-\kappa^2 \dfrac{\alpha t^2}{16}\right)
\end{align*}
for any $t>0$. Replacing $t$ by $t/n$ gives
\begin{equation}\label{step2-final}
\mathbb{P}\left(\Big|\widetilde{Q}(B) - \expv\left(\widetilde{Q}(B)\right)\Big| > \frac{t}{n^2} \right) \leq2\exp\left(-\kappa^2\dfrac{\alpha t^2}{16n^2}\right).
\end{equation}




\noindent \emph{Step 3.} Turning our attention to $\expv(\widetilde Q(B))$, recall that $n{|B|\choose 2}^{1/2}\widetilde Q(B) = Y(B) - \widetilde\mu(B)$ and that $\widetilde \mu(B) := \sum_{u,v\in B;u<v}\widehat d(u)\widehat d(v)/(n^2\kappa)$. As in previous lemmas, we will shorthand the quantities $s(B), \rho(B)$, and $v(B)$, by $s, \rho$, and $v$ (respectively). Note that
\begin{align*}
\expv\left(2\cdot\displaystyle\sum_{u,v\in B;u<v}\widehat d(u)\widehat d(v)\right) & = \expv\left\{\left(\sum_{u\in B}\widehat d(u)\right)^2 - \displaystyle\sum_{u\in B}\widehat d^2(u)\right\}\\[1em]
& = \text{Var}\left(\sum_{u\in B}\widehat d(u)\right) + \expv\left(\sum_{u\in B}\widehat d(u)\right)^2 - \displaystyle\sum_{u\in B}\expv\left(\widehat d^2(u)\right).\addtocounter{equation}{1}\tag{\theequation}\label{step3-1}
\end{align*}
Part 3 of Lemma \ref{expv-YBDB} gives $\text{Var}\left(\sum_{u\in B}\widehat d(u)\right)\leq 9sn^2$. Furthermore, for $u\in C_i$ we have
\begin{align*}
\expv\left(\widehat d^2(u)\right) & = \text{Var}\left(\widehat d(u)\right) + \expv\left(\widehat d(u)\right)^2\\[1em]
& = n\pi^TV(\cdot, i) - V(i, i) + n^2\left(\pi^TP(\cdot, i) - P(i,i)\right)^2,
\end{align*}
and therefore $\sum_{u\in B}\expv\left(\widehat d^2(u)\right) \leq 2sn^3$. Finally, Part 2 of Lemma \ref{expv-YBDB} gives\\
\noindent $\left|\expv\left(\sum_{u\in B}\widehat d(u)\right) - |B|nv^TP\pi\right| \leq |B|$. By expansion, this implies there exists a constant $a$ with $|a|<3$ such that for large enough $n$, $\expv\left(\sum_{u\in B}\widehat d(u)\right)^2 = s^2n^4(v^tP\pi)^2 + as^2n^3$. Therefore overall, line \eqref{step3-1} implies there exists a constant $b$ with $|b|<6$ such that for large enough $n$, $\expv\left(2\cdot\sum_{u,v\in B;u<v}\widehat d(u)\widehat d(v)\right) = s^2n^4(v^TP\pi)^2 + bsn^3$. Therefore, using the definition of $\widetilde \mu(B)$, 
\begin{align*}
\expv\left(\widetilde \mu(B)\right) & = s^2n^4\dfrac{(v^tP\pi)^2 + b(sn)^{-1}}{2n^2\kappa} 
 = {sn\choose 2}\left(1 + \frac{1}{sn - 1}\right)\left(\dfrac{(v^tP\pi)^2}{\kappa} + \frac{b}{\kappa sn}\right)\\[1em]
& = {sn\choose 2}\left(\dfrac{(v^tP\pi)^2}{\kappa} + \frac{b}{\kappa sn} + \frac{(v^tP\pi)^2}{\kappa(sn - 1)} + \frac{b}{\kappa sn(sn - 1)}\right)\\[1em]
& = {sn\choose 2}\left(\dfrac{(v^tP\pi)^2}{\kappa} + \frac{1}{\kappa sn}\left(b + \frac{sn(v^tP\pi)^2 + b}{sn - 1}\right)\right) \\[1em]
& = {sn\choose 2}\left(\dfrac{(v^tP\pi)^2}{\kappa} + \frac{c_1}{\kappa sn}\right)\addtocounter{equation}{1}\tag{\theequation}\label{eq:last1-2}
\end{align*}
for a constant $c_1$ with $|c_1|<8$, for large enough $n$. Now, part 1 of Lemma \ref{expv-YBDB} gives that $\left|\expv(Y(B)) - {|B|\choose 2}v^tPv\right| \leq 3|B|/2$ for large enough $n$. Thus there exists a constant $c_2$ with $|c_2|<3$ such that for large enough $n$, $\expv(Y(B)) = {|B|\choose 2}\left(v^tPv + \frac{c_2}{sn}\right)$. Thus
\begin{align*}
n\expv\left(\widetilde Q(B)\right) & = {|B|\choose 2}^{-1/2}(\expv (Y(B)) - \expv (\tilde\mu(B))) = \frac{sn}{\sqrt{2}}\left(v^tPv - \frac{(v^tP\pi)^2}{\kappa} + \frac{1}{sn}\left(c_1/\kappa + c_2\right)\right) \\[1em]
& *\left(\sqrt{1 - \frac{1}{sn}}\right) = \frac{sn}{\sqrt{2}}\left(v^tPv - \frac{(v^tP\pi)^2}{\kappa}\right) + \frac{c_1/\kappa + c_2}{\sqrt{2}}\left(\sqrt{1 - \frac{1}{sn}}\right).
\end{align*}
Thus there exists a constant $c$ with $|c|\leq |c_1|/\kappa + |c_2| < 8/\kappa + 3$ such that for large enough $n$, $\expv(\widetilde Q(B)) = \cQ(B) + c/n$. This completes Step 3.\\




\noindent {\bf Completion of the proof}: We now recall the results of the three steps:
\begin{enumerate}[(i)]
\item For large enough $n$, we have $\mathbb{P}\left(\Big|\widehat{Q}(B) - \widetilde{Q}(B) \Big| > \frac{t}{2n^2} + \frac{2}{\kappa n}\right)\leq 2\exp\left(-\dfrac{\kappa^2 t^2}{n^2}\right)$
\item $\mathbb{P}\left(\Big|\widetilde{Q}(B) - \expv\left(\widetilde{Q}(B)\right)\Big| > \frac{t}{n^2} \right) \leq2\exp\left(-\kappa^2\dfrac{\alpha t^2}{16n^2}\right)$
\item There exists $c$ with $|c|<8/\kappa + 3$ such that for large enough $n$, $\expv\left(\widehat Q(B)\right) = q(B) + c/n$ 
\end{enumerate}
Noting that $\alpha/16 < 1$, we apply a union bound to the results of steps (i) and (ii):
\begin{equation}
\mathbb{P}\left(\Big|\widehat{Q}(B) - \expv\left(\widetilde{Q}(B)\right) \Big| > \frac{t}{n^2} + \frac{2}{\kappa n}\right)\leq 4\exp\left(-\dfrac{\kappa^2\alpha t^2}{16n^2}\right)
\end{equation}
Applying the inequality $|x-a|\geq |x|-|a|$ with (iii) and some algebra gives the result.\end{proof}

\section{Competing Methods}
In Sections \ref{sec:application} and \ref{sec:simulations}, we compare and contrast the performance of Multilayer Extraction with the following methods:

\begin{enumerate}
	\item[]{\it Spectral clustering \citep{newman2006finding}}: an iterative algorithm based on the spectral properties of the modularity matrix of an observed network. In the first step, the modularity matrix of the observed network is calculated and its leading eigenvector is identified. The graph is divided into two disjoint communities so that each vertex is assigned according to its sign in the leading eigenvector. Next, the modularity matrix is calculated for both of the subgraphs corresponding to the previous division. If the modularity of the partition increases, these communities are once again divided into two disjoint communities, and the procedure is repeated in this fashion until the modularity no longer increases. For the desired igraph object \texttt{graph}, the call for this in \texttt{R} was: 
	{\small 
	\begin{verbatim} cluster_leading_eigen(graph, steps = -1, weights = NULL, start = NULL, 
		options = arpack_defaults, callback = NULL, extra = NULL, env = parent.frame()) \end{verbatim}
		}
	\item[]{\it Label Propagation \citep{raghavan2007near}}: an iterative algorithm based on propagation through the network. At the first step, all vertices are randomly assigned a community label. Sequentially, the algorithm chooses a single vertex and updates the labels of its neighborhood to be the majority label of the neighborhood. The algorithm continues updating labels in this way until no updates are available. For the desired igraph object \texttt{graph}, the call for this in \texttt{R} was: 
	{\small 
	\begin{verbatim} cluster_label_prop(graph, weights = E(graph)$weight, initial = NULL, 
		fixed = NULL) \end{verbatim}
		}
	\item[]{\it Fast and greedy \citep{clauset2004finding}}: an iterative and greedy algorithm that seeks a partition of vertices with maximum modularity. The algorithm is an agglomerative approach that is a modification of the Kernighan-Lin algorithm commonly used in the identification of community structure in network. For the desired igraph object \texttt{graph}, the call for this in \texttt{R} was: 
	{\small
	\begin{verbatim} cluster_fast_greedy(graph, merges = TRUE, weights = E(graph)$weight, 
		modularity = TRUE, membership = TRUE, weights = NULL)
  \end{verbatim}
  }
	\item[]{\it Walktrap \citep{pons2005computing}}: an agglomerative algorithm that seeks a partition of vertices that minimizes the total length of a random walk within each community. At the first stage, each vertex of the network is placed in its own community. At each subsequent stage, the two closest communities (according to walk distance) are merged. This process is continued until all vertices have been merged into one large community, and a community dendrogram is formed. The partition with the smallest random walk distance is chosen as the final partition.   
	{\small \begin{verbatim} cluster_walktrap(graph, weights = E(graph)$weight, steps = 4,
  merges = TRUE, modularity = TRUE, membership = TRUE)
  	\end{verbatim}
	}
	\item[]{\it GenLouvain \citep{jutla2011generalized}}: a multilayer generalization of the iterative GenLouvain algorithm. This algorithm seeks a partition of the vertices and layers that maximizes the multilayer modularity of the network, as described in \citep{mucha2010community}. We use the MATLAB implementation from \citep{jutla2011generalized} of GenLouvain with resolution parameter set to 1, and argument \texttt{randmove = moverandw}. 
	\item[]{\it Infomap \citep{de2014identifying}}: a multilayer generalization of the Infomap algorithm from \citet{rosvall2008maps}. This algorithm seeks to identify a partition of the vertices and layers that minimize the generalized map equation, which measures the description length of a random walk on the partition. We use the C++ multiplex implementation of Infomap provided at \url{http://www.mapequation.org/code.html}. In implementation, we set the arguments of the function to permit overlapping communities, and set the algorithm to ignore self-loops. 
\end{enumerate}

For the first four methods, we use the default settings from the {\it igraph} package version 0.7.1 set in \texttt{R}.  
\section{Extraction Simulations}\label{sec:appendix-sims}
\subsection{Simulation}
We now investigate several intrinsic properties of Multilayer Extraction by applying the method to multilayer networks with several types of community structure, including I) disjoint, II) overlapping, III) persistent, IV) non-persistent, and V) hierarchical structure. Figure \ref{fig:ext.sims} illustrates six multilayer networks that we analyze for this purpose. Each simulated network contains 1000 nodes and 90 layers. Embedded communities have inner connection probability 0.15; whereas, the remaining vertices independently connected to all other vertices with probability 0.05.

\begin{figure}[ht]
\centering
\includegraphics[width = 0.75\textwidth, trim = 1.5cm 8cm 1.5cm 1.5cm, clip = TRUE]{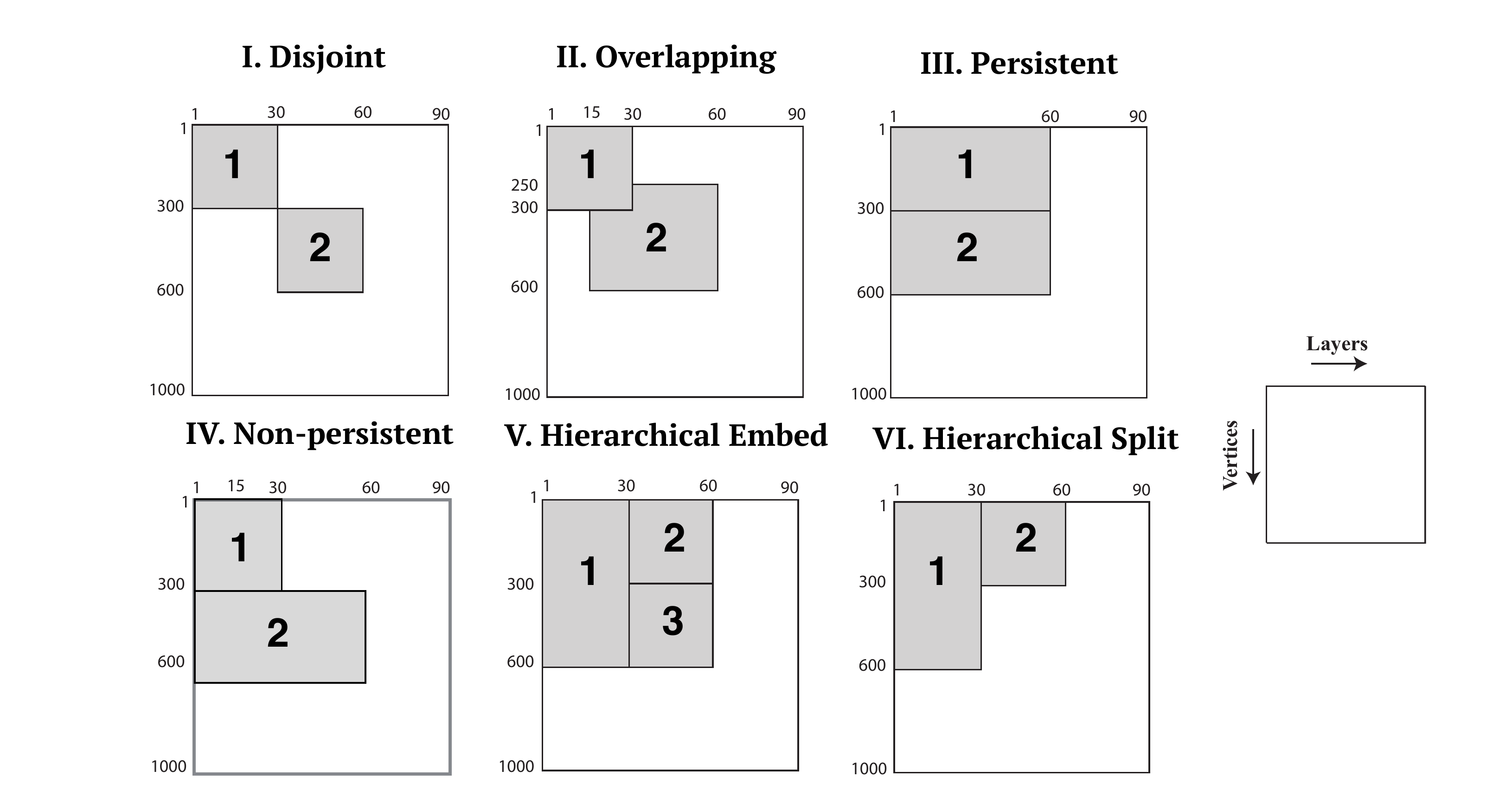}
\caption[Extraction Simulation Test Bed]{\small Simulation test bed for extraction procedures. Each graphic displays a multilayer network on 1000 nodes and 90 layers. In each plot, shaded rectangles are placed over the nodes (rows) and layers (columns) that are included in a multilayer community. Communities are labeled by number. Vertices within the same community are randomly connected with probability 0.15 while all other vertices have connection probability 0.05 to vertices in their respective layer.}
\label{fig:ext.sims}
\end{figure}

\subsection{Results}
In the disjoint, overlapping, persistent, and non-persistent networks (I, II, III, and IV, respectively), Multilayer Extraction identifies communities that perfectly match the true embedded communities. On the other hand, in the hierarchical community setting, Multilayer Extraction is unable to identify the full set of communities. In example V, Multilayer Extraction does not identify community 1, and in example VI Extraction identifies a community with vertices 1 - 300 across layers 1 - 60, which combines community 1 and community 2. 

Together, these results suggest two properties of the Multilayer Extraction procedure. First, the method can efficiently identify disjoint and overlapping community structure in multilayer networks with heterogeneous community structure. Second, Multilayer Extraction tends to disregard communities with a large number of vertices (e.g. communities that include over half of the vertices in a network). The inverse relationship between the score and the number of vertices in a community may provide some justification as to why this is the case. In networks with large communities, one can in principle modify the score by introducing a reward for large collections. We plan to pursue this further in future research.

\singlespacing
\bibliographystyle{Chicago}

\end{document}